\newcounter{factcounter}
\newenvironment{fact}[1][]
  {\refstepcounter{factcounter}\par\noindent\textbf{Fact \thefactcounter: }\itshape #1}
\newcommand{\Sample}{\textsc{Collect-Sample}\xspace}
\newcommand{\Diagnosis}{\textsc{Diagnosis}\xspace}
\newcommand{\Rundown}{\textsc{RunDown}\xspace}
\newcommand{\poly}{\texttt{poly}\xspace}
\newcommand{\uar}{u.a.r.\xspace}
\newcommand{\CollCost}{\mathcal{C}\xspace}
\newcommand{\currWin}{w_{\mbox{\tiny cur}}\xspace}
\newcommand{\twoActiveSlots}{\mathcal{S}\xspace}
\newcommand{\equalCost}{P\xspace}
\newcommand{\pmax}{p_{\mbox{\tiny max}}\xspace}
\newcommand{\psec}{p_{\mbox{\tiny sec}}\xspace}
\newcommand{\deltaMin}{\Delta_{\min}\xspace}
\newcommand{\con}{\texttt{Con}(t)\xspace}
\newcommand{\defn}[1]{\textbf{\emph{#1}}}
\newcommand{\whp}{w.h.p.\xspace}
\newcommand{\goodrange}{\textsc{Good}\xspace}
\newcommand{\toohigh}{\textsc{High}\xspace}
\newcommand{\toolow}{\textsc{Low}\xspace}
\newcommand{\tootoolow}{\textsc{Rock-Bottom}\xspace}
\newcommand{\uncertainA}{\textsc{Uncertain-Low}\xspace}
\newcommand{\uncertainB}{\textsc{Uncertain-High}\xspace}
\newcommand{\probP}{\text{I\kern-0.15em P}}
\newcommand{\beb}{BEB\xspace}
\newcommand{\STB}{\textsc{Sawtooth-Backoff}\xspace}
\newcommand{\stb}{STB\xspace}
\newcommand{\myVar}{x\xspace}
\newcommand{\LC}{LC\xspace}
\newcommand{\probParam}{n\xspace}
\newcommand{\sampleParam}{w_{\mbox{\tiny{cur}}}\xspace}
\newcommand{\mainAlgorithm}{\textsc{Collision-Aversion Backoff}\xspace}
\newcommand{\malg}{\textsc{CAB}\xspace}
\newif\ifcomments
\providecommand{\keywords}[1]
{
  \small	
  \textbf{\textit{Keywords---}} #1
}
\begin{document}

\title{Softening the Impact of Collisions in Contention Resolution\vspace{-8pt}}

\author{Umesh Biswas\inst{1} \and
Trisha Chakraborty\inst{2}  \and
Maxwell Young\inst{1}}
\authorrunning{U. Biswas, T. Chakraborty, and M. Young}
%
\institute{Department of Computer Science and Engineering, Mississippi State University, MS 39762, USA\\
\email{ucb5@msstate.edu, myoung@cse.msstate.edu}\\
 \and
Amazon Web Services, MN 55401, USA  \\
\email{trichakr@amazon.com}
}


\maketitle              %

\vspace{-5pt}

\begin{abstract}
Contention resolution addresses the problem of coordinating access to a shared communication channel. Time is discretized into synchronized slots, and a packet  can be sent in any slot. If no packet is sent, then the slot is empty; if a single packet is sent, then it is successful; and when multiple packets are sent at the same time, a collision occurs, resulting in the failure of the corresponding transmissions. In each slot, every packet receives ternary channel feedback indicating whether the current slot is empty, successful, or a collision. 

~~~~~Much of the prior work on contention resolution has focused on optimizing the makespan, which is the number of slots required for all packets to succeed. However, in many modern systems, collisions are also costly in terms of the time they incur, which existing contention-resolution algorithms do not address. 

~~~~~In this paper, we design and analyze a randomized algorithm, \mainAlgorithm (\malg), that optimizes both the makespan and the collision cost. We consider the static case where an unknown $n\geq 2$ packets are initially present in the system, and each collision has a known cost $\CollCost$, where $1 \leq \CollCost \leq n^{\kappa}$ for a known constant $\kappa\geq 0$. With error probability polynomially small in $n$, \malg guarantees that all packets succeed with makespan and a total expected collision cost of $\tilde{O}(n\sqrt{\CollCost})$. We give a lower bound for the class of fair algorithms: where, in each slot, every packet executing the fair algorithm sends with the same probability (and the probability may change from slot to slot). Our lower bound is asymptotically tight up to a $\poly(\log n)$-factor for sufficiently large $\CollCost$.
\end{abstract}
\vspace{-10pt}

\noindent\keywords{Distributed computing, contention resolution, collision cost.}

\section{Introduction}\label{sec:introduction}

Contention resolution addresses the fundamental challenge of coordinating access by devices to a shared resource, which is typically modeled as a multiple access channel. Introduced with the development of ALOHA in the early 1970s \cite{abramson1970aloha}, the problem of contention resolution remains relevant in WiFi networks \cite{bianchi2000performance}, cellular networks \cite{ramaiyan2014information}, and shared-memory systems \cite{ben2017analyzing,dwork1997contention}.

Our problem is described as follows. There are {\boldmath{$n$}} $\geq 2$ devices present at the start of the system, each with a packet to send on the shared channel, and $n$ is {\it a priori} unknown.  For ease of presentation, we adopt a slight abuse of language by referring to packets sending themselves (rather than referring to devices that send packets). Time proceeds in discrete, synchronized \defn{slots}, and each slot has size that can accommodate the transmission of a packet. For any fixed slot, the channel provides ternary feedback, allowing each packet to learn whether $0$ packets were sent, $1$ packet was sent (a \defn{success}), or $2+$ packets were sent (a \defn{collision}). Sending on the channel is performed in a distributed fashion; that is, {\it a priori} there is no central scheduler or coordinator.  Traditionally, the contention-resolution problem focuses on minimizing the number of slots until all $n$ packets succeed, which is referred to as the \defn{makespan}.

However, in many modern systems, collisions also have a significant impact on performance. In the \defn{standard cost model}, each collision can be viewed as a wasted slot that increases the makespan by $1$, since no packet can succeed. Yet, in many settings, a collision wastes {\it more} than a single slot, and we denote this cost by {\boldmath{$\CollCost$}}.\footnote{\scriptsize We may also view $\CollCost$ as an upper bound on the cost of a collision if there is some variance.} This cost can vary widely across different systems. In intermittently-connected mobile wireless networks \cite{zhang2006routing}, failed communication due to a collision may result in the sender having to wait until the intended receiver is again within transmission range. For WiFi networks,  a collision may consume time proportional to the packet size \cite{anderton:windowed}. In shared memory systems,  concurrent access to the same memory location can result in delay proportional to the number of contending processes \cite{ben2017analyzing}. Thus, makespan gives only part of the performance picture, since  collisions may add to the time until all packets succeed. 

In this work, we account for the cost of collisions, in addition to the makespan. As we discuss later (see Section~\ref{sec:technical-overview}), existing contention-resolution algorithms do not perform well in this setting. For instance, randomized binary exponential backoff (\beb) \cite{MetcalfeBo76}, arguably the most popular of all contention-resolution algorithms, incurs a collision cost of $\Omega(n\,\CollCost)$, where $\CollCost$ is the cost of a single collision. New ideas are needed to achieve better performance in this model.

\subsection{Model and Notation}\label{sec:model}

 Our work addresses the case where an $n\geq 2$ number of packets are active at the start of the system. There is no known upper bound on $n$ {\it a priori}, and packets do not have identifiers. A packet is \defn{active} if it is executing a protocol; otherwise, the packet has terminated. Each packet must be successfully sent in a distributed fashion (i.e., there is no scheduler or central authority) on a multiple access channel, which we now describe.  \medskip


\noindent{\bf Multiple Access Channel.} Time is divided into disjoint \defn{slots}, where each slot is sized to accommodate a single packet.  In any slot, a packet may send itself or it may listen. For any fixed slot, if no packet is sent, then the slot is \defn{empty}; if a single packet is sent, then it is \defn{successful}; and if two or more packets are sent, then there it is a \defn{collision} and none of these packets is successful. A packet that transmits in a slot immediately learns of success or failure; if it succeeds, the packet terminates immediately. Any packet that is listening to a slot, learns which of these three cases occurred; this is the standard \defn{ternary feedback model}. Likewise, for any slot, a packet that is sending can also determine the channel feedback; either the packet succeeded (which it learns in the slot) or it failed (from which the packet infers a collision occurred). Packets cannot send messages to each other, and no other feedback is provided by the channel. 
\medskip

\noindent{\bf Metrics.} A well-known measure of performance in prior work is the \defn{makespan}, which is the number of slots required for all $n$ packets to succeed. An equivalent metric in the static case is, \defn{throughput}, which is defined as the time for $n$ successes divided by the makespan. 

In our model, each collision incurs a known cost of {\boldmath{$\CollCost$}}, where $1 \leq \CollCost \leq n^{\kappa}$, for a known constant {\boldmath{$\kappa$}}$ \geq 0$.\footnote{\scriptsize Note that knowing $\CollCost$ and $\kappa$ implies only a lower (not upper) bound on $n$.} Given a contention-resolution algorithm, the pertinent costs are: (i) the \defn{collision cost}, which is the number of collisions multiplied by $\CollCost$, and (ii) the makespan. Often we refer to ``cost'', by which we mean the maximum of (i) and (ii), unless specified otherwise.





\medskip
\noindent{\bf Notation.} Throughout this manuscript,  {\boldmath{$\lg(\cdot)$}} refers to logarithm base $2$, and {\boldmath{$\ln(\cdot)$}} refers to the natural logarithm. We use {\boldmath{$\log^{y}(x)$}} to denote $(\log(x))^y$, and we use \defn{\poly}{\boldmath{$(x)$}} to denote $x^{y}$ for some constant $y\geq 1$. The notation {\boldmath{$\tilde{O}$}} denotes the omission of a $\poly(\log n)$ factor. We say an event occurs \defn{with high probability (\whp) in} {\boldmath{$n$}} (or simply ``with high probability'') if it occurs with probability at least $1-O(1/n^{c})$, for any tunable constant $c\geq 1$.

For any random variable $X$, we say that \whp  the expected value $E[X] \leq x$ if, when tallying the events in the expected value calculation, the sum total probability of those events where $X>x$ is $O(1/n^{c})$ for any tunable constant $c\geq 1$.  In our analysis, the  expected collision cost holds so long as the subroutine \Diagnosis (described later) gives correct feedback to the packets, and this occurs with high probability.

\subsection{Our Results}\label{sec:related-work}

We design and analyze a new algorithm, \defn{\mainAlgorithm (\malg)}, that solves the static contention resolution problem with high-probability bounds on makespan and the expected collision cost. The following theorems provide our formal result.

\begin{theorem}\label{thm:main-theorem}
    With high probability in $n$, \mainAlgorithm guarantees that all $n$ packets succeed with a makespan of $\tilde{O}(n\sqrt{\CollCost})$, and an expected collision cost of $\tilde{O}(n\sqrt{\CollCost})$. 
\end{theorem}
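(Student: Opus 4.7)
The plan is to partition \malg's execution into $O(\log n)$ phases, each of which cuts the number of active packets by a constant factor (or finishes them off). Within a single phase, the target is a per-packet send probability $p = \Theta(1/(k\sqrt{\CollCost}))$, where $k$ is the current active count. This is the sweet spot: the per-slot success probability $kp(1-p)^{k-1}$ is $\Theta(1/\sqrt{\CollCost})$, while the per-slot collision probability is only $\Theta(1/\CollCost)$, so over a window of $L$ slots we expect $\Theta(L/\sqrt{\CollCost})$ successes but only $\Theta(L)$ collision cost. Choosing $L = \Theta(k\sqrt{\CollCost}\,\poly(\log n))$ delivers enough successes (\whp, by Chernoff on the per-slot success indicators) to at least halve $k$, at per-phase makespan $\tilde{O}(k\sqrt{\CollCost})$ and per-phase expected collision cost $\tilde{O}(k\sqrt{\CollCost})$.

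Concretely, I would proceed as follows. First, at the start of each phase invoke \Diagnosis to obtain a constant-factor estimate $\hat{k}$ of the currently active count, and show that \Diagnosis terminates with the correct estimate \whp while itself using only $\tilde{O}(\sqrt{\CollCost})$ slots and $\tilde{O}(\sqrt{\CollCost})$ expected collision cost---an overhead dominated by the subsequent main loop of the phase. Second, plug $\hat{k}$ into $p$ and run the phase for $L$ slots; Chernoff concentration on the per-slot success indicators gives the ``$k$ at least halves'' statement \whp, while linearity of expectation applied to the per-slot collision indicators yields the $\tilde{O}(k\sqrt{\CollCost})$ expected collision cost. Third, geometrically sum across the phases: with the active-count sequence $k_0 = n,\, k_1 \leq n/2,\, k_2 \leq n/4,\ldots$, both metrics telescope to $\tilde{O}(n\sqrt{\CollCost})$, and a union bound over the $O(\log n)$ phases lifts the per-phase high-probability guarantees to the overall algorithm.

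The main obstacle is the interaction between \Diagnosis's accuracy and the collision-cost bound. A misestimate that underestimates $k$ by a super-constant factor makes the per-packet send probability too large; the per-slot collision probability then scales as $\Theta((k/\hat{k})^2/\CollCost)$, so a single bad phase could inflate the collision cost by up to a factor of $\CollCost$, wiping out the entire budget. The proof therefore hinges on showing that on every invocation \Diagnosis returns an estimate within a constant factor of $k$ with probability $1 - 1/\poly(n)$, an event that is robust to the $O(\log n)$ union bound. The expected-collision-cost statement is then interpreted, in the convention established in the model section, conditionally on these \Diagnosis events succeeding, so the low-probability failure mode does not enter the expectation calculation and the $\tilde{O}(n\sqrt{\CollCost})$ bound goes through.
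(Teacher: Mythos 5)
Your proposal does not analyze \malg as it is actually structured, and the differences are where the real difficulties lie. \malg does not re-invoke \Diagnosis at the start of each halving phase to re-estimate the current active count $k$: it performs a single tuning stage (repeated \Sample/\Diagnosis calls that double or halve $\currWin$ starting from the initial window $\CollCost$) until $\currWin = \Theta(n\sqrt{\CollCost})$, and then runs \Rundown, which halves the window with \emph{no further estimation}, relying on the invariant that each window whp finishes at least half of the remaining packets (Lemma~\ref{l:single-rundown}) so that $m/\currWin$ stays controlled. Since the theorem is a statement about \malg specifically, a proof built around per-phase re-estimation establishes a bound for a different algorithm, not the claimed one. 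Moreover, your key intermediate claim---that each estimation step costs only $\tilde{O}(\sqrt{\CollCost})$ in expected collision cost---is false in the regimes the algorithm must pass through: when the probing window is at or below $n$ (unavoidable whenever $\CollCost < n$, since the search starts at window $\CollCost$), essentially every sampled slot is a collision, and a single sample of $\Theta(\sqrt{\CollCost}\log n)$ slots can cost $\Theta(\CollCost^{3/2}\log n)$. The paper's bound of $\tilde{O}(n\sqrt{\CollCost})$ for the entire sampling stage (Lemmas~\ref{lm-tl-exp-collisions} and \ref{lem:correctness-cost-collectsample-diagnosis}) rests on the deliberate choice of initial window $\CollCost$ and a case split $\CollCost \le 2n$ versus $\CollCost > 2n$; without that accounting, your "overhead dominated by the main loop" step has no support.

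A second concrete gap is the endgame. Your halving argument is a Chernoff bound on per-slot success indicators, which only yields "at least half succeed whp" while the number of active packets is $\Omega(\log n)$ (exactly the hypothesis $m \ge \gamma \ln n$ in Lemma~\ref{l:single-rundown}). Once $k = O(\log n)$, the per-phase concentration fails and your union bound over $O(\log n)$ phases does not finish the last packets. The paper handles this explicitly with the second loop of \Rundown: $\Theta(\ln w_0) = \Theta(\ln n)$ additional windows of size $w_0$ in which each surviving packet independently succeeds with constant probability per window, driving the failure probability to $1/\poly(n)$ at an extra $O(n\sqrt{\CollCost}\ln n)$ cost. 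Your proposal needs an analogous finishing mechanism (and its makespan/collision accounting) before the overall $\tilde{O}(n\sqrt{\CollCost})$ bounds can be concluded. By contrast, the paper's proof of the theorem is simply the composition of Lemma~\ref{lem:correctness-cost-collectsample-diagnosis} (tuning stage) with Lemmas~\ref{lem:correctness-latency-rundown} and \ref{lem:expected-cost-collision-rundown} (\Rundown), with no per-phase re-estimation anywhere.
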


How does this compare with prior results?  For starters, consider \STB (\stb), which \whp  has an asymptotically-optimal makespan of $\Theta(n)$, but incurs $\Omega(n)$ collisions. The expected cost for  \malg is superior to \stb when $\CollCost$ is at least polylogarithmic in $n$. In  Section \ref{sec:discussion-upper}, we elaborate on how our result fits with previous work on  contention resolution.

In a well-known lower bound in the standard cost model, Willard \cite{willard:loglog} defines and argues about \defn{fair} algorithms. These are algorithms where, in a fixed slot, every active packet sends with the same probability (and the probability may change from slot to slot). Our lower bound applies to fair algorithms as follows.
 
\begin{theorem}\label{thm:main-lower-bound}
Any fair algorithm that \whp has a makespan of $\tilde{O}(n\sqrt{\CollCost})$ has \whp an expected collision cost of $\tilde{\Omega}(n\sqrt{\CollCost})$ for $\CollCost = \Omega(n^2)$.  
\end{theorem}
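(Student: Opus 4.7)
\medskip
\noindent\textbf{Proof plan for Theorem~\ref{thm:main-lower-bound}.}

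Fix a fair algorithm meeting the hypothesized makespan bound, and write $M = \tilde{O}(n\sqrt{\CollCost})$ for its \whp makespan upper bound. For each slot $t$, let $p_t$ be the common per-packet send probability, $n_t$ the number of still-active packets entering slot $t$, and $\lambda_t = n_t p_t$ (all random variables, adapted to the history $\mathcal{F}_{t-1}$). The first step is to establish two per-slot inequalities that hold pointwise on $\mathcal{F}_{t-1}$: a \emph{success cap}, $S_t := n_t p_t (1-p_t)^{n_t-1} \leq \min(\lambda_t, 1)$, via the union bound $1-(1-p_t)^{n_t} \leq \lambda_t$ together with $S_t \leq 1$; and a \emph{collision floor}, $C_t \geq c_0 \min(\lambda_t^2, 1)$ for a universal constant $c_0 > 0$, obtained from $C_t \geq \binom{n_t}{2} p_t^2 (1-p_t)^{n_t-2} = \Theta(\lambda_t^2)$ in the regime $\lambda_t \leq 1$, and from $C_t \geq 1 - e^{-\lambda_t}(1+\lambda_t) = \Omega(1)$ in the regime $\lambda_t > 1$.

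The second step is a counting argument. Let $\mathcal{E}$ be the event that all $n$ packets succeed within the first $M$ slots; by hypothesis $\Pr[\mathcal{E}] = 1 - o(1)$. Because the number of successes in the first $M$ slots is at most $n$ in every realization and equals $n$ on $\mathcal{E}$, the tower property gives
\[
\sum_{t=1}^{M} E[S_t] \;=\; E\bigl[\#\text{successes in } M \text{ slots}\bigr] \;\geq\; n\cdot \Pr[\mathcal{E}] \;\geq\; n/2.
\]
Since $x \mapsto \min(x,1)^2 = \min(x^2,1)$ is convex, Jensen delivers $E[\min(\lambda_t^2,1)] \geq E[\min(\lambda_t,1)]^2 \geq E[S_t]^2$ for every $t$. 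Cauchy--Schwarz across the $M$ slots then yields
\[
\sum_{t=1}^{M} E[C_t] \;\geq\; c_0 \sum_{t=1}^{M} E[S_t]^2 \;\geq\; \frac{c_0}{M}\Bigl(\sum_{t=1}^{M} E[S_t]\Bigr)^2 \;\geq\; \frac{c_0\, n^2}{4M}.
\]
Multiplying by $\CollCost$ gives an expected collision cost of at least $\Omega(\CollCost\, n^2/M) = \tilde{\Omega}(n\sqrt{\CollCost})$, as claimed.

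The main obstacle I anticipate is ensuring every inequality truly holds pointwise in $\mathcal{F}_{t-1}$ before expectations are taken, since both $p_t$ and $n_t$ depend on earlier channel feedback; the collision floor, in particular, must be re-verified in the corner cases $n_t = 2$ and $p_t$ bounded away from $0$, where the large-$n_t$ Poisson-type approximations used above break down. A secondary subtlety is that the makespan hypothesis is only \whp: conditioning on $\mathcal{E}$ would distort the joint law of $(n_t, p_t)$, so I instead keep the full probability measure and rely on $E[\#\text{successes in } M \text{ slots}] \geq n\cdot\Pr[\mathcal{E}]$, which suffices because $\Pr[\mathcal{E}] = 1 - o(1)$. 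Finally, the hypothesis $\CollCost = \Omega(n^2)$ is what pushes the lower bound $n\sqrt{\CollCost}$ beyond the trivial $\Omega(n)$ coming from the makespan alone, and it places us in the regime where the per-slot ``sweet spot'' $\lambda_t \asymp 1/\sqrt{\CollCost}$ is genuinely $o(1)$, which is what lets the small-$\lambda$ estimate in the collision floor do real work.
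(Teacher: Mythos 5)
Your route is genuinely different from the paper's and, modulo the repairs below, it works --- in fact it is in some ways stronger. The paper proves a general bound $\tilde{\Omega}(\min\{\CollCost, \deltaMin n\sqrt{\CollCost}\})$ by (i) lower-bounding the total contention over slots with at least two active packets by $\Omega(n)$ (a heavy/light ``lifetime contribution'' argument, plus a separate lemma showing the lone last packet contributes only $O(\log n)$ contention), (ii) converting contention to collision probability via $\con^2 - \sum_i p_i(t)^2 \geq \Delta(t)\con^2/2$, and (iii) treating any slot with $\con > 2$ by a separate case that only yields $\Omega(\CollCost)$; the hypothesis $\CollCost = \Omega(n^2)$ exists precisely so that this $\min$ with $\CollCost$ does not weaken the bound. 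You instead count successes directly ($\sum_t E[S_t] \geq n\Pr[\mathcal{E}]$), exploit fairness to get the per-slot cap/floor pair $S_t \leq \min(\lambda_t,1)$ and $C_t \geq c_0\min(\lambda_t^2,1)$, and finish with one Cauchy--Schwarz --- no case split on high contention, no contention-sum lemma. The paper's machinery buys applicability to non-fair algorithms (parameterized by $\deltaMin$); your argument buys brevity for the fair case and, notably, appears not to need $\CollCost = \Omega(n^2)$ at all, since absorbing the high-contention slots into $\min(\lambda_t^2,1)\leq 1$ removes the source of the $\min\{\CollCost,\cdot\}$ term.

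Two things need fixing before this is airtight. First, the collision floor $C_t \geq c_0\min(\lambda_t^2,1)$ is simply false in slots with $n_t \leq 1$: the last surviving packet can send with $p_t = 1$, giving $\lambda_t = 1$ and $\min(\lambda_t^2,1)=1$ while $C_t = 0$, and such slots do contribute to your success count. You flag the corner cases $n_t = 2$ and large $p_t$, but not this one, and it is the analogue of what forces the paper to prove its dedicated last-packet lemma. In your framework the repair is cheap --- in every realization at most one success ever occurs in slots with a single active packet, so restrict every sum to slots with $n_t \geq 2$ and replace $n/2$ by $n/2 - 1$ --- but it must be said. Second, the justification ``$\min(x,1)^2 = \min(x^2,1)$ is convex'' is wrong as stated (that function is not convex on $[0,\infty)$); the inequality you want, $E[\min(\lambda_t^2,1)] \geq E[\min(\lambda_t,1)]^2$, is nevertheless valid because $\min(\lambda_t^2,1) = (\min(\lambda_t,1))^2$ pointwise and then it is just $E[Y^2]\geq (E[Y])^2$ for $Y=\min(\lambda_t,1)$. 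Relatedly, the floor in the regime $\lambda_t>1$ needs the stray $e^{p_t}$ factor in $n_tp_t(1-p_t)^{n_t-1}\leq \lambda_t e^{-\lambda_t}e^{p_t}$ handled (or a direct two-packet argument when $p_t\geq 1/2$), which you anticipate; with those details supplied the proof goes through.
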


\noindent Since \malg is fair and guarantees \whp a makespan of $\tilde{O}(n\sqrt{\CollCost})$, our upper bound is asymptotically tight up to a $\poly(\log n)$-factor for sufficiently large $\CollCost$. A more general form of our lower bound is given in Section \ref{thm:general-lower-bound-main}, along with additional discussion.  

\subsection{Why Care About Collision Costs?}\label{sec:discussion-upper}

Here, we discuss the potential value of our result and exploring beyond the  standard cost model for contention resolution. \medskip

\noindent{\bf A Simple Answer.} Prior contention-resolution algorithms that optimize for makespan suffer from many collisions. For example, arguably the most famous backoff algorithm is \beb (despite its sub-optimal makespan \cite{bender:adversarial}), which has $\Omega(n)$ collisions. Similarly, \stb has asymptotically-optimal makespan, but also suffers $\Omega(n)$ collisions \cite{anderton:windowed}. Thus, these algorithms have cost $\tilde{\Theta}(n + n\CollCost)$. In contrast, the expected completion time for \malg is $\tilde{O}(n\sqrt{\CollCost})$, which is superior whenever $\CollCost=\poly(\log n)$.

In the extreme, even a hypothetical algorithm that suffers (say, \whp) a single collision would do poorly by comparison if collisions are sufficiently costly. Specifically, such an algorithm pays $\CollCost$, which is asymptotically worse than the expected completion time of \malg for $\CollCost = \tilde{\omega}(n^2)$.


Perhaps the above is reasonable motivation (we think it is), but why not also consider the cost of successes?\footnote{\scriptsize Empty slots do not seem to warrant such consideration since, by definition, nothing is happening in such slots, and so a per-cost of $1$ makes sense.} Below, we argue that reducing collision cost remains important, and that adding a non-unit cost for successes does not seem interesting from a theory perspective.\medskip

\noindent{\bf Connecting to the Standard Cost Model.} Let us {\it temporarily} consider the implications of a cost model where a success and a collision each have cost $\equalCost \geq 1$. In the standard cost model $\equalCost=1$.  In WiFi networks, both costs are also roughly equal, where $\equalCost\gg 1$  \cite{anderton:windowed}.


In this context, reconsider \stb's performance. Since, $n$ packets must succeed, a cost of at least $n\equalCost$ is unavoidable, and the additional $\Theta(n\equalCost)$ cost from collisions becomes (asymptotically) unimportant. (Indeed, any algorithm must pay at least  $n\equalCost$ for successes; given this unavoidable cost, it does not seem  interesting from a theory perspective to consider a model where a success costs far more than a collision, since the cost from successes would dominate.)  Likewise,  \malg must also pay at least $n\equalCost$, which is (asymptotically) no better than \stb.


Does reducing collision costs matter here? Yes, and the value of our result is best viewed via throughput (recall Section \ref{sec:model}). \stb has $c n$ collisions, for some constant $c>0$, and (ignoring empty slots) its throughput is less than $n\equalCost/(n\equalCost+(cn)\equalCost) < 1/(1+c) < 1 - 1/c$; that is, the throughput is bounded away from $1$ by a constant amount. Doing a similar calculation for \malg, there is $n\equalCost$ cost for $n$ packets to succeed, plus $O(n\sqrt{\equalCost})$, which accounts for collisions and empty slots. Thus, the expected throughput is $E[n\equalCost/T]$ where $T$ is the completion time. Since, $E[1/T] \geq 1/E[T]$, we have  $E[n\equalCost/T] \geq n\equalCost/(n\equalCost + \tilde{O}(n\sqrt{\equalCost}))$ $ >  1-\tilde{O}(1/\sqrt{\equalCost})$. 

Thus, for the standard cost model, with $\equalCost=1$, our result is (unsurprisingly) not an improvement. However, as $\equalCost$ grows, our result provides better throughput in expectation, approaching $1$. There are settings where we expect $\equalCost$ to be large, such as: wireless networks where $\equalCost$ can be commensurate with packet size,  routing in mobile networks where communication failure increases latency \cite{zhang2006routing}, and shared memory systems where concurrent access to the same memory location by multiple processes results in delay \cite{ben2017analyzing}.

\section{Related Work}\label{sec:related}

There is a large body of work on the static case for contention resolution, where $n$ packets arrive together and initiate the
contention resolution algorithm; this is often referred to as the \defn{static} or \defn{batched-arrival} setting.  Bender et al.~\cite{bender:adversarial} analyze the makespan for \beb, as well as other backoff algorithms; surprisingly, they show that \beb has sub-optimal makespan. 



An optimal backoff algorithm is \stb ~\cite{gereb1992efficient,Greenberg1985randomized}. Since we borrow from \stb to create one of our subroutines (discussed further in Section \ref{sec:our-approach}), so we describe it here. Informally, \stb  works by executing over a doubly-nested loop, where the outer loop sets the current window size $w$ to be double the one used in the preceding outer loop. Additionally, for each such window, the inner loop executes over a \defn{run} of $\lg w$ windows of decreasing size: $w, w/2, w/4, ..., 1$. For each  window, every packet chooses a slot uniformly at random (\uar) to send in.

The static setting has drawn attention from other angles.  Bender et al.~\cite{bender:heterogeneous} examines a model where packets have different sizes under the binary feedback model.  Anderton et al.~\cite{anderton:windowed} provide experimental results and argue that packet size should be incorporated into the definition of makespan, since collisions tend to cost time proportional to packet size.

To the best of our knowledge, our work is the first to propose an algorithm for minimizing collision cost, and so it makes sense to start with the static setting. However, we note the dynamic setting has been addressed (under the standard cost model), where packet arrival times are governed by a stochastic process (see the survey by Chlebus \cite{chlebus2001randomized}). Another direction that the research community has pursued is the application of adversarial queueing theory  (AQT) \cite{borodin2001adversarial} to contention resolution; for examples, see \cite{chlebus2009maximum,chlebus2007stability,anantharamu2010deterministic,aldawsari:adversarial}. Under the AQT setting, packet arrival times are dictated by an adversary, but typically subject to constraints on the rate of packet injection and how closely in time (how bursty) these arrivals may be scheduled. Even more challenging, there is a growing literature addressing the case where packet arrival times are set by an unconstrained adversary; see \cite{DeMarco:2017:ASC:3087801.3087831,bender2020contention,fineman:contention2}.  Recent work in this area addresses additional challenges facing modern networks, such as energy efficiency \cite{jurdzinski:energy,bender:contention}, malicious disruption of the shared channel \cite{bender:how,awerbuch:jamming,ogierman:competitive,DBLP:conf/podc/ChenJZ21,anantharamu2019packet,anantharamu2011medium}, and the ability to have a limited number of concurrent transmission succeed \cite{bender:coded}.

\section{Technical Overview for Upper Bound}\label{sec:technical-overview}

\noindent Due to space constraints, our proofs for the upper bound are provided in Section \ref{analysis-section} of our appendix. However, we do reference some well-known  inequalities based on the Taylor series that are omitted here, but  can be found in Section \ref{sec:prelims}. 

Here, we present an overview of our analysis, with the aim of imparting some intuition the design choices behind \malg, as well as  highlighting the novelty of our approach.  To this end, we first consider two natural---but ultimately flawed---ideas for solving our problem.\medskip 

\noindent{\bf Straw Man 1.} An immediate question is: {\it Why can we not use a prior contention-resolution algorithm to solve our problem?} For example, in the static setting, a well-known backoff algorithm, such \beb guarantees \whp that all packets succeed with makespan $\Theta(n \log n)$ \cite{bender:adversarial}. Under BEB, the packets execute over a sequence of disjoint windows, where window $i\geq 0$ consists of $2^i$ contiguous slots. Every active packet sends in a slot chosen \uar from the current window. Unfortunately, a constant fraction of slots in each window $i\leq \lg(n) + O(1)$ will be collisions. Each collision imposes a cost of $\CollCost$ leading to a collision cost of $\Omega(n\CollCost)$. \qed
\medskip

\noindent Despite yielding a poor result, this straw man provides three useful insights. First, we cannot have packets start with a ``small'' window, since this leads to many collisions. Many backoff algorithms start with a small window, and will not yield good performance for this same reason.

Second,  we should seek to better (asymptotically) balance the costs of makespan and collisions. Under BEB, these costs are highly unbalanced, being $\Theta(n\log n)$ and $\Omega(n\CollCost)$, respectively. We may trade off between makespan and collision cost; that is, we can make our windows larger, which increases our makespan, in order to dilute the probability of collisions. 

Third, a window of size $\Theta(n\sqrt{\CollCost})$ seems to align with these first two insights; that is, it appears to asymptotically balance makspan and collision cost. To understand the latter claim, suppose that in each slot,  every packet sends with probability $\Theta(1/(n\sqrt{\CollCost}))$. We can argue (informally) that, for any fixed slot, the probability of any two packets colliding is at least $\binom{n}{2}\Theta(1/n\sqrt{\CollCost})^2 = O(1/\CollCost)$. Thus, over $n\sqrt{\CollCost}$ slots in the window, the expected number of collisions is  $O( n\sqrt{\CollCost}/\CollCost) = O(n/\sqrt{\CollCost})$, and each collision has cost $\CollCost$, so the expected collision cost is $O(n\sqrt{\CollCost})$. Of course, this informal analysis falls short, since collisions may involve more than two packets, and we have not shown that all $n$ packets can succeed over this single window (they cannot). Yet, this insight offers us hope that we can outperform prior backoff algorithms  by achieving costs that are $o(n\CollCost)$.

How should we find a window of $\Theta(n\sqrt{\CollCost})$? Since $n$ is unknown {\it a priori}, we cannot simply instruct packets to start with that window size. It is also clear from the above discussion that we cannot grow the window to this size using prior backoff algorithms. This obstacle leads us to our second straw man.\medskip 

\noindent{\bf Straw Man 2.} Perhaps we can estimate $n$ and then start directly with a window of size $\Theta(n\sqrt{\CollCost})$. A well-known ``folklore'' algorithm for estimating $n$ is the following. In each slot $i\geq 0$, each packet sends  with probability $1/2^i$; otherwise, the packet listens. This algorithm, along with improvements to the quality of the estimation, is explored by Jurdzinski et al. \cite{jurdzinski:energy}, but we sketch why it works here.

Intuitively, when $i$ is small, say a constant, then the probability of an empty slot is very small: $(1-1/2^{\Theta(1)})^n \leq e^{-\Theta(n)}$ by Fact \ref{fact:taylor}(a). However, once $i=\lg(n)$, then the probability of an empty slot is a constant: $(1 - 1/2^{\lg(n)})^{n} = (1 -1/n)^{n} \geq e^{\Theta(1)}$ by Fact \ref{fact:taylor}(c). In other words, once a packet witnesses an empty slot, it can infer that its sending probability is $\Theta(1/n)$, and thus the reciprocal yields an estimate of $n$ to within a constant factor. 

At first glance, it may seem that we can estimate $n$, and then proceed to send packets in a window of size $\Theta(n\sqrt{\CollCost})$. Unfortunately, for each slot $i\leq \lg(n)$, this algorithm (and others like it) will likely  incur a collision, and thus the expected collision cost is $\Omega(\CollCost \log n)$.   To see why this is a problem, suppose that $\CollCost = n^4$. This implies that the expected collision cost is $\tilde{\Omega}(\CollCost) = \tilde{\Omega}(n^4)$, while the makespan is at best $O(n\sqrt{\CollCost}) = O(n\sqrt{n^4}) = O(n^3)$. That is, there is an $n$-factor discrepancy between the expected collision cost and makespan, which grows worse for larger values of $\CollCost$;  this approach is not trading off well between these two metrics.\qed
\medskip

Even though the above algorithm racks up a large collision cost, it highlights how channel feedback can provide useful hints. If packets are less aggressive with their sending, we can reduce collisions while still receiving feedback that lets us reach our desired window size of $\Theta(n\sqrt{\CollCost})$.  

\subsection{Our Approach} \label{sec:our-approach}




\noindent{\bf Motivating Sample Size.} Hoping to avoid the problems illustrated in our discussion above, all packets begin with an initial current window size that can be ``large''; the exact size we motivate momentarily.  Recall that we aim for packets to tune their sending probability to be $\Theta(1/(n\sqrt{\CollCost}))$, so we are aiming for a window size of $\Theta(n\sqrt{\CollCost})$. However,  this must be achieved with low expected collision cost. To do this, the packets execute over a \defn{sample} of {\boldmath{$s$}} slots. For each slot in the sample, every packet sends with probability $1$ divided by the current window size, and monitors the channel feedback.  

What does this channel feedback from the sample tell us? Suppose we have reached our desired window size of $\Theta(n\sqrt{\CollCost})$. Then, the window size may be large relative to $n$, and so we should expect empty slots to occur frequently, while successes {\it do} occur, but less often; the probability of success is approximately $\binom{n}{1}(1/(n\sqrt{\CollCost})) \approx 1/\sqrt{\CollCost}$. To correctly diagnose \whp (using a Chernoff bound) that $w = \Theta(n\sqrt{\CollCost})$, we rely on receiving $\Theta(\log n)$ successes. This dictates our sample size to be $s  = \Theta(\sqrt{\CollCost}\log(n\sqrt{\CollCost})) = \Theta(\sqrt{\CollCost}\log(n))$, since $\CollCost \leq n^{\kappa}$.

The details behind this intuition are given in Section \ref{sec:diagnosis-correctness} of our appendix, where we show that samples of size $\Theta(\sqrt{\CollCost}\log(n))$ are sufficient to make correct decisions that tune the window size to be $\Theta(n\sqrt{\CollCost})$. Furthermore, in Section \ref{sec:sample-cost} of our appendix, we show that the corresponding cost from this tuning is  $\tilde{O}(n\sqrt{\CollCost})$.
\medskip

\noindent{\bf Motivating Initial Window Size.} Suppose we are not at our desired window size. Then, the feedback from sampling tells us what to do. If our current window size is too  large (i.e., our sending probability is too low), then the number of successes will be ``small'', since most slots are empty, and we should decrease our window size.  Else, if the current window size is too small, (i.e., our sending probability is too high)  then the number of successes is again ``small'', since  most slots are collisions, and we should increase the window size. But wait, in the latter case, can we tolerate the cost of the resulting collisions? In the worse case, the entire sample might consist of collisions, resulting in a potentially large cost of $s\,\CollCost  = \Theta(\CollCost^{3/2}\log n)$.

We remedy this problem by setting our initial window size to be $\CollCost$. Then, the probability of a collision in a fixed slot is approximately $\binom{n}{2}(1/\CollCost^2)$ and so the expected cost is roughly $T = \binom{n}{2}(1/\CollCost^2) \,\CollCost = \Theta(n^2/\CollCost)$. Reasonaing informally, if $\CollCost > n$, then $T = O(n)$, and over the sample the expected cost is $O(n\sqrt{\CollCost}\log (n))$, as desired.
Else, if $\CollCost \leq n$, then even if our sample does consist entirely of collisions, the resulting cost is $\Theta(\CollCost \sqrt{\CollCost} \log n) = O(n\sqrt{\CollCost}\log n)$. This reasoning is formalized in Lemma \ref{l:okay-TH} of our appendix.

In light of the above, we emphasize that \malg does not avoid all collisions; in fact, many collisions may occur when the collision cost is ``small'' (i.e., $\CollCost \leq n$). However, as the collision cost grows ``larger'' (i.e., $\CollCost > n$), \malg expresses an increasing aversion to collisions by having packets tune their respective sending probabilities such that we expect $o(1)$ collisions per sample. The cost analysis for sampling is given in Section \ref{sec:sample-cost} of our appendix.

\medskip

\noindent{\bf Borrowing from Sawtooth Backoff.} By using sampling, ultimately a window size of $\Theta(n\sqrt{\CollCost})$ is reached. At this point, every active packet executes the analog of the final run of \stb (recall Section \ref{sec:related}). Specifically, each packet sends with probability $p=\Theta(1/(n\sqrt{\CollCost}))$, which we show is sufficient to have at least  half of the active packets succeed (see Lemma \ref{l:single-rundown}). Then, the window is halved, and the process repeats where the remaining packets send with probability $p/2$. This halving process continues until all packets succeed. By the sum of a geometric series, the  number of slots in this process is $O(n\sqrt{\CollCost})$. Informally, the expected collision cost per slot is roughly $\binom{n}{2}(1/(n\sqrt{\CollCost})^2)\CollCost = O(1)$, and thus  $O(n\sqrt{\CollCost})$ over all slots in the window. The analysis of cost is provided in Section \ref{sec:rundownfull} of our appendix.

\begin{figure}[t]
\includegraphics[width=\textwidth]{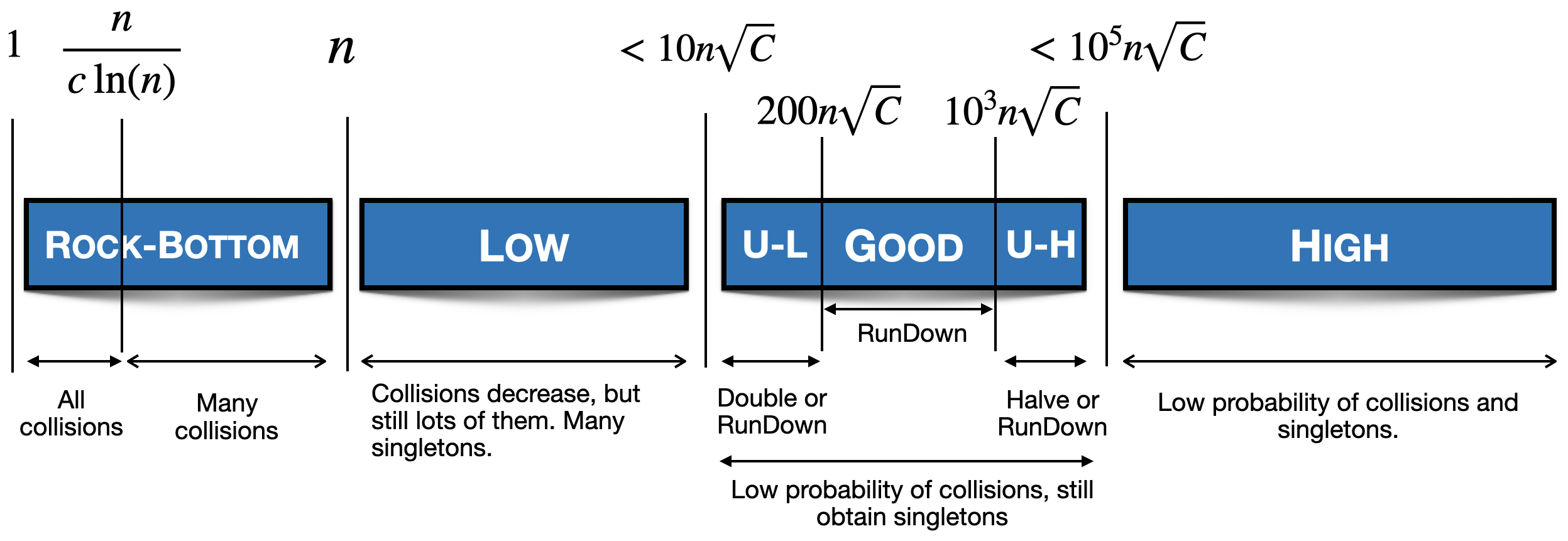}
\vspace{-15pt}\caption{Illustration of the ranges discussed in Section \ref{sec:algorithm}.}
\label{algo-range-visualise}
\end{figure}

\subsection{Our Algorithm and Overview of Analysis}\label{sec:algorithm}
\smallskip
This section describes and gives intuition for \malg, whose pseudocode is given in Figure~\ref{our-algoritm}. As stated in our model (Section \ref{sec:model}), when a packet succeeds it terminates immediate; for ease of presentation, we omit this from our pseudocode.

From a high-level view, each packet keeps track of a current window size, {\boldmath{$\currWin$}}; this size is critical, as it dictates the per-slot sending probability of each packet, which is $\Theta(1/\currWin)$. Each packet keeps track of its own notion of a current window. However, in each slot, since every packet is either listening or sending (and, thus, learning whether the slot contained a success or a collision), all packets receive the same channel feedback and adjust their respective current window identically (stated formally in Lemma~\ref{l:same-view}). Therefore, for ease of presentation, we refer only to a single current window.\medskip


\noindent{\bf Defining Ranges.} In order to describe how \malg works, we define the six size \defn{ranges} that the current window (or just ``window'' for short), $\currWin$, can belong to during an execution.

\renewcommand{\labelitemi}{$\bullet$}
\begin{itemize}
\setlength\itemsep{-0pt}
\item \tootoolow: $[1, n)$.
\item \toolow: $[n, 10 n\sqrt{\CollCost})$.
\item \uncertainA: $[10 n\sqrt{\CollCost},200n\sqrt{\CollCost}) $.
\item \goodrange: $[200n\sqrt{\CollCost}, 10^3 n\sqrt{\CollCost})$.
\item \uncertainB: $[10^3 n\sqrt{\CollCost},10^5 n\sqrt{\CollCost}) $.
\item \toohigh: $[ 10^5 n\sqrt{\CollCost}, \infty)$.
\end{itemize}

These ranges are depicted in Figure \ref{algo-range-visualise}. We note that the particular constants in these ranges are not special; they are chosen for ease of analysis. To gain intuition, we now describe the events we expect to witness in the \tootoolow, \toolow, \goodrange, and \toohigh ranges when \malg executes; we defer an in-depth discussion of \uncertainA and \uncertainB until the end of the next subsection.

The \tootoolow range captures ``tiny'' window sizes, starting from $1$ up to $n-1$. In this range, the probability of sending exceeds $1/n$, and we expect that most slots will be collisions. The next range is \toolow, and it includes window sizes from $n$ to just below $10n\sqrt{\CollCost}$. This range represents a moderate increase in window size, allowing for more successes than \tootoolow, although there can still be many collisions. As discussed in Section \ref{sec:our-approach}, in \tootoolow and in the bottom portion of \toolow, we can afford to have an single sample consist entirely of collisions, since $\CollCost = O(n)$ (see Lemma \ref{lm-tl-exp-collisions}). However, lingering in these ranges would ultimately lead to sub-optimal expected collision cost.

The \goodrange range spans from $200n\sqrt{\CollCost}$ to just below $10^3n\sqrt{\CollCost}$. In this range, the window sizes are sufficiently large that we expect $o(1)$ collisions, along with handful of successes; notably, less successes than \toolow. This turns out to be a ``good'' operating range for the algorithm, where the balance between collision costs and makespan is achieved, as discussed in Section \ref{sec:technical-overview} (i.e., our third insight after Straw Man 1).

The \toohigh range covers window sizes from $10^5n\sqrt{\CollCost}$ and above. In this range, the window sizes are very large, leading to a very low probability of collisions, which is good. However, there are also far-fewer successes compared to \goodrange, which means that lingering in this range would lead to a sub-optimal number of slots until all packets succeed.

{\renewcommand{\baselinestretch}{1}
\tiny

\begin{figure}[ht!]
{\fontsize{9.5}{12}\selectfont 
\begin{algorithm}[H]
\caption{\bf \mainAlgorithm}

\SetKwFunction{FMain}{CollectSample}
\SetKwFunction{FDiag}{Diagnosis}
\SetKwFunction{FRun}{RunDown}
\SetKwFunction{FRunFull}{RunDownFull}
\SetKwProg{Fn}{Function}{:}{}

\BlankLine

Initial window size $w_{cur} \leftarrow \CollCost$ \label{alg:first-line}\\

\medskip
\Repeat{true}{
 {\it \#successes, \#collisions}  $\leftarrow 0$\\
           \Sample{($\CollCost,\currWin$)}\\
           \Diagnosis{(\#successes, \#collisions, $\currWin$)} \\
        
        }  

\medskip

     \Fn{\Sample{($\CollCost,\currWin$)}}{\label{alg:fun-CollectSample}
    {\it \#successes} $\leftarrow 0$\\
{\it \#collision} $\leftarrow 0$\\
        \For{slot $i = 1$ \KwTo $d\sqrt{\CollCost}\ln(\currWin)$}{
            Send with probability $1/\currWin$; otherwise, listen\\
            \If{slot is a success}
                {\it \#successes++}
                \ElseIf{slot is a collision}
                {\it \#collisions++}          
        }
    }

\medskip    
    \Fn{\Diagnosis{(\#successes, \#collisions, $\currWin$)}}{\label{alg:fun-Diagnosis}    
\If{\#successes $>\frac{2d\ln(\currWin)}{10^5}$}{ \label{alg:first-if}
        \If{\#successes $\leq \frac{d\ln(\currWin)}{20e}$}{\label{alg:second-if}
            \If{\#collisions $\geq \frac{d\sqrt{\CollCost}\ln(\currWin)}{8e^2}$}{\label{alg:if-if-if}
                $\currWin \leftarrow 2\currWin$ \label{alg:double-the-first}\\
            }
            \Else{\label{alg:execute-rundownfull}
                 Execute \Rundown($\currWin$)\\
            }
        }
        \Else{ \label{alg:second-else} $\currWin \leftarrow 2\currWin$} \label{alg:double-blah}
    }
    \Else{\label{alg:first-else}
        \If{\#collisions $\geq \frac{d\sqrt{\CollCost}\ln(\currWin)}{8e^2}$}{\label{alg:third-else}
            $\currWin \leftarrow 2\currWin$\\
        }
        \Else{\label{alg:halve-window}
           $\currWin \leftarrow \currWin/2$\\
        }
    }
}

\medskip

\Fn{\Rundown{($\currWin$)}}{\label{alg:fun-RunDown}
    $w_0 \leftarrow \currWin$\\
    \While{$\currWin \geq 8\sqrt{\CollCost}\lg(w_0)$}{\label{alg:while}
        \For{each slot $j = 1$ \KwTo $\currWin$}{\label{alg:run-start}
        Send packet with probability $2/\currWin$ \label{alg:rsingle-window-rundown}
    }
    $\currWin\leftarrow \currWin/2$\label{alg:run-end}\\
  }
  
  \For{$i = 1$ \KwTo $c\ln(w_0)$ \label{alg:repeat-runs}}{
        \For{each slot $j = 1$ \KwTo $w_0$}{\label{alg:repeat-run-start}
        Send packet with probability $2/w_0$\label{alg:repeat-rsingle-window-rundown}
        }
    }
} 
\end{algorithm}
\caption{Pseudocode for \mainAlgorithm.}\label{our-algoritm}
} 
\end{figure}
}

\clearpage

The  \uncertainA and \uncertainB ranges span $[10n\sqrt{\CollCost}, 200n\sqrt{\CollCost})$ and $[10^3n$ $\sqrt{\CollCost},$ $10^5n\sqrt{\CollCost})$, respectively. Why do we have these ranges? They capture values of $\currWin$ where we cannot know \whp exactly what will happen, although we {\it do} prove that the algorithm is making ``progress''. We will elaborate on this further after describing our methods for sampling and interpreting channel feedback, which we address next.\medskip

\noindent{\bf Sampling and Diagnosing Feedback.}  \malg navigates these ranges by using two subroutines: {\bf \Sample} and {\bf \Diagnosis}. As motivated earlier  in Section \ref{sec:our-approach}, a sample is a contiguous set of $d\sqrt{\CollCost}\ln(\currWin)$ slots that are used by each packet to collected channel feedback. Specifically, under \Sample, in each slot of the sample, every packet sends independently with probability $1/\currWin$ and records the result if it is a success or a collision.

Based on the number of successes and the number of collisions, \Diagnosis   attempts to determine the range to which $\currWin$ belongs. We discuss these thresholds in order to give some intuition for how \Diagnosis is making this determination. To simplify the presentation, we omit discussion of \uncertainA and \uncertainB until the end of this section. The process by which \malg homes in on the range to which $\currWin$ belongs is depicted in Figure \ref{algo-regime-diff-box}.

We begin with the first if-statement on Line \ref{alg:first-if}. This line checks if the number of successes is at least a logarithmic amount (in $\currWin$); if so, this indicates that $\currWin$ cannot be in the \toohigh range, which would yield far fewer successes (see Lemma \ref{l:okay-TH}). Therefore, if meet the conditional on Line \ref{alg:first-if}, it must be the case that $\currWin$ belongs to \tootoolow, \toolow, or \goodrange.

Line \ref{alg:second-if} checks whether the number of successes falls below a ``large'' logarithmic amount; if not, then we are seeing a ``large'' number of successes that indicates $\currWin$ cannot be in \tootoolow or in \goodrange, and so must be in \toolow  (see Lemma~\ref{l:okay-TL}). Otherwise, the number of successes falls below this logarithmic amount, indicating  that $\currWin$ is in the \tootoolow or in \goodrange ranges.  To discern between these two cases, Line \ref{alg:if-if-if} checks if the number of collisions exceeds $\frac{d\sqrt{\CollCost}\ln(\currWin)}{8e^2}$, which indicates $\currWin\in\tootoolow$ (see Lemmas
\ref{l:all-collisions-if-smallW}, \ref{l:upper-too-too-low}, and \ref{l:okay-TTL}). Otherwise, $\currWin$ falls within the \goodrange and the job of \Diagnosis is complete --- at this point,  \Rundown  will be executed.

How can an algorithm with low expected collision cost make a decision based on whether $\Theta(\sqrt{\CollCost} \log(\currWin))$ collisions occur? Recall that in this case, we are deciding between $\currWin \in \tootoolow$ and $\currWin \in \goodrange$. We will only witness this number of collisions when $\currWin$ is in the \tootoolow range, where $\CollCost \leq n$ and so we can tolerate this cost. Otherwise, as described above, $\currWin \in \goodrange$ and the expected number of collisions is only $\tilde{O}(1/\sqrt{\CollCost})$ (see the discussion preceding Lemma \ref{l:sizable-window-upper-collisions} in our appendix).

\begin{figure}[t]
\begin{tcolorbox}[standard jigsaw, opacityback=0]
\includegraphics[width=0.98\textwidth]{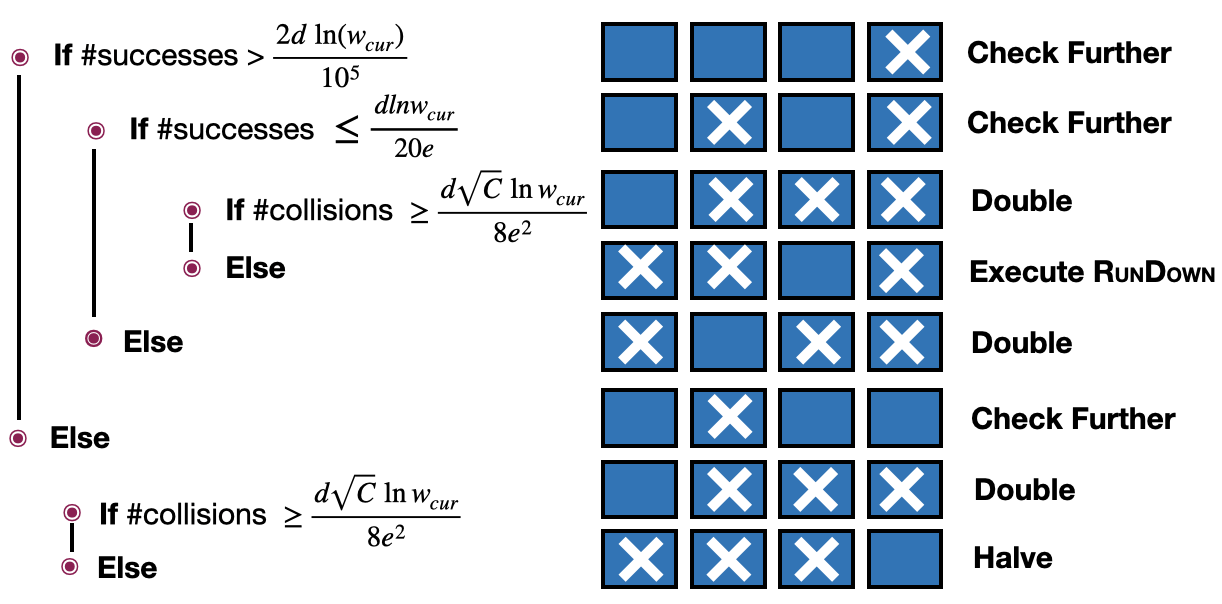}
\end{tcolorbox}
\vspace{-10pt}\caption{The elimination of ranges as discussed in Section \ref{sec:algorithm}.}
\label{algo-regime-diff-box}
\vspace{-15pt}
\end{figure}

The remaining portion of \Diagnosis starts with the else-statement on Line \ref{alg:first-else}. This line is executed only if the conditional on Line \ref{alg:first-if} is not met; that is, we have very few successes. This can occur only if $\currWin\in\tootoolow$ or $\currWin\in\toohigh$. The former case is (again) diagnosed be checking whether there are many collisions (Line \ref{alg:third-else}) and, if so, the window is doubled; otherwise, we are in the latter case and the window is halved. Again, we only have many collisions if  $\currWin$ is in the \tootoolow or lower portion of the \toolow ranges, where $\CollCost \leq n$ and so we can tolerate this cost.

What about the uncertain ranges? In \uncertainA, a sample will contain enough successes to satisfy Line \ref{alg:first-if}. However, it is unclear whether the number of successes will be ``large'' (if $\currWin$ is in the lower portion of \uncertainA) or ``moderate'' (if $\currWin$ is in the upper portion of \uncertainA). In the former case, we fail Line \ref{alg:second-if}, while in the latter case, we satisfy Line \ref{alg:second-if}. Despite this uncertainty, observe that we are nonetheless guaranteed to either double the window or execute \Rundown. Either of these outcomes count as progress, since either $\currWin$ moves closer to \goodrange by doubling, or \Rundown is executed on a window of size  $\Theta(n\sqrt{\CollCost})$ (in contrast, halving $\currWin$ would be counterproductive).

The intuition behind \uncertainB is similar. If $\currWin \in \toohigh$, then we can show that the number of successes fails Line \ref{alg:first-if} (see Lemma~\ref{l:okay-TH}) and that, ultimately, we halve $\currWin$. However, between \goodrange and \toohigh, this cannot be shown \whp; it may hold if $\currWin$ is close to the lower end of \toohigh, but not hold if $\currWin$ is close to the upper end of \goodrange. So, instead, we argue that we either halve $\currWin$ or execute \Rundown. Either of these actions counts as progress, since either $\currWin$ moves closer to \goodrange, or \Rundown is executed on a window of size $\Theta(n\sqrt{\CollCost})$. 



 Ultimately, we are able to show the following key lemma (in Section \ref{sec:sample-cost} of our appendix):\medskip

\noindent{\bf Lemma \ref{lem:correctness-cost-collectsample-diagnosis}.} {\it 
The executions of \Sample and \Diagnosis guarantee that \whp: (i) \Rundown is executed within  $O(\sqrt{\CollCost}\log^2(n) )$ slots,  and (ii) the total expected collision cost until that point is $O(n\sqrt{\CollCost} \log^2(n))$.
}
\medskip

\noindent{\bf All Remaining Active Packets Succeed.} The final subroutine,  {\bf \Rundown}, is executed once $\currWin=\Theta(n\sqrt{\CollCost})$, and it allows all active packets to succeed. In each slot of $\currWin$, every packet sends with probability $2/\currWin$. Otherwise, the current window size is halved  and the remaining active packets repeat this process. This continues until the window size reaches  $\Theta(\sqrt{\CollCost} \log(w_0)) =\Theta(\sqrt{\CollCost} \log(n))$, where the asymptotic equality holds by recalling that $\CollCost=\poly(n)$. Once this smallest window in the run is reached,  $O(\log n)$ active packets remain. To finish these packets, \malg  performs an additional $\Theta(\ln(n\sqrt{\CollCost}))=\Theta(\ln(n))$ windows of size $\Theta(n\sqrt{\CollCost})$,  where any remaining active packet sends in each slot with probability $\Theta(1/n\sqrt{\CollCost})$. 

We prove the following lemmas (in Section \ref{sec:rundownfull} of our appendix):
\medskip

\noindent{\bf Lemma \ref{lem:correctness-latency-rundown}.} {\it 
When \Rundown is executed, \whp all packets succeed within $O(n\sqrt{\CollCost}\ln(n))$ slots.}\medskip

\noindent{\bf Lemma \ref{lem:expected-cost-collision-rundown}.} {\it
W.h.p. the expected collision cost for executing \Rundown is $O(n\sqrt{\CollCost}\ln(n))$.
}\medskip

\noindent Finally, our upper bound in Theorem \ref{thm:main-theorem} follows directly from Lemmas \ref{lem:correctness-cost-collectsample-diagnosis}, \ref{lem:correctness-latency-rundown}, and \ref{lem:expected-cost-collision-rundown}.

\section{Technical Overview for Lower Bound}\label{sec:lower-bound}

In this section,  we provide an overview of our argument, which focuses on placing a lower bound on the expected collision cost. Here, we highlight the key lemmas in our argument, and our full proofs are provided in Section \ref{app:lowerbound} of our appendix.  


We consider only the set of slots, {\boldmath{$\twoActiveSlots$}}, in the execution of any algorithm where at least two active packets remain, since we cannot have a collision with a single active packet. While we do not always make this explicit, but going forward, any slot $t$ is assumed to implicitly belong to $\twoActiveSlots$.

Let {\boldmath{$p_i(t)$}} denote the probability that packet $i$ sends in slot $t$.  Note that, if a packet has terminated, its sending probability can be viewed as $0$. For any fixed slot $t$, the \defn{contention} in slot $t$ is  {\bf \texttt{Con}}{\boldmath{$(t)$}} $= \sum_{i=1}^n p_i(t)$; that is, the sum of the sending probabilities in that slot.
\medskip

\noindent{{\underline{\bf When Contention is High}}.}  We start by showing that any algorithm that has even a single slot $t$ with $\con > 2$ must have $\Omega(\CollCost)$ expected collision cost, which is one portion of our lower bound. This is done by deriving an expression for the probability of a collision in any fixed slot $t$ as a function of $\con$, which is useful  when  $\con$ is ``high'' (see Lemmas \ref{lem:prob-success}, \ref{lem:high-contention}, and \ref{lem:omega-collision} in Section \ref{app:lowerbound} of our appendix).\medskip

\noindent{{\underline{\bf When Contention is Low}}.} What about an algorithm where all slots of the execution have ``low'' contention (i.e., $\con \leq 2$)?  Our previous expression is hard to work with in this case. So, we derive a different expression for the probability of a collision in a fixed slot $t$, which can be useful for small values of $\con$:\medskip

\noindent{\bf Lemma \ref{lem:lower-prob-coll}.} {\it 
Fix any slot $t$ and let $\con\leq 2$. The probability of a collision in $t$ is at least $(\frac{1}{110})\left(\con^2 - \sum_i p_i(t)^2 \right)$.}\medskip

\noindent However, this expression requires some additional work to be deployed in our argument, as we now describe.  

For any fixed slot $t\in \twoActiveSlots$, let {\boldmath{$\pmax(t)$}} be the maximum sending probability of any packet in slot $t$. Similarly, let {\boldmath{$\psec(t)$}} $\leq \pmax(t)$  be the next-largest sending probability of any packet in slot $t$; note that $\psec(t)=\pmax(t)$, if more than one packet sends with probability $\pmax(t)$.  

Define {\boldmath{$\Delta(t)$}} $= \psec(t)/\pmax(t)$.  Our analysis  ignores any slot $t$ where $\con = 0$; that is, any slot where every packet has a sending probability of $0$. We give such slots to any algorithm for ``free''; that is, we do not include them in the cost of the execution. Thus, $\Delta(t)$ is always well-defined, since $\pmax(t)>0$. 

Why do we need $\Delta(t)$? It captures a sense of ``balance''. For our purposes, the situation is most ``unbalanced'' when exactly one packet has non-zero sending probability, while all other packets have zero sending probability; that is, when the total value of $\con>0$ is due to a single packet. Clearly, in such slots, there can be no collision and, correspondingly, $\Delta(t)=0$. In our argument,  $\Delta(t)$ plays a key role in establishing the following:\medskip

\noindent{\bf Lemma \ref{lem:con-minus-squares}.} {\it 
For any fixed slot $t$,  $\con^2 - \sum_i p_i(t)^2 \geq  \Delta(t)\, \con^2/2$.}\medskip

\noindent A natural extension of $\Delta(t)$ is {\boldmath{$\deltaMin$}}, which is the minimum $\Delta(t)$ over all slots $t\in\twoActiveSlots$. As we show momentarily, our lower bound is parameterized by $\deltaMin$. The last component of our argument addresses the sum of the contention over all slots in $\twoActiveSlots$:\medskip

\noindent{\bf Lemma \ref{lem:sum-contention-restricted}.} {\it 
Any algorithm that guarantees \whp that $n$ packets succeed must \whp have \\ $\sum_{t\in\twoActiveSlots} \con$ $= \Omega(n)$.} \medskip

\noindent We can now establish a lower bound for this low-contention case:\medskip

\noindent{\bf Lemma \ref{lem:lower-bound}.} {\it 
Consider any algorithm $\mathcal{A}$ whose contention in any slot is at most $2$ and guarantees \whp a makespan of $\tilde{O}(n\sqrt{\CollCost})$.  W.h.p. the expected collision cost for $\mathcal{A}$ is $\tilde{\Omega}(\deltaMin n\sqrt{\CollCost})$.}
\begin{proof}
Let $X$ be a random variable that is $|\twoActiveSlots|$ under the execution of $\mathcal{A}$. Note that, since \whp the makespan is $O(n\sqrt{\CollCost})$, it must be the case that \whp $X = O(n\sqrt{\CollCost})$. 

By Lemma \ref{lem:sum-contention-restricted}, we have:
\begin{align}
    \sum_{t=1}^{X}  \con \geq c n. \label{eq:contention-sum}
\end{align}
\noindent for some constant $c>0$. Let $Y_{t}=1$ if slot $t \in \twoActiveSlots$  has a collision; otherwise, $Y_{t}=0$. By Lemmas \ref{lem:lower-prob-coll} and \ref{lem:con-minus-squares}:
\begin{align*}
 Pr(Y_t=1) & \geq   \Delta(t) \cdot \con^2/220 \nonumber\\
 & \geq \deltaMin \cdot \con^2/220 
\end{align*}
where the second line  follows from the definition of $\deltaMin$. The expected collision cost is:
\begin{align}
     \sum_{t=1}^X P(Y_t=1)\cdot \CollCost & \geq \sum_{t=1}^X \frac{\deltaMin \con^2 \cdot \CollCost} {220} \nonumber\\
     & = \frac{\deltaMin \cdot \CollCost}{220} \sum_{t=1}^X \con^2. \label{eq:collision-sum}
\end{align}
By Jensen's inequality for convex functions, we have:
\begin{align}
    \frac{\sum_{t=1}^X \con^2}{X} \geq \left(\frac{\sum_{t=1}^X \con}{X}\right)^2 \label{eq:jensen}
\end{align}  
Finally, the expected cost is at least:
\begin{align*}
    \frac{\deltaMin \cdot \CollCost}{220} \sum_{t=1}^X \con^2 & \geq   \frac{\deltaMin \cdot \CollCost}{220} \frac{\left(\sum_{t=1}^X \con\right)^2}{X} \mbox{~~~by Equations \ref{eq:collision-sum} and \ref{eq:jensen}}\\
    & \geq \frac{\deltaMin \cdot \CollCost}{220} \left(\frac{c^2 n^2}{X}\right) \mbox{~~~by Equation \ref{eq:contention-sum}}\\
    & = \tilde{\Omega}\left( \deltaMin\, n\sqrt{\CollCost}  \right)
\end{align*}
\noindent where the second line follows by Equation \ref{eq:contention-sum},  which was defined with regard to $\twoActiveSlots$, and so can be compared to Equation  \ref{eq:collision-sum}. The last line follows since \whp $X= \tilde{O}(n\sqrt{\CollCost})$.\qed
\end{proof}

\noindent Given our analysis of the high- and low-contention cases, we have the following  lower bound:\medskip

\noindent{\bf Theorem \ref{thm:general-lower-bound-main}.} {\it 
Consider any algorithm that \whp guarantees $\tilde{O}(n\sqrt{\CollCost})$ makespan. Then, \whp, the expected collision cost for $\mathcal{A}$ is $\tilde{\Omega}(\min\{\CollCost, \deltaMin n\sqrt{\CollCost}\})$.}    

\medskip

What algorithms does our lower bound say something interesting about? We can start with our statement in Theorem \ref{thm:main-lower-bound}. For a well-known lower bound with the standard cost model, Willard \cite{willard:loglog} defines and argues about \defn{fair} algorithms. These are algorithms where, in a fixed slot, every active packet sends with the same probability (and the probability may change from slot to slot). Fair algorithms have $\deltaMin=1$, and for a sufficiently large collision cost---specifically, $\CollCost = \Omega(n^2)$---the lower bound becomes $\tilde{\Omega}(n\sqrt{\CollCost})$. Given that \malg is fair and \whp guarantees $\tilde{O}(n\sqrt{\CollCost})$ makespan, we thus have a lower bound that is asymptotically tight to a $\poly(\log n)$-factor with our upper bound.

Our lower bound also applies in a similar way to a generalized notion of fairness, where the sending probabilities of any two packets are within some factor $\delta>0$. For example, if $\delta=\Theta(1)$, then $\deltaMin=\Theta(1)$.  Indeed, we only need such $\delta$-fairness between the two packets with the largest probabilities for our lower bound to apply, although our bound weakens as $\delta$ grows larger.

Another class of algorithms that our lower bound applies to is \defn{multiplicative weights update algorithms},  where in each slot every packet updates its sending probability by a multiplicative factor based on channel feedback in the slot. Many of these algorithms are fair (such as \cite{ChangJP19,bender:fully,richa:competitive-j,richa:efficient-j,ogierman:competitive,awerbuch:jamming,bender:coded}), but not all are (such as \cite{bender:fully}). Our lower bound applies in a non-trivial way to all such algorithms; that is, $\deltaMin>0$ given the update rules.

\section{Conclusion and Future Work}\label{sec:conclusion-future-work}

\noindent We considered a model for the problem contention resolution where each collision has cost  $\CollCost$. Our algorithm, \mainAlgorithm, addresses the static case and guarantees \whp that all packets succeed with makespan and expected collision cost that is $\tilde{O}(n\sqrt{\CollCost})$.

There are several directions for future work that are potentially interesting. First, we would like to extend this cost model to the dynamic setting (where packets may arrive over time) and design solutions. Many algorithms for the dynamic case break the packet arrivals into (mostly) disjoint batches; however, this requires periodic broadcasting that can cause collisions.

Second, for the lower bound, we believe (with some more work) that we can remove the $\poly(\log n)$ factor. However, we would like to derive a more general lower bound. A significant challenge appears to be addressing  algorithms that set up slots where exactly one packet has non-zero sending probability, while all others have zero sending probability. For example, an elected leader might ``schedule'' each packet their own exclusive slot in which to send (with probability $1$). Since there can be no collisions after such a schedule is implemented, it seems a lower bound argument must demonstrate that establishing a schedule is costly.

Third, what if collisions are not fully dictated by the actions of packets? Some wireless settings are inherently ``noisy'', due to weather conditions, co-located networks, or faulty devices. Is there a sensible model for such settings and, if so, can we reduce the cost from collisions? \smallskip

\noindent{\bf Acknowledgements.} We are grateful to the anonymous reviewers for their feedback on our manuscript. This work is supported by NSF award CCF-2144410. 





\clearpage

\appendix

\section*{Appendix}

\section{Analysis for Our Upper Bound} \label{analysis-section}

In this part of the appendix, we provide our full proofs for our upper bound, which culminate in Theorem \ref{thm:main-theorem}. We begin in Section \ref{sec:prelims} by addressing mathematical tools that are useful for our analysis. In Section \ref{sec:diagnosis-correctness}, we argue that \whp \Diagnosis correctly determines the current range and takes the correct action. We follow up in Section \ref{sec:sample-cost} with a cost analysis for the executions of \Sample. In Section \ref{sec:rundownfull}, we analyze \Rundown, showing that \whp all packets succeed, along with a bound on the expected cost. Finally, in Section \ref{sec:main-alg}, we put all of these pieces together in order to establish the correctness and bounds on the expected cost for \mainAlgorithm

For ease of presentation, our analysis assumes pessimistically that no packets succeed until they start executing \Rundown. However, our results hold if we allow packets to succeed (and then terminate) earlier in the execution \Sample, but this number is negligible (being $\tilde{O}(\sqrt{n})$).


\subsection{Preliminaries}\label{sec:prelims}

\noindent Throughout, we assume that $n$ is sufficiently large. We make use of the following well-known facts.

\begin{fact}\label{fact:taylor}
The following inequalities hold. 
\begin{enumerate}[label={(\alph*)},leftmargin=20pt]
\item For any $x$, $1 - x \leq e^{-x}$, 
\item For any $0\leq x<1$, $1 - x \geq e^{-x/(1-x)}$, 
\item For any $0\leq x\leq 1/2$, $1 - x \geq e^{-2 x}$. 
\end{enumerate}
\end{fact}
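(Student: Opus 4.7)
The plan is to prove (a) directly from elementary calculus, then bootstrap (b) and (c) from (a) via simple substitutions, so that only the first inequality requires any real work.

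For (a), I would set $g(x) = e^{-x} - (1-x)$ and note that $g(0) = 0$ while $g'(x) = 1 - e^{-x}$ is negative for $x < 0$ and positive for $x > 0$. Hence $x = 0$ is the unique global minimum of $g$, giving $g(x) \geq 0$ and therefore $1 - x \leq e^{-x}$ for every real $x$. Equivalently, one can observe that $e^{-x}$ is convex (its second derivative is $e^{-x} > 0$) and $1 - x$ is its tangent line at $x = 0$, and a convex function lies above any of its tangents.

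For (b), I would apply the substitution $y = x/(1-x)$, which is well-defined and nonnegative for $0 \leq x < 1$. A short calculation yields $x = y/(1+y)$ and $1 - x = 1/(1+y)$, so the claim $1 - x \geq e^{-x/(1-x)}$ becomes $1/(1+y) \geq e^{-y}$, i.e., $1 + y \leq e^{y}$. The latter is precisely (a) applied at the point $-y$. For (c), I would chain (b) with the monotonicity of $t \mapsto e^{-t}$: when $0 \leq x \leq 1/2$ we have $1 - x \geq 1/2$, hence $x/(1-x) \leq 2x$, and therefore $e^{-x/(1-x)} \geq e^{-2x}$. Combining with (b) yields $1 - x \geq e^{-x/(1-x)} \geq e^{-2x}$, as required.

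I do not anticipate any genuine obstacle: all three inequalities are standard Taylor-style bounds, and the derivation above avoids series manipulations entirely. The only care points are sign tracking when invoking (a) at $-y$ in the proof of (b), and checking the boundary cases $x = 0$ in (b) and $x \in \{0, 1/2\}$ in (c); both are immediate.
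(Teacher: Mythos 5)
Your proof is correct. The paper does not actually supply a proof of this Fact---it states (a)--(c) as well-known and only remarks that (c) follows directly from (b)---so your argument simply fills in the details the paper omits, and your derivation of (c) from (b) (via $x/(1-x) \le 2x$ on $[0,1/2]$ and monotonicity of $e^{-t}$) is exactly the reduction the paper alludes to.
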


\noindent Note that Fact \ref{fact:taylor}(c) follows directly from Fact \ref{fact:taylor}(b); however, the former is sometimes easier to employ, and we state it explicitly to avoid any confusion. We also make use of the following standard Chernoff bounds, stated below.

\begin{theorem}\label{thm:chernoff}
(\cite{MichelGoemans}) Let $X = \sum_i X_i$, where $X_i$ are Bernoulli indicator random variables with probability $p$.  The following holds:
\begin{align*}
Pr\left(X \geq (1+\delta)E[X] \right) &\leq  \exp\left( -\frac{\delta^2 E[X]}{2+ \delta} \right) \mbox{~~~for any $\delta > 0$}\\
& \mbox{and}\\
Pr\left(X  \leq (1-\delta)E[X] \right) &\leq \exp\left( -\frac{\delta^2 E[X]}{2+\delta} \right)  \mbox{~~~for any $0 < \delta < 1$}.~\qed
\end{align*}
\end{theorem}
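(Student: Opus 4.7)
The plan is to prove both tail bounds by the standard Chernoff method: apply Markov's inequality to an exponentiated version of $X$, use independence to factor the moment generating function, and then optimize over the free parameter. The main technical work lies not in the exponential moment computation itself (which is routine for Bernoulli sums) but rather in converting the resulting ``natural'' Chernoff expression into the clean form $\exp(-\delta^2 E[X]/(2+\delta))$ appearing in the statement. I expect this analytic conversion step to be the main obstacle.

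For the upper tail, I would start with the identity
\begin{align*}
Pr\bigl(X \geq (1+\delta)\mu\bigr) = Pr\bigl(e^{tX} \geq e^{t(1+\delta)\mu}\bigr)
\end{align*}
valid for any $t>0$, where $\mu = E[X]$. Applying Markov's inequality yields an upper bound of $E[e^{tX}]/e^{t(1+\delta)\mu}$. By independence of the $X_i$, the moment generating function factors as $E[e^{tX}] = \prod_i E[e^{tX_i}]$, and for each Bernoulli term $E[e^{tX_i}] = 1 + p(e^t - 1) \leq \exp\bigl(p(e^t-1)\bigr)$ by Fact~\ref{fact:taylor}(a). Multiplying these gives $E[e^{tX}] \leq \exp\bigl(\mu(e^t-1)\bigr)$. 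Choosing $t = \ln(1+\delta)$ to (essentially) minimize the bound leads to
\begin{align*}
Pr\bigl(X \geq (1+\delta)\mu\bigr) \leq \left(\frac{e^{\delta}}{(1+\delta)^{1+\delta}}\right)^{\mu}.
\end{align*}

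The remaining---and most delicate---step is to show that the exponent in this bound is at most $-\delta^2 \mu/(2+\delta)$, i.e.\ that $f(\delta) := \delta - (1+\delta)\ln(1+\delta) + \delta^2/(2+\delta) \leq 0$ for all $\delta > 0$. I would verify this by checking $f(0)=0$ and showing $f'(\delta) \leq 0$ for $\delta>0$; differentiating gives $f'(\delta) = -\ln(1+\delta) + \delta(4+\delta)/(2+\delta)^2$, and a second differentiation reveals $f''(\delta) \leq 0$, so $f'$ is nonincreasing from $f'(0)=0$. This yields the stated upper-tail bound.

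The lower tail follows by a mirror argument: for $0 < \delta < 1$, apply Markov's inequality to $e^{-tX}$ for $t > 0$, obtaining $Pr(X \leq (1-\delta)\mu) \leq E[e^{-tX}] e^{t(1-\delta)\mu}$; the same factoring gives $E[e^{-tX}] \leq \exp\bigl(\mu(e^{-t}-1)\bigr)$, and the choice $t = -\ln(1-\delta)$ produces the analogue $\bigl(e^{-\delta}/(1-\delta)^{1-\delta}\bigr)^{\mu}$. Reducing this to $\exp(-\delta^2\mu/(2+\delta))$ is a parallel analytic exercise---show $g(\delta) := -\delta - (1-\delta)\ln(1-\delta) + \delta^2/(2+\delta) \leq 0$ on $(0,1)$ via $g(0)=0$ and monotonicity of $g'$. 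Together, these two symmetric computations give both inequalities in the theorem. No randomized-algorithms ingredient beyond independence of the $X_i$ is required, so the difficulty is really just in packaging the resulting exponent neatly.
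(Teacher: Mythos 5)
Your proposal is correct: the MGF/Markov argument, the choice $t=\ln(1+\delta)$ (resp.\ $t=-\ln(1-\delta)$), and the calculus verification that $\delta-(1+\delta)\ln(1+\delta)\leq -\delta^2/(2+\delta)$ (and its lower-tail analogue, which even follows from the stronger $-\delta^2/2$ bound) all go through. The paper does not prove this theorem itself---it imports it from the cited lecture notes of Goemans---and your derivation is essentially the same standard Chernoff argument given there, so there is nothing further to reconcile.
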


Throughout, we use with high probability to mean with probability  $1-O(1/n^{\kappa+1})$, where $n^\kappa$ is an upper bound on $\CollCost$ (recall Section \ref{sec:model}). 



In our first lemma below, we prove an upper bound on the probability of a collision in any slot as a function of the number of active packets. This result is a useful tool in our later arguments.

\begin{lemma}\label{lem:upper-prob-coll}
Consider any slot $t$, where there exist $m$ active packets, each sending with the same probability $p$, where  $p< 1/m$. The probability of a collision in slot $t$ is at most $\frac{2m^2p^2}{\left(1-mp\right)}$.
\end{lemma}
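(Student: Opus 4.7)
The plan is to bound the probability of a collision directly using the binomial distribution. Let $X$ denote the number of packets that send in slot $t$. Since each of the $m$ active packets sends independently with probability $p$, we have $X \sim \text{Binomial}(m,p)$, and a collision in slot $t$ occurs precisely when $X \geq 2$. Thus the goal reduces to showing $\Pr(X \geq 2) \leq \frac{2m^2 p^2}{1-mp}$.

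First I would write out $\Pr(X \geq 2) = \sum_{k=2}^{m} \binom{m}{k} p^k (1-p)^{m-k}$ and discard the $(1-p)^{m-k}$ factors (each at most $1$) to obtain $\Pr(X \geq 2) \leq \sum_{k=2}^{m} \binom{m}{k} p^k$. Next I would apply the standard bound $\binom{m}{k} \leq m^k$ to get $\Pr(X \geq 2) \leq \sum_{k=2}^{m} (mp)^k$. Since $mp < 1$ by hypothesis, this tail sum is dominated by the corresponding infinite geometric series:
\[
\sum_{k=2}^{m} (mp)^k \;\leq\; \sum_{k=2}^{\infty} (mp)^k \;=\; \frac{(mp)^2}{1-mp} \;=\; \frac{m^2 p^2}{1-mp}.
\]
Combining gives $\Pr(\text{collision in } t) \leq \frac{m^2 p^2}{1-mp} \leq \frac{2 m^2 p^2}{1-mp}$, which yields the claim (with a slack factor of two that the authors presumably keep for convenience elsewhere).

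There is no real obstacle here; the only point requiring any care is ensuring $mp < 1$ so that the geometric series converges, which is given directly by the hypothesis $p < 1/m$. Everything else is routine algebra, so the proof should fit in a few lines.
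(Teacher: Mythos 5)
Your proof is correct, and it takes a genuinely different route from the paper's. The paper computes the collision probability as $1-\Pr(\text{success})-\Pr(\text{empty}) = 1-(1-p)^m(mp+1)$ and then works this expression down using the Taylor-series inequalities of Fact~\ref{fact:taylor} (specifically $1-x\geq e^{-x/(1-x)}$ and $e^{-x}\geq 1-x$), followed by algebraic simplification under the substitution $1-p \geq 1-mp$, landing exactly on $\frac{2m^2p^2}{1-mp}$. You instead bound the binomial upper tail directly: drop the factors $(1-p)^{m-k}\leq 1$, use $\binom{m}{k}\leq m^k$, and sum the geometric series, which is valid precisely because the hypothesis $p<1/m$ gives $mp<1$. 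Your route is shorter and more elementary (no exponential estimates needed), and it even yields the slightly stronger bound $\frac{m^2p^2}{1-mp}$, so the stated bound follows with room to spare; the constant $2$ in the lemma is an artifact of the paper's particular chain of estimates rather than anything the downstream lemmas rely on, since they only use the bound asymptotically. The paper's approach has the minor virtue of reusing the same Fact~\ref{fact:taylor} toolkit employed throughout the rest of the analysis, but either argument serves equally well where the lemma is invoked.
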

\begin{proof} 
Fix a time slot $t$. Let $p_i$ denote the probability that packet $i$ sends in slot $t$ and number of packets at the beginning is $m$. The probability of collision in slot $t$ is at most:
\begin{align*}
& 1- Pr(\mbox{success in slot $t$})-Pr(\mbox{empty slot in slot $t$})\\
&= 1- \binom{m}{1} p^1 (1-p)^{(m-1)}-\binom{m}{0} p^0 (1-p)^{m}\\
&\leq  1- mp(1-p)^{m} - (1-p)^m\\
&= 1- (1-p)^{m} (mp+1)\\
&\leq 1- e^{-\frac{mp}{1-p}} (mp+1) \mbox{~~~by Fact \ref{fact:taylor}(b)}\\
&\leq 1- \left(1-\frac{mp}{1-p}\right) (mp+1)\\
&\leq 1- \left(1-\frac{mp}{1-mp}\right) (mp+1) \mbox{~~~for $mp<1$}\\
&= 1- \left(mp-\frac{m^2p^2}{1-mp}+1-\frac{mp}{1-mp}\right)\\
&= \frac{m^2p^2}{1-mp}-mp+\frac{mp}{1-mp}\\
&= \frac{m^2p^2-mp+m^2p^2+mp}{1-mp}\\
&= \frac{m^2p^2-mp+m^2p^2+mp}{1-mp}\\
&= \frac{2m^2p^2}{\left(1-mp\right)}
\end{align*}
\noindent which proves the claim. \qed 
\end{proof}

\subsection{Correctness Analysis for \Diagnosis}\label{sec:diagnosis-correctness}

\noindent In order to argue correctness for \Diagnosis, we give a case analysis for each of the \tootoolow, \toolow, \goodrange, and \toohigh ranges. In each case, we provide the necessary bounds on successes and collision per sample, which allows us to demonstrate that \Diagnosis correctly diagnoses the current range and performs the correct action. We start with Lemmas~\ref{lem:generic-upper-expectation-successes}, \ref{lem:generic-lower-expectation-successes}, \ref{lem:upper-bound-successes}, and \ref{lem:lower-bound-successes}, which establish lower and upper bounds on the number of successes in the ranges of $\toolow$, $\goodrange$, and $\toohigh$. 



\begin{lemma}\label{lem:generic-upper-expectation-successes}
    For any window size $\currWin=\myVar n\sqrt{\CollCost}$ where $\myVar>0$ is a constant, the expected number of successes in a sample of size $d\sqrt{\CollCost}\ln(\sampleParam)$ at most $\frac{d\ln(\sampleParam)}{\myVar}e^{-1/(2\myVar\sqrt{\CollCost})}$.
\end{lemma}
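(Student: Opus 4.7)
The plan is to bound the per-slot success probability and then multiply by the sample length by linearity of expectation. Each of the $n$ active packets sends independently in each slot with probability $p = 1/\currWin = 1/(xn\sqrt{\CollCost})$, so the probability that a fixed slot is a success is exactly
\begin{equation*}
n \cdot p \cdot (1-p)^{n-1} \;=\; \frac{1}{x\sqrt{\CollCost}}\left(1 - \frac{1}{xn\sqrt{\CollCost}}\right)^{n-1}.
\end{equation*}

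Next I would upper-bound the factor $(1 - p)^{n-1}$. Applying Fact \ref{fact:taylor}(a) with $y = p$ yields $(1-p)^{n-1} \le e^{-p(n-1)} = \exp\!\bigl(-(n-1)/(xn\sqrt{\CollCost})\bigr)$. Since $n \geq 2$, we have $(n-1)/n \ge 1/2$, so the exponent satisfies $(n-1)/(xn\sqrt{\CollCost}) \ge 1/(2x\sqrt{\CollCost})$, and because $e^{-a}$ is decreasing in $a$, it follows that
\begin{equation*}
(1-p)^{n-1} \;\le\; e^{-1/(2x\sqrt{\CollCost})}.
\end{equation*}
Combining, the per-slot probability of success is at most $(1/(x\sqrt{\CollCost})) e^{-1/(2x\sqrt{\CollCost})}$.

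Finally, by linearity of expectation over the $d\sqrt{\CollCost}\ln(\currWin)$ slots of the sample, the expected number of successes is at most
\begin{equation*}
d\sqrt{\CollCost}\ln(\currWin) \cdot \frac{1}{x\sqrt{\CollCost}} \, e^{-1/(2x\sqrt{\CollCost})} \;=\; \frac{d\ln(\currWin)}{x}\, e^{-1/(2x\sqrt{\CollCost})},
\end{equation*}
as claimed. There is no real obstacle here; the main thing to be careful about is keeping the constant factor in the exponent (the $1/2$ arising from $(n-1)/n \ge 1/2$) so that the stated bound $e^{-1/(2x\sqrt{\CollCost})}$ is achieved rather than the tighter-but-not-needed $e^{-1/(x\sqrt{\CollCost})}$; this slack is precisely what allows the lemma to hold uniformly for all $n \ge 2$ without further assumptions.
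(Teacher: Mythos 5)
Your proposal is correct and follows essentially the same route as the paper's proof: bound the per-slot success probability by $\binom{n}{1}p(1-p)^{n-1}$, apply Fact \ref{fact:taylor}(a), use $n\geq 2$ to get the $1/2$ in the exponent, and finish with linearity of expectation over the $d\sqrt{\CollCost}\ln(\currWin)$ sample slots. No gaps.
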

\begin{proof}
    Let $X_i$ be an independent random variable where $X_i = 1$ when $i$-th slot in $w$ contains a success, and $X_i = 0$ otherwise.
     \begin{align*}
      Pr(X_i=1) & = \binom{n}{1} \left(\frac{1}{\myVar n\sqrt{\CollCost}}\right)\left(1-\frac{1}{\myVar n\sqrt{\CollCost}}\right)^{n-1}\\
      & \leq \frac{1}{\myVar\sqrt{\CollCost}}e^{-(n-1)/(\myVar n\sqrt{\CollCost})} \mbox{~~~by Fact~\ref{fact:taylor}(a)}\\
       & \leq \frac{1}{\myVar\sqrt{\CollCost}}e^{-(n/2)/(\myVar n\sqrt{\CollCost})} \mbox{~~~since $n\geq 2$}\\
      &=\frac{1}{\myVar\sqrt{\CollCost}}e^{-1/(2\myVar\sqrt{\CollCost})}
     \end{align*}

\noindent Let $X = \sum_{k=1}^{d\sqrt{\CollCost}\ln(\sampleParam)} X_k$, then the expected number of success:
\begin{align*}
    E\left[X\right] & \leq E\left[\sum_{k=1}^{d\sqrt{\CollCost}\ln(\sampleParam)} X_k\right] \\
    & = \sum_{k=1}^{d\sqrt{\CollCost}\ln(\sampleParam)} \hspace{-12pt} E[ X_k ] \mbox{~~~by linearity of expectation}\\
    & = \frac{d\sqrt{\CollCost}\ln(\sampleParam)}{\myVar\sqrt{\CollCost}}e^{-1/(2\myVar\sqrt{\CollCost})} \\
    & = \frac{d\ln(\sampleParam)}{\myVar}e^{-1/(2\myVar\sqrt{\CollCost})}
\end{align*}
which proves the claim.\qed
\end{proof}

\begin{lemma}\label{lem:generic-lower-expectation-successes}
    For any window size $\currWin=\myVar n\sqrt{\CollCost}$ where $\myVar>0$ is a constant, the expected number of successes in a sample is at least  $\frac{d\ln(\sampleParam)}{\myVar}e^{-1/(\myVar\sqrt{\CollCost})}$.
\end{lemma}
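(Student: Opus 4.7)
The plan is to mirror the proof of Lemma \ref{lem:generic-upper-expectation-successes} almost verbatim, swapping the upper-bound inequality from Fact \ref{fact:taylor}(a) for a matching lower bound. I would define the same Bernoulli indicator $X_i$ (equal to $1$ when slot $i$ of the sample contains a success), and compute exactly
\[ Pr(X_i = 1) = \binom{n}{1}\frac{1}{\myVar n\sqrt{\CollCost}}\left(1-\frac{1}{\myVar n\sqrt{\CollCost}}\right)^{n-1} = \frac{1}{\myVar\sqrt{\CollCost}}\left(1-\frac{1}{\myVar n\sqrt{\CollCost}}\right)^{n-1}. \]
The remaining task is to lower-bound the trailing factor by $e^{-1/(\myVar\sqrt{\CollCost})}$; summing over the $d\sqrt{\CollCost}\ln(\sampleParam)$ sample slots and invoking linearity of expectation then produces the claimed bound of $\frac{d\ln(\sampleParam)}{\myVar}e^{-1/(\myVar\sqrt{\CollCost})}$.

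For the key inequality, I would apply Fact \ref{fact:taylor}(b) with $y = 1/(\myVar n\sqrt{\CollCost})$ to get
\[ \left(1-\frac{1}{\myVar n\sqrt{\CollCost}}\right)^{n-1} \geq \exp\!\left(-\frac{(n-1)/(\myVar n\sqrt{\CollCost})}{1-1/(\myVar n\sqrt{\CollCost})}\right) = \exp\!\left(-\frac{n-1}{\myVar n\sqrt{\CollCost}-1}\right). \]
It then suffices to verify that $\frac{n-1}{\myVar n\sqrt{\CollCost}-1} \leq \frac{1}{\myVar\sqrt{\CollCost}}$; after cross-multiplication, this collapses to the clean requirement $\myVar\sqrt{\CollCost} \geq 1$.

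The only real obstacle is justifying this last inequality, since the lemma is stated generically for any constant $\myVar > 0$. However, \Diagnosis invokes \Sample on windows $\currWin$ lying in the \toolow range or above, so $\currWin = \myVar n\sqrt{\CollCost} \geq n$ gives $\myVar \geq 1/\sqrt{\CollCost}$, and hence $\myVar\sqrt{\CollCost} \geq 1$ as required. I would call out this range restriction in the proof text. If one wanted the bound to cover the \tootoolow range as well, the cleanest fallback is Fact \ref{fact:taylor}(c), which yields the same form of bound at the cost of a factor of two in the exponent. Beyond this bookkeeping, the calculation is fully symmetric with the preceding upper-bound proof and requires no new ideas.
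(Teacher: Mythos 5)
Your proof is correct and follows essentially the same route as the paper's: the same indicator-variable setup, the same exact expression for $Pr(X_i=1)$, a Taylor-series lower bound on the factor $\left(1-\frac{1}{\myVar n\sqrt{\CollCost}}\right)^{n-1}$, and linearity of expectation, the only difference being that the paper uses Fact~\ref{fact:taylor}(c) where you use Fact~\ref{fact:taylor}(b). Your side condition $\myVar\sqrt{\CollCost}\geq 1$ is harmless (the lemma is only invoked with $\myVar\geq 10$), and in fact your route recovers the stated constant $e^{-1/(\myVar\sqrt{\CollCost})}$ exactly, whereas the paper's Fact~\ref{fact:taylor}(c) step yields $e^{-2(n-1)/(\myVar n\sqrt{\CollCost})}$ and reaches the stated form only via a simplification that is slightly lossy for $n>2$.
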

\begin{proof}
Let $X_i$ be an independent random variable where $X_i = 1$ when $i$-th slot in $w$ contains a success, and $X_i = 0$ otherwise.
     \begin{align*}
      Pr(X_i=1) & = \binom{n}{1} \left(\frac{1}{\myVar n\sqrt{\CollCost}}\right)\left(1-\frac{1}{\myVar n\sqrt{\CollCost}}\right)^{n-1}\\
      & \geq \frac{1}{\myVar\sqrt{\CollCost}}e^{-2(n-1)/(\myVar n\sqrt{\CollCost})} \mbox{~~~by Fact~\ref{fact:taylor}(c)}\\
       & \geq \frac{1}{\myVar\sqrt{\CollCost}}e^{-1/(\myVar\sqrt{\CollCost})}.
     \end{align*}
\noindent Let $X_k$ be an independent random variable where $X_k = 1$ when $k$-th slot in the sample contains a success, and $X_k = 0$ otherwise. The expected number of successes in the sample is:

\begin{align*}
    E\left[X\right] & \geq E\left[\sum_{k=1}^{d\sqrt{\CollCost}\ln(\sampleParam)} X_k\right] \\
    & = \sum_{k=1}^{d\sqrt{\CollCost}\ln(\sampleParam)} \hspace{-12pt} E[ X_k ] \mbox{~~~by linearity of expectation}\\
    & \geq \frac{d\sqrt{\CollCost}\ln(\sampleParam)}{\myVar\sqrt{\CollCost}}e^{-1/(\myVar\sqrt{\CollCost})} \\
    & = \frac{d\ln(\sampleParam)}{\myVar}e^{-1/(\myVar\sqrt{\CollCost})}
\end{align*}
which proves the claim.\qed
\end{proof}

\begin{lemma}\label{lem:upper-bound-successes}
If $\currWin\geq an\sqrt{\CollCost}$, where $a\geq 1$ is a constant, then with probability at least $1-1/\probParam^{d'}$ the number of successes in a sample is at most  $(2d/a)\ln(\sampleParam)$, where $d'$ is an arbitrarily large constant depending on sufficiently large $d$. 
\end{lemma}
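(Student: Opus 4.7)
The plan is to upper-bound the expected number of successes in the sample and then apply a Chernoff tail bound. Fix $\currWin\geq an\sqrt{\CollCost}$, and let $X_i$ be the indicator that slot $i$ of the sample is a success. The $X_i$ are independent Bernoulli variables with identical parameter $q = n(1/\currWin)(1-1/\currWin)^{n-1}\leq n/\currWin \leq 1/(a\sqrt{\CollCost})$. Since the sample contains $N = d\sqrt{\CollCost}\ln(\sampleParam)$ slots, letting $X=\sum_{i=1}^{N} X_i$, linearity of expectation yields
\[
E[X] \;\leq\; N\cdot \frac{1}{a\sqrt{\CollCost}} \;=\; \frac{d}{a}\ln(\sampleParam).
\]
This mirrors (and is weaker than) the bound in Lemma \ref{lem:generic-upper-expectation-successes}, obtained for $\myVar=a$; the key point is monotonicity of $np(1-p)^{n-1}$ in $p$ for $p\leq 1/n$, which lets us pessimistically take $\currWin=an\sqrt{\CollCost}$.

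Next, because we want a high-probability upper bound on $X$ and not merely a bound on $E[X]$, I would use a stochastic-dominance step: let $Y_i$ be i.i.d.\ Bernoulli$(1/(a\sqrt{\CollCost}))$ and $Y=\sum_i Y_i$, so $X$ is stochastically dominated by $Y$ and $E[Y]=(d/a)\ln(\sampleParam)$. Applying the upper-tail Chernoff bound from Theorem~\ref{thm:chernoff} with $\delta=1$ gives
\[
\Pr\!\left(X \geq \tfrac{2d}{a}\ln(\sampleParam)\right)
\;\leq\; \Pr\!\left(Y \geq 2\,E[Y]\right)
\;\leq\; \exp\!\left(-\tfrac{E[Y]}{3}\right)
\;=\; \exp\!\left(-\tfrac{d}{3a}\ln(\sampleParam)\right)
\;=\; \frac{1}{\sampleParam^{\,d/(3a)}}.
\]
Since $\sampleParam=\currWin\geq an\sqrt{\CollCost}\geq n$, the right-hand side is at most $1/n^{d/(3a)}$.

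Finally, since $a$ is a fixed constant, choosing $d\geq 3ad'$ (i.e., taking $d$ sufficiently large relative to $d'$) makes the failure probability at most $1/n^{d'}$, as required. The argument is routine; the only mild subtlety is handling the case in which $\currWin$ is much larger than $an\sqrt{\CollCost}$, so that $E[X]$ may be strictly below $(d/a)\ln(\sampleParam)$. The stochastic-dominance step above finesses this cleanly by comparing $X$ to a Bernoulli sum whose mean is exactly $(d/a)\ln(\sampleParam)$, to which Chernoff applies directly. No genuine obstacle arises beyond choosing the constant $d$ large enough to absorb the factor $3a$ in the exponent.
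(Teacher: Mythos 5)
Your proof is correct and follows essentially the same route as the paper's: upper-bound the per-slot success probability by $1/(a\sqrt{\CollCost})$, hence the expected number of successes by $(d/a)\ln(\sampleParam)$, then apply the Chernoff bound of Theorem~\ref{thm:chernoff} with $\delta=1$ and use $\currWin\geq n$ to convert $\sampleParam^{-d/(3a)}$ into $n^{-d'}$. Your explicit stochastic-dominance step (coupling with i.i.d.\ Bernoulli$(1/(a\sqrt{\CollCost}))$ variables) is only a small technical refinement that cleanly justifies applying the tail bound at the level $(d/a)\ln(\sampleParam)$ even when $E[X]$ is strictly smaller—something the paper does implicitly by invoking Chernoff with an upper bound on the mean—but it does not change the substance of the argument.
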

\begin{proof}
Let $\myVar=a$ in Lemma~\ref{lem:generic-upper-expectation-successes}, which implies that the expected number of successes in the sample is at most:
\begin{align*}
\frac{d\ln(\sampleParam)}{a}e^{-1/(2a\sqrt{\CollCost})} & \leq \frac{d\ln(\sampleParam)}{a}.
\end{align*}
Let $X_i=1$ if the $i$-th slot of the sample contains a success; otherwise, let  $X_i=0$. Let $X =\sum_{i=1}^{d\sqrt{\CollCost}\ln(\sampleParam)} X_i$, and note by the above that $E[X] \leq \frac{d\ln(\sampleParam)}{a}$. By Theorem~\ref{thm:chernoff}, with $\delta =1$ below, we have:

\begin{align*}
    Pr\left(X\geq (1+\delta)\frac{d\ln(\sampleParam)}{a}\right)&\leq e^{-\frac{(1)^{2}d\ln(\sampleParam)}{(1+2)a}}\\
    & = e^{-\frac{d\ln(\sampleParam)}{3a}}\\
    & = \sampleParam^{-\frac{d}{3a}}\\
    & \leq n^{-\frac{d}{3a}} \mbox{~~~since $\currWin \geq n$}\\
    & = n^{-d'}
\end{align*}
where $d'$ can be an arbitrarily large constant, so long as $d$ is sufficiently large. \qed
\end{proof}

Our next lemma provides a useful lower bound on the number of successes.

\begin{lemma}\label{lem:lower-bound-successes}
If $n \leq w\leq bn\sqrt{\CollCost}$ for a constant $b\geq 10$, then with probability at least $1-1/\probParam^{d'}$ the number of successes exceeds $(d/2be)\ln(\sampleParam)$, where $d'$ is an arbitrarily large constant depending on sufficiently large $d$. 
\end{lemma}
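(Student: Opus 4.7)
My plan is to mirror the structure of Lemma~\ref{lem:upper-bound-successes} but for the lower tail: first, lower bound the per-slot success probability; then sum to get a lower bound on the expected number of successes in the sample; and finally apply a Chernoff lower-tail bound to conclude. As in the surrounding analysis, I would assume pessimistically that all $n$ packets remain active throughout \Sample, so that the per-slot success probability is the same in every slot and the slot indicators are independent (the coin flips in distinct slots are independent).

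For any fixed slot in the sample, the success probability is $(n/w)(1 - 1/w)^{n-1}$. Since $n \leq w \leq b n \sqrt{\CollCost}$, the first factor satisfies $n/w \geq 1/(b\sqrt{\CollCost})$. For the second factor, applying Fact~\ref{fact:taylor}(b) with $x = 1/w$ yields
\[
\left(1 - \tfrac{1}{w}\right)^{n-1} \;\geq\; e^{-(n-1)/(w-1)} \;\geq\; e^{-1},
\]
where the last step uses $w \geq n$ so that $(n-1)/(w-1) \leq 1$. Letting $X_i$ be the indicator that slot $i$ of the sample is a success, and $X = \sum_i X_i$ over the $d\sqrt{\CollCost}\ln(w)$ slots, linearity of expectation gives
\[
E[X] \;\geq\; d\sqrt{\CollCost}\,\ln(w) \cdot \frac{1}{b\sqrt{\CollCost}} \cdot \frac{1}{e} \;=\; \frac{d\ln(w)}{b\,e}.
\]

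Next I would apply the Chernoff lower-tail bound of Theorem~\ref{thm:chernoff} with $\delta = 1/2$:
\[
\Pr\!\left( X \leq \tfrac{1}{2} E[X] \right) \;\leq\; \exp\!\left(-\frac{(1/2)^2 E[X]}{2 + 1/2}\right) \;=\; \exp(-E[X]/10).
\]
Monotonicity in $E[X]$ lets us replace $E[X]$ by its lower bound, so this is at most $\exp\bigl(-d\ln(w)/(10be)\bigr) = w^{-d/(10be)} \leq n^{-d/(10be)}$, using $w \geq n$. Choosing $d$ sufficiently large makes $d/(10be) \geq d'$ for any prescribed constant $d'$, so with probability at least $1 - 1/n^{d'}$ we have $X > \tfrac{1}{2}E[X] \geq (d/2be)\ln(w)$, which is precisely the claim.

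I do not expect any real obstacle here: the only subtle point is that near the lower end of the window range, where $w$ is close to $n$, the coarser Fact~\ref{fact:taylor}(c) would give only an $e^{-2}$ factor and would fail to deliver the stated constant $(d/2be)$. Using Fact~\ref{fact:taylor}(b) instead yields the tighter $e^{-1}$ factor needed to match the threshold in the statement; the rest is a direct Chernoff calculation analogous to that of Lemma~\ref{lem:upper-bound-successes}.
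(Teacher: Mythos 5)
Your proposal is correct and follows essentially the same route as the paper: lower-bound the per-slot success probability, sum via linearity of expectation to get $E[X] \geq \frac{d\ln(\currWin)}{be}$, and apply the lower-tail Chernoff bound of Theorem~\ref{thm:chernoff} with $\delta = 1/2$, finishing with $\currWin \geq n$. The only difference is cosmetic and, if anything, slightly more careful: you bound the per-slot probability directly over the whole range $n \leq \currWin \leq bn\sqrt{\CollCost}$ using Fact~\ref{fact:taylor}(b), whereas the paper invokes its generic expectation lemma (Lemma~\ref{lem:generic-lower-expectation-successes}) at the endpoint $\myVar = b$ and reaches the same bound $\frac{d\ln(\currWin)}{eb}$.
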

\begin{proof}
Let $\myVar=b$ in Lemma~\ref{lem:generic-lower-expectation-successes}, which implies that the expected number of successes in the sample is at least:
\begin{align*}
\frac{d\ln(\sampleParam)}{b}e^{-1/(b\sqrt{\CollCost})} > \frac{d\ln(\sampleParam)}{e\,b}
\end{align*}
\noindent for any $b\geq 10$. Let $X_i=1$ if the $i$-th slot of the sample contains a success; otherwise, let  $X_i=0$. Let $X =\sum_{i=1}^{d\sqrt{\CollCost}\ln(\sampleParam)} X_i$, and note by the above that $E[X] \geq \frac{d\ln(\sampleParam)}{b e}$. By Theorem~\ref{thm:chernoff}, with  $\delta =1/2$, we have:

\begin{align*}
    Pr\left(X \leq (1-\delta) \frac{d\ln(\sampleParam)}{b e} \right)&\leq e^{-\frac{(1/2)^{2}d\ln(\sampleParam)}{(2+1/2)eb}}\\
    & = e^{-\frac{d\ln(\sampleParam)}{10eb}}\\
    & =  \sampleParam^{-\frac{d}{10eb}}\\
    & \leq  \probParam^{-\frac{d}{10eb}}\\
    & = \frac{1}{\probParam^{d'}}
\end{align*}
\noindent where $d'$ can be an arbitrarily large constant, so long as $d$ is sufficiently large. \qed
\end{proof}

\noindent The next two lemmas pertain to the range of \tootoolow. Lemma~\ref{l:all-collisions-if-smallW} shows that, when the window size is $O(n/\log n)$, the entire sample will consist of collisions. Unlike many of our other arguments, we cannot employ a Chernoff to establish this fact. Lemma~\ref{l:upper-too-too-low} handles the remainder of the \tootoolow range, showing that much of the sample will consist of collisions. 

\begin{lemma}\label{l:all-collisions-if-smallW}
    Suppose that $1\leq \currWin \leq \frac{n}{c\ln n}$ and $n$ is sufficiently large and $c\geq 1$ is a constant. Then, with probability at least $1-1/n^{d'}$, the entire sample consists of collisions, where $d'$ depends on sufficiently large $c$.
\end{lemma}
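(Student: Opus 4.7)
The twist here is that a Chernoff bound is not directly useful: we must argue that \emph{every} slot in the sample is a collision, not that the count deviates from its mean. So the plan is to bound the per-slot probability of a non-collision very tightly and then take a crude union bound over the whole sample.

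Fix any slot in the sample. A non-collision happens exactly when the slot is empty or contains a single sender. By Fact~\ref{fact:taylor}(a), the empty-slot probability is $(1-1/\currWin)^n \leq e^{-n/\currWin}$, and since $\currWin \leq n/(c\ln n)$ we have $n/\currWin \geq c\ln n$, giving a bound of $n^{-c}$. For the single-sender case, the probability is
$$(n/\currWin)(1-1/\currWin)^{n-1} \leq (n/\currWin)\,e^{-(n-1)/\currWin}.$$
Writing $u = n/\currWin \geq c\ln n$, this equals $u\,e^{-u(n-1)/n} \leq u\,e^{-u/2}$ for $n\geq 2$. Since $u\mapsto u\,e^{-u/2}$ is decreasing for $u\geq 2$, evaluating at $u = c\ln n$ (for $c\geq 2$ and $n$ sufficiently large) yields at most $c\ln(n)\cdot n^{-c/2}$. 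Combining, the probability that a fixed slot fails to be a collision is at most $2c\ln(n)\cdot n^{-c/2}$ for $n$ large.

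Next, I would union bound over the $d\sqrt{\CollCost}\ln(\currWin)\leq d\,n^{\kappa/2}\ln(n)$ slots in the sample, using $\CollCost\leq n^{\kappa}$ and $\currWin\leq n$. The probability that at least one slot of the sample is not a collision is therefore at most
$$2cd\,\ln^2(n)\cdot n^{(\kappa - c)/2},$$
which is $O(1/n^{d'})$ for any desired constant $d'$ provided $c$ is chosen large enough relative to $\kappa$ and $d'$ (roughly $c \geq \kappa + 2d' + O(1)$ absorbs the logarithmic overhead). Taking the complement yields the claimed high-probability statement.

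The main obstacle is not conceptual but bookkeeping: the hypothesis $\currWin \leq n/(c\ln n)$ is precisely calibrated to drive the exponential factor $e^{-n/\currWin}$ down to $n^{-\Theta(c)}$, so that the polynomial overhead from both the single-sender combinatorial factor $n/\currWin$ and the sample length (possibly as large as $n^{\kappa/2}\ln n$) is dominated. The condition $n \geq 2$ is just used to replace $(n-1)/n$ by $1/2$ in the exponent so the bound depends only on $c$ and $\kappa$.
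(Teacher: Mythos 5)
Your proposal is correct and follows essentially the same route as the paper: bound the per-slot probability of an empty or single-sender slot via Fact~\ref{fact:taylor}(a) under the hypothesis $\currWin \leq n/(c\ln n)$, obtaining a non-collision probability that is polynomially small with exponent growing in $c$, and then conclude for the whole sample. The only difference is that you make explicit the union bound over the $d\sqrt{\CollCost}\ln(\currWin) \leq d\,n^{\kappa/2}\ln n$ sample slots and the monotonicity in $\currWin$, steps the paper leaves implicit by plugging in the extreme sending probability $c\ln(n)/n$ and absorbing the sample length into the choice of $c$.
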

\begin{proof}
    For any fixed slot, the probability of a collision is at least $1$ minus the probability of a success, minus the probability of that the slot is empty. In other words, the probability of a collision is at least:
    \begin{align*}
      & \mbox{~~~~~}1 - \left(1-\frac{c\ln n}{n}\right)^{n} - \binom{n}{1}\left(\frac{c\ln n}{n}\right)\left(1 - \frac{c\ln n}{n}\right)^{n-1}\\
      &  \geq 1 - \frac{1}{e^{c\ln n}} - \left( \frac{c\ln n}{e^{c(n-1)\ln(n)/n}}\right) \mbox{~~~by Fact~\ref{fact:taylor}(a)}\\
      & \geq 1 - \frac{1}{e^{c\ln n}} - \left( \frac{c\ln n}{e^{c\ln(n)/2}}\right) \mbox{~~~since $\frac{n-1}{n} \geq \frac{1}{2}$ for $n\geq 2$}\\
      & = 1 - \frac{1}{n^c} - \frac{c\ln n}{n^{c/2}}\\
      & \geq 1 - \frac{1}{n^{d'}} \mbox{~~~for $n$ sufficiently large}
    \end{align*}
\noindent which yields the claim. \qed  
\end{proof}

\begin{lemma}\label{l:upper-too-too-low}
Suppose $\frac{n}{c\ln n} < \currWin \leq n$, where $c\geq 1$ is any constant. Then, with probability at least $1-1/n^{d'}$, the  sample contains at least $d\sqrt{\CollCost}\ln(\sampleParam)/(8e^2)$ slots consists of collisions, where $d'$ is an arbitrarily large constant depending on sufficiently large $d$. 
\end{lemma}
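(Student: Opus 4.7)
The plan is to lower bound the per-slot collision probability by a constant that comfortably exceeds $1/(4e^2)$, and then invoke a Chernoff bound on the $d\sqrt{\CollCost}\ln(\sampleParam)$ slots of the sample. Structurally this mirrors Lemmas~\ref{lem:upper-bound-successes} and \ref{lem:lower-bound-successes}, except that the indicator random variables track collisions instead of successes.

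First I would establish the per-slot bound. Fix a slot, set $p = 1/\currWin$, and observe that
\[
Pr(\mbox{collision}) = 1 - (1-p)^n - np(1-p)^{n-1} = 1 - (1-p)^{n-1}\bigl(1 + (n-1)p\bigr).
\]
Writing $\mu = (n-1)/\currWin$ and applying Fact~\ref{fact:taylor}(a) gives $(1-p)^{n-1} \leq e^{-\mu}$, and hence $Pr(\mbox{collision}) \geq 1 - e^{-\mu}(1+\mu)$. The function $g(\mu) = e^{-\mu}(1+\mu)$ is strictly decreasing on $\mu > 0$, so the worst case in the stated range is at the smallest admissible $\mu$. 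Since $\currWin \leq n$, we have $\mu \geq (n-1)/n$, so for $n$ sufficiently large $Pr(\mbox{collision}) \geq 1 - 2/e - o(1)$, which exceeds $1/(4e^2)$ with comfortable slack.

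Next I would let $X_i$ be the indicator that the $i$-th slot of the sample is a collision and set $X = \sum_{i=1}^{d\sqrt{\CollCost}\ln(\sampleParam)} X_i$. By linearity and the per-slot bound, $E[X] \geq d\sqrt{\CollCost}\ln(\sampleParam)/(4e^2)$. Applying Theorem~\ref{thm:chernoff} with $\delta = 1/2$ yields
\[
Pr\!\left(X \leq \tfrac{1}{2}E[X]\right) \;\leq\; e^{-E[X]/10} \;\leq\; e^{-d\sqrt{\CollCost}\ln(\sampleParam)/(40e^2)} \;\leq\; \sampleParam^{-d/(40e^2)},
\]
using $\sqrt{\CollCost} \geq 1$. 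Because $\currWin > n/(c\ln n)$, for $n$ sufficiently large $\ln\sampleParam \geq (\ln n)/2$, so the bound is at most $n^{-d'}$ for arbitrarily large $d'$ whenever $d$ is sufficiently large. On the complementary event, $X \geq E[X]/2 \geq d\sqrt{\CollCost}\ln(\sampleParam)/(8e^2)$, which is the claim.

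The main obstacle is ensuring that the per-slot lower bound on $Pr(\mbox{collision})$ has enough slack to survive the Chernoff halving down to the target fraction $1/(8e^2)$. The value of $\mu$ can be as small as $(n-1)/n$, so the $1 - 2/e$ constant is essentially the worst case and must be pinned down carefully; the rest of the proof is a routine concentration calculation in the same style as the earlier lemmas.
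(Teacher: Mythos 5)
Your proposal is correct, and it follows the same overall skeleton as the paper's proof (a constant lower bound on the per-slot collision probability, followed by a Chernoff bound with $\delta=1/2$ over the $d\sqrt{\CollCost}\ln(\sampleParam)$ sample slots, and the same conversion from a $\currWin$-polynomial to an $n$-polynomial error via $\currWin > n/(c\ln n)$). Where you differ is in how the per-slot constant is obtained. The paper lower-bounds the collision probability by the two-packet term $\binom{n}{2}(1/\currWin)^2(1-1/\currWin)^{n-2}$, applies Fact~\ref{fact:taylor}(c), and evaluates the resulting expression at $\currWin=n$, claiming it is minimized there; in fact that pairwise expression $\frac{n(n-1)}{2\currWin^2}e^{-2n/\currWin}$ is \emph{maximized} at $\currWin=n$ over the range and becomes polynomially small near $\currWin\approx n/(c\ln n)$, so the paper's chain of inequalities is loose at the lower end of the interval (the lemma is still true, since the true collision probability only grows as $\currWin$ shrinks). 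Your derivation sidesteps this: by writing the exact complement $1-(1-p)^{n-1}(1+(n-1)p)$, bounding $(1-p)^{n-1}\leq e^{-\mu}$ with $\mu=(n-1)/\currWin$, and using that $e^{-\mu}(1+\mu)$ is decreasing, the worst case is genuinely at $\currWin=n$, giving a uniform per-slot bound of $1-2/e-o(1)$ (indeed $\geq 1/(4e^2)$ already at $n=2$) across the whole range $(n/(c\ln n), n]$. So your route buys a monotonicity argument that points in the correct direction and a cleaner constant, at the cost of a slightly more careful algebraic identity up front; the concentration step and the final error accounting are essentially identical to the paper's.
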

\begin{proof}
Let the indicator random variable $X_i=1$ if the $i$-th slot of the $d\sqrt{\CollCost}\ln(\sampleParam)$ slots contains a collision; otherwise, let  $X_i=0$. The probability of a collision is lower bounded as follows.
   \begin{align*}
      Pr(X_i=1) & \geq \binom{n}{2} \left(\frac{1}{\currWin}\right)^2\left(1-\frac{1}{\currWin}\right)^{n-2}\\
      & \geq \binom{n}{2} \left(\frac{1}{\currWin}\right)^2\left(1-\frac{1}{ \currWin}\right)^{n}\\
      &\geq \left(\frac{n(n-1)}{2\currWin^2}\right) e^{-2n/\currWin} \mbox{~~~by Fact~\ref{fact:taylor}(c)}\\
      &\geq \frac{(n-1)}{2ne^{2}} \mbox{~~~minimized when $\currWin=n$}\\
      & \geq \frac{1}{4e^2} \mbox{~~~since $\frac{n-1}{n} \geq \frac{1}{2}$ for $n\geq 2$.}
     \end{align*} 

Let  $X =\sum_{i=1}^{d\sqrt{\CollCost}\ln (\currWin)} X_i$. The expected number of collisions in the sample of ${d\sqrt{\CollCost}\ln(\currWin)}$ slots is:\\
\begin{align*}
    E\left[X\right] & \geq E\left[\sum_{i=1}^{d\sqrt{\CollCost}\ln(\currWin)} \hspace{-8pt}X_i\right] \\
    & = \frac{d\sqrt{\CollCost}\ln(\currWin)} {4e^{2}} \mbox{~~~by linearity of expectation.}
\end{align*}
We then employ Theorem~\ref{thm:chernoff}, with $\delta =1/2$, to obtain:
\begin{align*}
    Pr\left(X \leq (1-\delta) \frac{d \sqrt{\CollCost}\ln(\currWin)}{4e^{2}} \right)&\leq e^{- \frac{{(1/2)}^2d\sqrt{\CollCost}\ln(\currWin)}{(5/2)4e^{2}}}\\
    & =e^{-\frac{d\sqrt{\CollCost}\ln(\currWin)}{40e^{2}}}\\
    & =\currWin^{-\frac{d\sqrt{\CollCost}}{40e^2}}\\
    & \leq (n/c\log n)^{-\frac{d\sqrt{\CollCost}}{40e^2}}\\
    & \leq n^{-d'}
\end{align*}
\noindent where $d'$ can be an arbitrarily large constant, so long as $d$ is sufficiently large. \qed
\end{proof}

\begin{lemma}\label{l:same-view}
Under \Diagnosis, all packets witness the same number of empty slots, successes, and collisions.
\end{lemma}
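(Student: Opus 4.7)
The plan is to verify the lemma directly from the channel model. Recall the pessimistic convention stated at the top of Section \ref{analysis-section}: throughout the execution of \Sample and \Diagnosis, no packet is permitted to terminate (any success is counted but the packet is kept active for the analysis). Hence every packet that entered the current invocation of \Diagnosis remains active for all $d\sqrt{\CollCost}\ln(\currWin)$ slots of the preceding sample, and the task reduces to showing that in each individual slot every such active packet classifies the slot identically as empty, success, or collision.

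I would then carry out a slot-by-slot case analysis based on the three possibilities permitted by the ternary feedback model (Section \ref{sec:model}). If a slot is empty, then no packet transmitted, so every active packet was listening and by the ternary feedback model observes "empty" directly. If a slot is a success, then exactly one packet transmitted: that packet directly learns of its success from the channel, and every other packet was listening and observes "success" from the ternary feedback. If a slot is a collision, then at least two packets transmitted: each transmitting packet learns that its own transmission failed, and by the channel model (``from which the packet infers a collision occurred'') it classifies the slot as a collision; the remaining packets were listening and directly observe "collision" via ternary feedback. In all three cases, every active packet records the same classification for that slot.

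The final step is to aggregate across slots. Because \Sample updates the local counters \#successes and \#collisions using the slot-classifications above and applies the same deterministic update rule at each packet, and because every packet agrees on the classification of every individual slot (by the case analysis), the three running totals (empty, success, collision) maintained at each packet are identical after every slot of the sample, and in particular at the moment \Diagnosis is invoked.

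The only mild subtlety, and what I would flag as the one point requiring care, is the asymmetry in the feedback available to senders versus listeners: a sender cannot directly observe "empty" (it is transmitting) and a listener cannot directly observe its own success. The case analysis resolves this because a sender in a non-successful slot is, by definition, in a collision slot, and so its inference is always consistent with what listeners observe; and the pessimistic assumption that successful senders remain active ensures that the counter at the successful sender keeps pace with the counters at the listeners for the remainder of the sample. With that caveat handled, the lemma follows.
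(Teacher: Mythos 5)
Your proposal is correct and follows essentially the same argument as the paper: a slot-by-slot case analysis over the ternary feedback (empty, success, collision) showing that senders and listeners always classify a slot identically, so the locally maintained counters agree. The extra remarks about the pessimistic no-early-termination convention and the sender/listener asymmetry are consistent with the paper's model and do not change the argument.
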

\begin{proof}
In each slot, a packet is either sending or listening. First, consider the case where at least one packet sends. Each such sending packet knows whether it succeeded (i.e., a success) or it failed (i.e., a collision). Each non-sending packet listens and hears the success (i.e., a success) or the failure (i.e., a collision). Second, if no packet sends, then every packet is listening and hears the same thing: silence. Therefore, in all cases, each packet witnesses the same outcome of the slot. \qed
\end{proof}

Lemma~\ref{l:same-view} guarantees that all packets are always in the same window (or have the same sending probability) when executing \Diagnosis, and they always take the same action in an execution of \Diagnosis. The following lemmas establish that \Diagnosis takes the {\it correct} action in each execution.

\begin{lemma}\label{l:okay-TTL}
With probability at least $1-O(1/n^{d'})$, if $\currWin \in \tootoolow$, then all packets double their window, where $d'$ is an arbitrarily large constant depending on sufficiently large $d$. 
\end{lemma}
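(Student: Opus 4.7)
The plan is to show that, whenever $\currWin \in \tootoolow$, the number of collisions in the sample is whp large enough to force \Diagnosis into a window-doubling branch, regardless of where the success count falls.

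First I would inspect the pseudocode of \Diagnosis and observe that the window is doubled in exactly three situations: (i) on Line~\ref{alg:double-the-first}, when \#successes lies in the middle interval $(\tfrac{2d\ln(\currWin)}{10^5},\tfrac{d\ln(\currWin)}{20e}]$ \emph{and} \#collisions $\geq \tfrac{d\sqrt{\CollCost}\ln(\currWin)}{8e^2}$; (ii) on Line~\ref{alg:double-blah}, whenever \#successes $> \tfrac{d\ln(\currWin)}{20e}$; and (iii) on Line~\ref{alg:third-else}, when \#successes $\leq \tfrac{2d\ln(\currWin)}{10^5}$ \emph{and} \#collisions $\geq \tfrac{d\sqrt{\CollCost}\ln(\currWin)}{8e^2}$. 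The only two remaining outcomes are executing \Rundown (Line~\ref{alg:execute-rundownfull}) or halving the window (Line~\ref{alg:halve-window}), and both of these require \#collisions $<\tfrac{d\sqrt{\CollCost}\ln(\currWin)}{8e^2}$. Thus it suffices to prove that in the \tootoolow range the collision count whp meets the threshold $\tfrac{d\sqrt{\CollCost}\ln(\currWin)}{8e^2}$: once that is established, every terminal branch of \Diagnosis other than the three doubling branches is ruled out, no matter how the success count turns out.

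Next I would split \tootoolow into two sub-ranges and invoke the previously proved collision lower bounds. For $1 \leq \currWin \leq n/(c\ln n)$ (with $c$ a constant chosen large enough to make $d'$ arbitrary), Lemma~\ref{l:all-collisions-if-smallW} gives that the entire sample consists of collisions with probability at least $1-1/n^{d'}$, so \#collisions $= d\sqrt{\CollCost}\ln(\currWin)$, which trivially exceeds $\tfrac{d\sqrt{\CollCost}\ln(\currWin)}{8e^2}$. For $n/(c\ln n) < \currWin < n$, Lemma~\ref{l:upper-too-too-low} directly gives \#collisions $\geq \tfrac{d\sqrt{\CollCost}\ln(\currWin)}{8e^2}$ with probability at least $1-1/n^{d'}$. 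By Lemma~\ref{l:same-view}, all packets observe the same counts, so they all take the same action. Combining the two sub-cases (only one is active for a given $\currWin$), the probability that the window is doubled is $1 - O(1/n^{d'})$, as claimed.

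The only real subtlety is recognising that the success count need not be controlled at all in \tootoolow: the three doubling branches cover every possible location of \#successes provided the collision threshold is met. Once that observation is in hand, the proof is just a case split plus two invocations of previously established whp bounds, and no fresh Chernoff calculation is required.\qed
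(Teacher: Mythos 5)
Your proof is correct and takes essentially the same route as the paper: split \tootoolow at $n/(c\ln n)$, invoke Lemmas~\ref{l:all-collisions-if-smallW} and \ref{l:upper-too-too-low} to get the collision threshold \whp, and conclude that the window doubles. Your observation that every non-doubling branch of \Diagnosis requires \#collisions below the threshold (so the success count never needs to be controlled) is just a slightly cleaner packaging of the paper's explicit branch-by-branch case analysis.
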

\begin{proof}
    We do a case analysis. For the first case, consider that $\currWin \in [1, n/(c\ln n))$ for some sufficient large constant $c\geq 1$. By Lemma~\ref{l:all-collisions-if-smallW}, with probability at least $1-n^{-d'}$, all slots in the sample contain a collision. Therefore, \whp, the if-statement on Line \ref{alg:first-if} is not entered and, instead, the  matching else-statement on Line \ref{alg:first-else} is entered. Since all sample slots are collisions, the  if-statement on Line \ref{alg:third-else} is entered, and the window doubles.

    For the second case, consider $\currWin \in [n/(c\ln n), n)$. Either we (a) enter the first if-statement on Line \ref{alg:first-if}, or we (b) enter the matching else-statement on Line \ref{alg:first-else}.
    \begin{itemize}
    \item {\bf Subcase (a).} If the second if-statement on Line \ref{alg:second-if} is entered, then we note by Lemma \ref{l:upper-too-too-low}, \whp, that we have at least $d\sqrt{\CollCost}\ln(\currWin)/(8e^2)$  collisions, which means we double the window, as desired. Otherwise, we execute Line \ref{alg:second-else}, and the window doubles.

    \item {\bf Subcase (b).} By Lemma \ref{l:upper-too-too-low} \whp we have at least $d\sqrt{\CollCost}\ln(\currWin)/(8e^2)$  collisions, which means we enter the if-statement on Line \ref{alg:third-else} and double the window. 
    \end{itemize}
    \noindent In both cases, the packet doubles its window. \qed
\end{proof}

\begin{lemma}\label{l:okay-TL}
With probability $1-O(1/n^{d'})$, if $\currWin \in \toolow$, then all packets double their window,  where $d'$ is an arbitrarily large constant depending on sufficiently large $d$. 
\end{lemma}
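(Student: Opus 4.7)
The plan is to apply Lemma~\ref{lem:lower-bound-successes} directly, since the \toolow range $[n, 10n\sqrt{\CollCost})$ fits exactly inside the hypothesis of that lemma with the constant $b=10$. Taking $b=10$, we conclude that with probability at least $1-1/n^{d'}$ the number of successes in the sample strictly exceeds $(d/(2 \cdot 10 \cdot e))\ln(\sampleParam) = \frac{d\ln(\sampleParam)}{20e}$.

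Next, I would compare this lower bound on \#successes to the two thresholds tested inside \Diagnosis. Since $\frac{d\ln(\currWin)}{20e} > \frac{2d\ln(\currWin)}{10^5}$ (because $1/(20e) \approx 0.018$ far exceeds $2/10^5$), the conditional on Line~\ref{alg:first-if} is satisfied, so the algorithm enters that if-branch. Then the conditional on Line~\ref{alg:second-if}, namely \#successes $\leq \frac{d\ln(\currWin)}{20e}$, fails (the lower bound from Lemma~\ref{lem:lower-bound-successes} gives strict inequality in the opposite direction), so the algorithm falls through to the matching else on Line~\ref{alg:second-else} and doubles $\currWin$, as required.

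Finally, I would note that there is nothing else to check: the outcome is determined purely by \#successes, so no auxiliary reasoning about \#collisions is needed in this range. The whole failure event is the single Chernoff-style event from Lemma~\ref{lem:lower-bound-successes}, and that already occurs with probability at most $1/n^{d'}$ for an arbitrarily large constant $d'$ (by choosing $d$ sufficiently large). Thus the stated $1-O(1/n^{d'})$ bound follows directly.

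The only place a subtlety could creep in is the boundary case $\currWin$ near $10n\sqrt{\CollCost}$, but since Lemma~\ref{lem:lower-bound-successes} is stated with weak inequality $w \leq bn\sqrt{\CollCost}$ and our range is strictly less than $10n\sqrt{\CollCost}$, the hypothesis is met with room to spare. Consequently, I do not anticipate a real obstacle — the proof is a short one-lemma invocation followed by comparing two constants against the two algorithmic thresholds.
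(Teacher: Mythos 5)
Your proposal is correct and follows essentially the same route as the paper: invoke Lemma~\ref{lem:lower-bound-successes} with $b=10$ to get more than $\frac{d\ln(\currWin)}{20e}$ successes \whp, observe this clears the threshold on Line~\ref{alg:first-if} but violates the one on Line~\ref{alg:second-if}, so Line~\ref{alg:second-else} doubles the window. The extra explicit comparison of constants and the boundary remark are fine but add nothing beyond the paper's argument.
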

\begin{proof}
By Lemma~\ref{lem:lower-bound-successes} with $b=10$, for $w$ such that $n \leq w < 10 n\sqrt{\CollCost}$, with probability at least $1-n^{-d'}$ the number of successes exceeds $\frac{d}{20e}\ln(\currWin)$. Therefore,  the first if-statement on Line \ref{alg:first-if} is entered, but the next if-statement on Line \ref{alg:second-if} is not. Thus, Line \ref{alg:second-else} will be executed and so the window will double, as claimed. \qed
\end{proof}

The following lemma places an upper bound on the number of collisions when the window is at least $200n\sqrt{\CollCost}$. This allows us to argue that when we are in the \goodrange and \toohigh ranges, we avoid executing the lines in \Diagnosis that would double the window. 

We note that the next lemma gives a very loose upper bound on the number of collisions. In fact, if we expected even a single collision in this range, our claimed upper bound would not hold. The careful reader will note that (in our proof) the expected number of collisions is $\tilde{O}(1/\sqrt{\CollCost})$; however, a loose upper bound that holds with high probability is sufficient to construct the rules for \Diagnosis.

\begin{lemma}\label{l:sizable-window-upper-collisions}
With probability at least $1-1/n^{d'}$, if $\currWin\geq 200n\sqrt{\CollCost}$, then the number of collisions in the sample is at most $(d/90)\ln(\currWin)$, where $d'$ is an arbitrarily large constant depending on sufficiently large $d$.
\end{lemma}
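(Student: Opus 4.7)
The plan is to bound the per-slot collision probability using Lemma~\ref{lem:upper-prob-coll}, compute the expected total number of collisions in the sample, and then apply the upper-tail Chernoff bound in Theorem~\ref{thm:chernoff} to amplify this into a high-probability statement.

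First I would instantiate Lemma~\ref{lem:upper-prob-coll} with $m=n$ packets each sending with probability $p=1/\currWin$. Because $\currWin \geq 200\, n\sqrt{\CollCost}$, the product $mp = n/\currWin$ is at most $1/(200\sqrt{\CollCost}) \leq 1/200$, so $1-mp \geq 1/2$ and the per-slot collision probability is bounded by roughly $4 n^2/\currWin^2 \leq 1/(10^4\,\CollCost)$. Letting $X$ be the total number of collisions in the sample of $d\sqrt{\CollCost}\ln(\currWin)$ slots, linearity of expectation then gives
\[
E[X] \;\leq\; \frac{d\ln(\currWin)}{10^4 \sqrt{\CollCost}} \;\leq\; \frac{d\ln(\currWin)}{10^4},
\]
since $\CollCost \geq 1$.

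Next I would apply Theorem~\ref{thm:chernoff}. The target threshold $(d/90)\ln(\currWin)$ exceeds this upper bound on $E[X]$ by a factor of at least $10^4/90 > 100$, so choosing $\delta$ with $(1+\delta)\,E[X] = (d/90)\ln(\currWin)$ forces $1+\delta > 100$. In this regime $\delta^2/(2+\delta) \geq \delta/2$, so the Chernoff bound yields
\[
Pr\!\left( X \geq \tfrac{d}{90}\ln(\currWin) \right) \;\leq\; \exp\!\left(-\tfrac{\delta}{2}\, E[X]\right) \;\leq\; \currWin^{-\Omega(d)} \;\leq\; n^{-d'},
\]
where the last inequality uses $\currWin \geq n$, and $d'$ can be made arbitrarily large by taking $d$ sufficiently large.

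The main obstacle is the small-mean regime of the Chernoff bound: when $\CollCost$ is polynomially large in $n$, the true $E[X]$ can be far smaller than $1$, so the standard $(1+\delta)E[X]$ form requires a very large $\delta$, and one must verify that the tail bound still produces the needed $\Omega(d\ln(\currWin))$ exponent. I would handle this by invoking Chernoff with an upper estimate on $E[X]$, appealing to the fact that stochastically dominating each Bernoulli $X_i$ by one with larger probability only increases the upper tail. With this in hand, the remainder of the argument is a mechanical check that the constants $200$ in the hypothesis and $90$ in the conclusion are compatible with the $1/(10^4\,\CollCost)$ per-slot bound above, which is the easy part.
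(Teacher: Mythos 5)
Your proposal is correct and follows essentially the same route as the paper's proof: bound the per-slot collision probability via Lemma~\ref{lem:upper-prob-coll} with $p\leq 1/(200n\sqrt{\CollCost})$, take expectations over the $d\sqrt{\CollCost}\ln(\currWin)$-slot sample, and apply the upper-tail bound of Theorem~\ref{thm:chernoff}, finishing with $\currWin\geq n$. The only differences are cosmetic---the paper fixes $\delta=1/10$ against an upper estimate of the mean (using the same implicit monotonicity in the mean that you justify explicitly via stochastic domination), while you choose $\delta$ to hit the threshold $(d/90)\ln(\currWin)$ directly---and both yield the required $n^{-d'}$ failure probability.
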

\begin{proof}
By Lemma~\ref{lem:upper-prob-coll},  the probability of a collision in a slot is at most $\frac{2n^2p^2}{\left(1-np\right)}$, where $p\leq 1/(200n\sqrt{\CollCost})$. Plugging in, we obtain that the probability of collision is upper bounded by:

\begin{align*}
    \frac{2n^2p^2}{\left(1-np\right)} & \leq \frac{2n^2(1/(200n\sqrt{\CollCost}))^2}{\left(1-1/(200\sqrt{\CollCost})\right)}\\
    & = \frac{1}{200 \CollCost (1-1/200)}\\
    & \leq \frac{1}{100\CollCost}
\end{align*}
Let the indicator variable $X_i=1$ if the $i$-th slot is a collision; otherwise, let $X_i=0$.
\noindent Let $X =\sum_{i=1}^{d\sqrt{\CollCost}\ln(\currWin)} X_i$. 
The expected number of collisions in the sample of ${d\sqrt{\CollCost}\ln{\currWin}}$ slots is:\\
\begin{align*}
    E\left[X\right] & \leq E\left[\sum_{i=1}^{d\sqrt{\CollCost}\ln(\currWin)} \hspace{-6pt}X_i\right] \\
    & = \sum_{i=1}^{d\sqrt{\CollCost}\ln(\currWin)} \hspace{-6pt}E[X_i] \mbox{~~~by linearity of expectation} \\
    & \leq \frac{d\ln(\currWin)} {100\sqrt{\CollCost}}\\
    & \leq \frac{d\ln(\currWin)}{100}
\end{align*}
Letting $\delta =1/10$ in Theorem~\ref{thm:chernoff}, we have:\\
\begin{align*}
    Pr\left(X \geq (1+1/10) \frac{d\ln (\currWin)}{100} \right) & \leq Pr\left(X \geq  \frac{d\ln (\currWin)}{90} \right)\\
    &\leq e^{- \frac{(1/10)^2 d\ln(\currWin)}{2+(1/10)}}\\
    & \leq e^{-\frac{d\ln(\currWin)}{210}}\\
     & \leq \frac{1}{\currWin^{d'}}\\
    & \leq \frac{1}{n^{d'}} \mbox{~~~since $\currWin \geq n$}
\end{align*}
\noindent where $d'$ can be arbitrarily large depending only on $d$. \qed
\end{proof}



\noindent We now argue that \Diagnosis takes the correct action in each of the \goodrange and \toohigh ranges.


\begin{lemma}\label{l:okay-GR}
With probability at least $1-O(1/n^{d'})$, if $\currWin \in \goodrange$, then all packets execute \Rundown,  where $d'$ is an arbitrarily large constant depending on sufficiently large $d$.
\end{lemma}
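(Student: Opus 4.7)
The plan is to show three things with high probability when $\currWin \in [200 n\sqrt{\CollCost}, 10^3 n\sqrt{\CollCost})$: (i) the number of successes exceeds $\frac{2d\ln(\currWin)}{10^5}$ so that the outer if on Line \ref{alg:first-if} fires; (ii) the number of successes is at most $\frac{d\ln(\currWin)}{20e}$ so that the inner if on Line \ref{alg:second-if} fires; and (iii) the number of collisions is strictly below $\frac{d\sqrt{\CollCost}\ln(\currWin)}{8e^2}$ so that control flows through to the call to \Rundown on Line \ref{alg:execute-rundownfull}. A union bound over these three high-probability events then yields the lemma.

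For (i) and (ii), write $\currWin = x\, n\sqrt{\CollCost}$ for some $x \in [200, 10^3)$. Lemma \ref{lem:generic-lower-expectation-successes} applied with $\myVar = x$ gives $E[\#\text{successes}] \geq (d\ln(\currWin)/x)\, e^{-1/(x\sqrt{\CollCost})} \geq d\ln(\currWin)/(2\cdot 10^3)$, where the last step uses $x < 10^3$ together with $x\sqrt{\CollCost} \geq 200$, so that the exponential factor exceeds $1/2$. Dually, Lemma \ref{lem:generic-upper-expectation-successes} applied with $\myVar = x$ gives $E[\#\text{successes}] \leq d\ln(\currWin)/200$, since $x \geq 200$. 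Both thresholds in \Diagnosis lie a constant factor away from this expected range: $\frac{2d\ln(\currWin)}{10^5}$ is smaller than the lower bound on the expectation by a factor of $25$, and $\frac{d\ln(\currWin)}{20e} \approx d\ln(\currWin)/54.4$ is larger than the upper bound on the expectation by a factor of $10/e > 3$. Standard upper- and lower-tail Chernoff bounds (Theorem \ref{thm:chernoff}) with appropriate constants $\delta$ then establish (i) and (ii) each with probability at least $1 - 1/n^{d'}$ for an arbitrarily large constant $d'$ when $d$ is sufficiently large.

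For (iii), the claim is essentially immediate from Lemma \ref{l:sizable-window-upper-collisions}: since $\currWin \geq 200 n\sqrt{\CollCost}$, with probability at least $1 - 1/n^{d'}$ the number of collisions in the sample is at most $(d/90)\ln(\currWin)$, which is well below $\frac{d\sqrt{\CollCost}\ln(\currWin)}{8e^2}$ for all $\CollCost \geq 1$. Applying Lemma \ref{l:same-view} to ensure every packet sees the same counts, a final union bound over the three events completes the argument with failure probability $O(1/n^{d'})$.

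The only subtlety will be confirming that the corrections $e^{-1/(x\sqrt{\CollCost})}$ and the Chernoff slack leave enough headroom between the two success thresholds; given that the interval between $\frac{2d\ln(\currWin)}{10^5}$ and $\frac{d\ln(\currWin)}{20e}$ spans more than three orders of magnitude while the expectation fluctuates by a factor of at most five across the \goodrange, this is comfortable. No new analytic tools are needed beyond those already established in Section \ref{sec:diagnosis-correctness}.
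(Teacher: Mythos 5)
Your proposal is correct and follows essentially the same route as the paper: establish the three threshold events (successes above $\frac{2d\ln(\currWin)}{10^5}$, successes below $\frac{d\ln(\currWin)}{20e}$, collisions below $\frac{d\sqrt{\CollCost}\ln(\currWin)}{8e^2}$ via Lemma~\ref{l:sizable-window-upper-collisions}) and union bound. The only cosmetic difference is that you re-derive the success bounds from Lemmas~\ref{lem:generic-lower-expectation-successes} and~\ref{lem:generic-upper-expectation-successes} plus Theorem~\ref{thm:chernoff}, whereas the paper simply invokes the already-packaged Lemmas~\ref{lem:lower-bound-successes} (with $b=10^3$) and~\ref{lem:upper-bound-successes} (with $a=200$), which were proved in exactly that way.
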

\begin{proof}
Since $\currWin\in\goodrange$, this means that $200n\sqrt{\CollCost}  \leq \currWin \leq 10^3n\sqrt{\CollCost}$. By Lemma \ref{lem:lower-bound-successes} with $b=10^3$, the number of successes in the sample exceeds $\frac{d\ln{\currWin}}{2e\,10^3} > \frac{2d\ln(\currWin)}{10^5}$. Therefore, the if-statement on Line \ref{alg:first-if} is entered. By Lemma~\ref{lem:upper-bound-successes} with  $a=200$, the number of  successes is at most  $\frac{2d\ln(\currWin)}{200} = \frac{d\ln(\currWin)}{100} \leq \frac{d\ln(\currWin)}{20e}$; thus, the if-statement on Line \ref{alg:second-if} is entered. By Lemma~\ref{l:sizable-window-upper-collisions},  the number of collisions is at most $(d/90)\ln(\currWin) < d\sqrt{\CollCost}\ln(\currWin)/(8e^2)$, so Line \ref{alg:if-if-if} is not entered, and instead Line \ref{alg:execute-rundownfull} is entered, which executes \Rundown.  \qed
\end{proof}

\begin{lemma}\label{l:okay-TH}
With probability at least $1-O(1/n^{d'})$, if $\currWin \in \toohigh$, then all packets halve their window, where $d'$ is an arbitrarily large constant depending on sufficiently large $d$.
\end{lemma}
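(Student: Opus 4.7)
The plan is to show that when $\currWin \in \toohigh$, execution falls through to Line~\ref{alg:halve-window}, which halves the window. For this to happen, two conditions must be violated: the success-count test on Line~\ref{alg:first-if} must fail (so that we enter the else-branch on Line~\ref{alg:first-else}), and the collision-count test on Line~\ref{alg:third-else} must also fail. So I would establish each of these with high probability and combine them by a union bound.

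First, to bound the number of successes from above, I would apply Lemma~\ref{lem:upper-bound-successes} with $a = 10^5$, which applies directly because $\currWin \geq 10^5 n\sqrt{\CollCost}$. That lemma yields that, with probability at least $1 - 1/n^{d'}$, the number of successes in the sample is at most $(2d/10^5)\ln(\currWin)$. This is exactly the threshold on Line~\ref{alg:first-if}, so the if-conditional fails (with a non-strict inequality handled by the choice of the constant), and execution enters the else-branch on Line~\ref{alg:first-else}.

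Second, to bound the number of collisions from above, I would invoke Lemma~\ref{l:sizable-window-upper-collisions}, which applies whenever $\currWin \geq 200 n\sqrt{\CollCost}$ and in particular for $\currWin \in \toohigh$. That lemma gives, with probability at least $1 - 1/n^{d'}$, at most $(d/90)\ln(\currWin)$ collisions in the sample. A direct comparison shows
\[
\frac{d}{90}\ln(\currWin) \;<\; \frac{d\sqrt{\CollCost}\ln(\currWin)}{8e^2}
\]
for all $\CollCost \geq 1$, since $8e^2/90 < 1$. Thus the conditional on Line~\ref{alg:third-else} fails, and so the else-clause on Line~\ref{alg:halve-window} is executed, halving $\currWin$.

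Combining the two high-probability events by a union bound yields that all packets halve their window with probability at least $1 - O(1/n^{d'})$, where $d'$ is an arbitrarily large constant depending on sufficiently large $d$. By Lemma~\ref{l:same-view}, all packets witness identical feedback and therefore take this identical action in lockstep. No step poses a real obstacle: the lemma is essentially a packaging of Lemmas~\ref{lem:upper-bound-successes} and~\ref{l:sizable-window-upper-collisions} with appropriate constants, the only minor care being to check that the thresholds in \Diagnosis are chosen so that both conditionals fail in this range. \qed
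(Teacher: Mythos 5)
Your proposal is correct and follows essentially the same route as the paper's proof: invoke Lemma~\ref{lem:upper-bound-successes} with $a=10^5$ to fail the conditional on Line~\ref{alg:first-if}, then Lemma~\ref{l:sizable-window-upper-collisions} to fail Line~\ref{alg:third-else}, so Line~\ref{alg:halve-window} executes, with the error probabilities combined by a union bound. The extra appeal to Lemma~\ref{l:same-view} for lockstep behavior is consistent with the paper's framework and poses no discrepancy.
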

\begin{proof}
If $\currWin \in \toohigh$, then $\currWin > 10^5 n\sqrt{\CollCost}$. By  Lemma~\ref{lem:upper-bound-successes} with $a=10^5$, the number of successes is at most $(2d/10^5)\ln (\currWin)$. Thus, the else-statement on Line \ref{alg:first-else} is entered. By Lemma \ref{l:sizable-window-upper-collisions}, the number of collisions is at most $(d/90)\ln(\currWin) < \frac{d\sqrt{\CollCost}\ln(\currWin)}{8e^2}$, and so Line \ref{alg:halve-window} is entered and the window is halved. The total error associated with the lemmas invoked is $O(1/n^{d'})$. \qed
\end{proof}

\noindent Finally, what about the ``uncertain'' intervals $[10 n\sqrt{\CollCost}, 200 n\sqrt{\CollCost})$ and $(10^3 n\sqrt{\CollCost}, 10^5 n\sqrt{\CollCost}]$? Our next two lemmas address this aspect by showing that the \Diagnosis revises the current window towards the \goodrange or executes \Rundown  directly.

\begin{lemma}\label{l:limbo-low}
With probability at least $1-O(1/n^{d'})$, if $\currWin \in \uncertainA$, then either the window doubles or \Rundown is executed, where $d'$ is an arbitrarily large constant depending on sufficiently large $d$.
\end{lemma}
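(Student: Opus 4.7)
The plan is to show that whenever $\currWin \in \uncertainA$, the top-level conditional on Line \ref{alg:first-if} of \Diagnosis is satisfied with high probability, and that every branch nested inside that conditional either doubles $\currWin$ or invokes \Rundown. Since the only line that halves the window (Line \ref{alg:halve-window}) is located under the matching else-branch beginning on Line \ref{alg:first-else}, entering the if-branch suffices to rule out the undesirable outcome.

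First, I would apply Lemma \ref{lem:lower-bound-successes} with $b = 200$. This is valid because $\uncertainA \subseteq [n, 200n\sqrt{\CollCost})$: the lower endpoint $10n\sqrt{\CollCost}$ exceeds $n$ as long as $\CollCost \geq 1$, and the upper endpoint is exactly $200n\sqrt{\CollCost}$. The lemma then yields, with probability at least $1 - 1/n^{d'}$, that the number of successes in the sample exceeds $(d/(400e))\ln(\currWin)$. A short numerical comparison shows $(d/(400e))\ln(\currWin) > (2d/10^5)\ln(\currWin)$, so the threshold on Line \ref{alg:first-if} is cleared and the if-branch is taken.

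Second, I would walk through the three possible actions reachable inside that if-branch. If the inner condition on Line \ref{alg:second-if} fails (too many successes), then Line \ref{alg:double-blah} doubles $\currWin$. If it succeeds, control passes to Line \ref{alg:if-if-if}, where either the collision count meets the threshold and Line \ref{alg:double-the-first} doubles $\currWin$, or it does not and Line \ref{alg:execute-rundownfull} invokes \Rundown. In every case, the conclusion of the lemma holds.

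No step is a serious obstacle; the crux is simply the choice of constant $b$ in Lemma \ref{lem:lower-bound-successes}. One must pick $b = 200$ (the upper endpoint of $\uncertainA$) so that the hypothesis $\currWin \leq bn\sqrt{\CollCost}$ holds uniformly over the range, while still producing a lower bound on successes that clears the constant $2/10^5$ in Line \ref{alg:first-if}; verifying these two numerical inequalities simultaneously is the only computation requiring care. The total error probability is $O(1/n^{d'})$ from the single invocation of Lemma \ref{lem:lower-bound-successes}.
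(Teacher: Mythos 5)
Your proposal is correct and follows essentially the same route as the paper's own proof: invoke Lemma~\ref{lem:lower-bound-successes} with $b=200$ to show that the success count exceeds the threshold $\frac{2d\ln(\currWin)}{10^5}$ on Line~\ref{alg:first-if}, and then observe that every branch reachable inside that if-statement (Lines~\ref{alg:second-if} through~\ref{alg:double-blah}) either doubles the window or executes \Rundown. Your explicit check of the numerical comparison $\frac{d}{400e} > \frac{2d}{10^5}$ and of the range containment matches what the paper does implicitly.
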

\begin{proof}
Since $\currWin \in \uncertainA$, we have that $\currWin \in [10 n\sqrt{\CollCost}, 200 n\sqrt{\CollCost})$. By Lemma \ref{lem:lower-bound-successes} with $b=200$ (the part of \uncertainA where successes are most scarce), the number of successes in the sample still exceeds $\frac{d \ln{\currWin}}{400e}$, which exceeds $\frac{2d\ln(\currWin)}{10^5}$, so Line \ref{alg:first-if} will be true.   For the remaining lines that may be executed---Lines \ref{alg:second-if} to \ref{alg:double-blah}--- we either double the window (due to many collisions) or execute \Rundown, as claimed. \qed
\end{proof}



\begin{lemma}\label{l:limbo-high}
With probability at least $1-O(1/n^{d'})$, if $\currWin \in \uncertainB$, then either the window halves or \Rundown is executed, where $d'$ is an arbitrarily large constant depending on sufficiently large $d$.
\end{lemma}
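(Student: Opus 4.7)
The plan is to mirror the structure of the proof of Lemma~\ref{l:limbo-low}, but argue in terms of upper bounds on both successes and collisions rather than a lower bound on successes. The key observation is that throughout all of \uncertainB we have $\currWin \geq 10^3 n\sqrt{\CollCost} \geq 200 n\sqrt{\CollCost}$, so Lemma~\ref{l:sizable-window-upper-collisions} applies directly: with probability at least $1-1/n^{d'}$, the sample contains at most $(d/90)\ln(\currWin)$ collisions, which is strictly less than $\frac{d\sqrt{\CollCost}\ln(\currWin)}{8 e^2}$ for any $\CollCost \geq 1$. Consequently, the two collision checks in \Diagnosis (Lines~\ref{alg:if-if-if} and~\ref{alg:third-else}) both fail, so no doubling can occur from either of those branches.

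Next, I would apply Lemma~\ref{lem:upper-bound-successes} with $a=10^3$ to upper-bound the number of successes in the sample by $(2d/10^3)\ln(\currWin) = (d/500)\ln(\currWin)$ with probability at least $1-1/n^{d'}$. A direct numeric check shows $(d/500)\ln(\currWin) < \frac{d\ln(\currWin)}{20 e}$ since $20e < 500$, so the second if-statement on Line~\ref{alg:second-if} is guaranteed to be satisfied whenever the outer if-statement on Line~\ref{alg:first-if} is entered. The doubling branch on Line~\ref{alg:double-blah}, which is reached only when Line~\ref{alg:second-if} fails, is therefore also ruled out.

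Now do the case analysis on Line~\ref{alg:first-if}. If the number of successes exceeds $\frac{2d\ln(\currWin)}{10^5}$, then by the previous paragraph Line~\ref{alg:second-if} holds, and by the first paragraph Line~\ref{alg:if-if-if} fails, so control flows to Line~\ref{alg:execute-rundownfull} and \Rundown is executed. Otherwise, control passes to the else-branch on Line~\ref{alg:first-else}; again by the first paragraph Line~\ref{alg:third-else} fails, so control flows to Line~\ref{alg:halve-window} and the window is halved. Either way the conclusion of the lemma holds.

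A union bound over the two high-probability events (the upper bound on successes from Lemma~\ref{lem:upper-bound-successes} and the upper bound on collisions from Lemma~\ref{l:sizable-window-upper-collisions}) yields the claimed failure probability $O(1/n^{d'})$ for any $d'$ taken sufficiently large by choosing $d$ sufficiently large. There is no real obstacle here; the only thing to verify carefully is that the numeric constants $\{10^3, 10^5, 20e, 8e^2, 90\}$ assemble consistently so that, inside \uncertainB, the high-probability empirical counts always fall on the ``no doubling'' side of every threshold. This is exactly the numeric content of the argument, and it closely parallels the constant-chasing already carried out for Lemmas~\ref{l:okay-GR} and~\ref{l:limbo-low}.
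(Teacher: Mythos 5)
Your proposal is correct and follows essentially the same route as the paper's proof: rule out every doubling branch using Lemma~\ref{l:sizable-window-upper-collisions} (valid throughout \uncertainB since $\currWin \geq 200n\sqrt{\CollCost}$) for the collision thresholds and Lemma~\ref{lem:upper-bound-successes} with $a=10^3$ for the success threshold on Line~\ref{alg:second-if}, then observe that whichever way the conditional on Line~\ref{alg:first-if} resolves, the only remaining outcomes are executing \Rundown or halving the window. Your explicit case split on Line~\ref{alg:first-if} and the constant checks ($20e<500$ and $1/90<\sqrt{\CollCost}/(8e^2)$ for $\CollCost\geq 1$) match the paper's constant-chasing, so no gap remains.
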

\begin{proof}
We prove this by ruling out all the ways in which $\currWin$ could (incorrectly) double. 

Unlike our argument for $\currWin\in\goodrange$, we cannot guarantee that Line  \ref{alg:first-if} will be executed, since our window might be large, say just shy of $10^5 n\sqrt{\CollCost}$, (recall that  \uncertainB $=[10^3 n\sqrt{\CollCost}, 10^5 n\sqrt{\CollCost})$), and we cannot prove a sufficient number of successes. However, if we proceed to Line \ref{alg:first-else}, then by Lemma \ref{l:sizable-window-upper-collisions}, the number of collisions is at most $(d/90)\ln(\currWin)$, which means there are not enough collisions to meet the if-statement condition on \ref{alg:third-else}, and thus the window will not be doubled here.

Another way in which the window can (incorrectly) double is via Line \ref{alg:second-else}. However, we only execute Line \ref{alg:second-else} if Line \ref{alg:second-if} fails to hold; that is, in the case that the sample contains {\it more} than than $\frac{d\ln(\currWin)}{20e}$ successes. However, by Lemma~\ref{lem:upper-bound-successes} with $a=10^3$ (the part of \uncertainB where successes are most likely), the sample contains at most $\frac{2d\ln(\currWin)}{10^3}  < \frac{d\ln(\currWin)}{20e}$ successes, and so  Line \ref{alg:second-if} will indeed execute.

Finally, in the case that Line \ref{alg:second-if} executes, we do not have enough collisions to satisfy Line \ref{alg:third-else}, again by  Lemma \ref{l:sizable-window-upper-collisions}. Thus, the window cannot double via Line \ref{alg:double-the-first}

We have shown that, when $\currWin \in \uncertainB$, either the window must halve, or  \Rundown is executed, as claimed. \qed
\end{proof}

\subsection{Cost Analysis for \Sample} \label{sec:sample-cost}

\noindent The next lemmas establish bounds on the expected collision cost for executing \Sample. To this end, we employ the results established above, which hold with probability at least $1 - 1/n^{d'}$ for as large a constant {\boldmath{$d$}} $'\geq 1$ as we wish, depending only on sufficiently large $d$. Thus, the results in this section also hold with this high probability, and for ease of presentation, we omit this from our technical lemma statements and arguments.

\begin{lemma}\label{lm-tl-exp-collisions}
If $\currWin \in$ \tootoolow $\cup$ \toolow, the execution of \Sample has  $O(n\sqrt{\CollCost}\ln(n))$ expected collision cost.
\end{lemma}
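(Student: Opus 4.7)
The approach is to bound the expected number of collisions in one invocation of \Sample (which consists of $d\sqrt{\CollCost}\ln(\currWin)$ slots), multiply by $\CollCost$, and compare to the target $O(n\sqrt{\CollCost}\ln n)$. Since $\CollCost=\poly(n)$ and $\currWin<10n\sqrt{\CollCost}$ in the ranges under consideration, we have $\ln(\currWin)=O(\ln n)$ throughout. I would split the argument into two regimes, according to whether $\CollCost\leq n$ or $\CollCost>n$.

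In the first regime, $\CollCost\leq n$, I would use the trivial bound that the total number of collisions is at most the sample size. This yields an expected cost of at most $d\CollCost^{3/2}\ln(\currWin)=O(\CollCost^{3/2}\ln n)$. Using $\CollCost\leq n$ to write $\CollCost^{3/2}=\CollCost\sqrt{\CollCost}\leq n\sqrt{\CollCost}$, we immediately obtain the desired bound. This handles both $\currWin\in\tootoolow$ and $\currWin\in\toolow$ in this regime without further distinction, and crucially does not require any refined collision-probability estimate.

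In the second regime, $\CollCost>n$, the trivial bound is too weak, so I would first establish a trajectory claim: the algorithm never invokes \Sample at $\currWin<\CollCost$. The starting window is $\currWin=\CollCost$, and under correct \Diagnosis feedback (guaranteed \whp), Lemma~\ref{l:okay-TL} forces the window only to double within \toolow, while Lemmas~\ref{l:okay-TH} and~\ref{l:limbo-high} confine all halvings to \toohigh and \uncertainB. Since halving from any window of size at least $10^3 n\sqrt{\CollCost}$ cannot drop it below $500n\sqrt{\CollCost}$, and Lemma~\ref{l:okay-GR} triggers \Rundown upon entering \goodrange rather than halving, $\currWin$ never descends into \toolow from above. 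Combined with the fact that the initial $\currWin=\CollCost>n$ already lies in \toolow or higher, this rules out $\currWin\in\tootoolow$ in this regime and forces every \toolow-sample to satisfy $\currWin\geq\CollCost$. For such a sample, the per-slot collision probability is at most $\binom{n}{2}(1/\currWin)^2\leq n^2/(2\CollCost^2)$. Summing over the sample and multiplying by $\CollCost$ yields an expected cost of $O(n^2\ln(n)/\sqrt{\CollCost})$, which is $O(n\sqrt{\CollCost}\ln n)$ since $\CollCost>n$ implies $n^2/\sqrt{\CollCost}\leq n\sqrt{\CollCost}$.

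The main obstacle I anticipate is the trajectory argument in the second case: certifying that halvings always terminate strictly above \toolow. This requires systematically consulting the range-diagnosis lemmas (\ref{l:okay-TTL}--\ref{l:limbo-high}) to rule out any descent path from \toohigh or \uncertainB down to $\currWin<\CollCost$, and verifying that \Rundown is invoked upon reaching \goodrange before another halving can occur. The remainder of the argument then reduces to the elementary per-slot collision estimate already employed.
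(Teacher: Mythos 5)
Your proposal is correct and follows essentially the same route as the paper's proof: the same case split on $\CollCost$ versus (roughly) $n$, the trivial all-collisions bound in the small-$\CollCost$ case, and in the large-$\CollCost$ case the same trajectory argument that any \toolow sample has $\currWin\geq\CollCost$ (initial window $\CollCost$, only doublings within \toolow, no descent from above) followed by a per-slot collision-probability estimate. The only cosmetic differences are that you use the elementary pairwise union bound $\binom{n}{2}(1/\currWin)^2$ where the paper invokes Lemma~\ref{lem:upper-prob-coll} (hence the paper's threshold $2n$ rather than $n$, to ensure $np<1/2$ there), and you spell out the no-descent argument in more detail than the paper does.
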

\begin{proof} We prove this using a case analysis. \medskip

\noindent{\bf Case 1: {\boldmath{$\CollCost \leq 2n$}}.} This captures all of \tootoolow and the left endpoint of \toolow. Here, even if the entire sample consists of collisions, then the cost is at most:
\begin{align*}
(d\sqrt{\CollCost}\ln(\currWin)) \CollCost & \leq (d \sqrt{\CollCost}\ln(n\sqrt{\CollCost})) 2n\\
&= O(n\sqrt{\CollCost}\ln(n))
\end{align*}

\noindent{\bf Case 2: {\boldmath{$\CollCost > 2n$}}.} With high probability,  $\currWin \in$ \toolow only if the initial window lies in \toolow. (With high probability, we cannot have started in \tootoolow and moved up, since we start at a window of size $\CollCost > n$ in this case.)

The probability of a collision is maximized when the window is smallest; that is, when the window has size $\CollCost$ (which is the initial window size), since \whp the window only increases (given that we are in \toolow). By Lemma \ref{lem:upper-prob-coll}, the probability of a collision is at most:
\begin{align*}
    \frac{2n^2p^2}{1-np} &\leq   \frac{2n^2 (1/\CollCost)^2}{1-n(1/\CollCost)}\\
    & \leq 4n^2/\CollCost^2  \mbox{~~~since $\CollCost > 2n$}
\end{align*}
\noindent where the equation from Lemma \ref{lem:upper-prob-coll} is applicable, since $p=1/\CollCost < 1/(2n)$. Thus, expected cost is at most:
\begin{align*}
   \left( d\sqrt{\CollCost}\ln(\currWin)\right) \left(\frac{4n^2}{\CollCost^2}\right) \CollCost &= \frac{4dn^2\ln(\currWin)}{\sqrt{\CollCost}}\\
   &< \frac{4 d n\CollCost\ln(\currWin)}{\sqrt{\CollCost}} \mbox{~~~since $\CollCost > 2n$}\\
   &= 4d n\sqrt{\CollCost}\ln(\currWin)\\
   &= 4d n\sqrt{\CollCost}\ln(10n\sqrt{\CollCost}) \mbox{~~~since $\currWin  \in$ \toolow}\\
   &=O(n\sqrt{\CollCost}\ln(n)) \mbox{~~~since $\mathcal{\CollCost} = O(\poly(n))$}
\end{align*}
\noindent as claimed. \qed
\end{proof}

\begin{lemma}\label{lm-goodrange-exp-collisions}
If $\currWin \in \uncertainA \cup \goodrange \cup \uncertainB$, then the execution of \Sample has $O(\sqrt{\CollCost}\ln(n))$ expected collision cost.
\end{lemma}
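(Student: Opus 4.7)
The plan is to bound the per-slot collision probability using Lemma~\ref{lem:upper-prob-coll}, then sum by linearity and multiply by the cost $\CollCost$ per collision. Throughout these ranges we have $\currWin \geq 10n\sqrt{\CollCost}$, so the sending probability $p = 1/\currWin$ satisfies $np \leq 1/(10\sqrt{\CollCost}) \leq 1/10$. In particular $p < 1/n$, so Lemma~\ref{lem:upper-prob-coll} applies and the probability of a collision in any fixed slot is at most
\begin{align*}
\frac{2n^2 p^2}{1-np} \;\leq\; \frac{2n^2\,(1/(10n\sqrt{\CollCost}))^2}{1 - 1/10} \;=\; O\!\left(\frac{1}{\CollCost}\right).
\end{align*}

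Next, I would apply linearity of expectation over the $d\sqrt{\CollCost}\ln(\currWin)$ slots of the sample. Letting $X_i$ indicate a collision in the $i$-th slot and $X = \sum_i X_i$, we obtain
\begin{align*}
E[X] \;\leq\; d\sqrt{\CollCost}\ln(\currWin) \cdot O(1/\CollCost) \;=\; O\!\left(\frac{\ln(\currWin)}{\sqrt{\CollCost}}\right).
\end{align*}
Multiplying by the per-collision cost $\CollCost$ then gives an expected collision cost of $O(\sqrt{\CollCost}\,\ln(\currWin))$.

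Finally, since $\currWin \in \uncertainA \cup \goodrange \cup \uncertainB$ we have $\currWin < 10^5 n\sqrt{\CollCost}$, and since $\CollCost \leq n^{\kappa}$ by the model assumption, $\ln(\currWin) = O(\ln n)$. This yields the claimed expected collision cost of $O(\sqrt{\CollCost}\ln(n))$, completing the proof. The argument is essentially a direct computation; the only real step requiring care is verifying that the hypothesis $p < 1/n$ of Lemma~\ref{lem:upper-prob-coll} holds throughout these ranges, which is immediate from the range definitions, and that $\ln(\currWin)$ collapses to $\ln n$ up to constants via $\CollCost = \poly(n)$.
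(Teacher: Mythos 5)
Your proof is correct and follows essentially the same route as the paper's: bound the per-slot collision probability via Lemma~\ref{lem:upper-prob-coll} by $O(1/\CollCost)$, sum over the $d\sqrt{\CollCost}\ln(\currWin)$ sample slots by linearity of expectation, multiply by $\CollCost$, and collapse $\ln(\currWin)$ to $O(\ln n)$ using $\CollCost = \poly(n)$. Your instantiation with $p \leq 1/(10n\sqrt{\CollCost})$, giving denominator $1-np \geq 9/10$, is in fact slightly cleaner than the paper's bound with denominator $1-1/\sqrt{\CollCost}$, which degenerates at $\CollCost=1$.
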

\begin{proof}
By Lemma~\ref{lem:upper-prob-coll} with $m = n$ and $p\leq \frac{1}{n\sqrt{\CollCost}}$, the probability of a collision is at most:
 \begin{align*}
      \frac{2m^2p^2}{1-mp} &\leq  \frac{2n^2\frac{1}{n^2\CollCost}}{1-\frac{n}{n\sqrt{\CollCost}}} \mbox{~~~since this value is largest for $p=\frac{1}{n\sqrt{\CollCost}}$}\\
      & =\frac{2}{\CollCost\left(1-\frac{1}{\sqrt{\CollCost}}\right)}\\
      &= O(1/\CollCost) 
    \end{align*}

\noindent Let $X=\sum_{i=1}^{d\,\sqrt{\CollCost}\ln(\currWin)} X_i$. The expected number of collisions in the sample of ${d\,\sqrt{\CollCost}\ln(\currWin)}$ slots is:\\
\begin{align*}
    E\left[X\right] & \geq E\left[\sum_{i=1}^{d\,\sqrt{\CollCost}\ln(\currWin)} X_i\right]\\ 
      & = \sum_{i=1}^{d\sqrt{\CollCost}\ln(\currWin)} \hspace{-6pt}E[X_i] \mbox{~~~by linearity of expectation} \\
    &= O\left(\frac{d\,\sqrt{\CollCost}\ln(\currWin)} {\CollCost}\right)\\ 
    & = O\left(\frac{d\ln(n)}{\sqrt{\CollCost}}\right) \mbox{~~~since $\CollCost=\poly(n)$}
\end{align*}
\noindent Thus, the expected collision cost is at most:
\begin{align*}
    O\left(\frac{d\ln(n)}{\sqrt{\CollCost}}\right) \CollCost = O(\sqrt{\CollCost} \ln(n) )
\end{align*}
\noindent as claimed.\qed
\end{proof}

\begin{lemma}\label{lm-th-exp-collisions}
If $\currWin \in$ \toohigh, the execution of \Sample results in an $O(\sqrt{\CollCost}\ln(n))$ expected collision cost.
\end{lemma}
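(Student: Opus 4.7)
The plan is to mirror the argument of Lemma \ref{lm-goodrange-exp-collisions}, exploiting the even smaller per-slot sending probability that \toohigh enforces. First, I would apply Lemma \ref{lem:upper-prob-coll} with $m = n$ and $p = 1/\currWin \leq 1/(10^5 n\sqrt{\CollCost})$. Since $np \leq 1/(10^5\sqrt{\CollCost})$, the denominator $1 - np$ in the bound is at least $1 - 1/10^5$, and substitution yields a per-slot collision probability of at most $\frac{2n^2/\currWin^2}{1 - np} = O(1/\CollCost)$.

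Second, by linearity of expectation over the $d\sqrt{\CollCost}\ln(\currWin)$ slots of the sample, the expected number of collisions is $O(\ln(\currWin)/\sqrt{\CollCost})$; multiplying by the per-collision cost $\CollCost$ gives an expected collision cost of $O(\sqrt{\CollCost}\ln(\currWin))$.

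The main obstacle---and the only step that differs qualitatively from the \goodrange case---is replacing $\ln(\currWin)$ with $O(\ln n)$, because by definition \toohigh is unbounded above. I would handle this by examining how \malg can reach \toohigh. Combining Lemmas \ref{l:okay-TTL}, \ref{l:okay-TL}, \ref{l:okay-GR}, \ref{l:okay-TH}, \ref{l:limbo-low}, and \ref{l:limbo-high}, doubling from any lower range stops short of \toohigh (for instance, doubling from \uncertainA lands at at most $400n\sqrt{\CollCost}\in\goodrange$), while from \goodrange and \uncertainB the algorithm either halves or executes \Rundown. Hence, w.h.p., \malg is in \toohigh only because the initial assignment $\currWin\leftarrow\CollCost$ placed it there, after which the window only halves while it stays in \toohigh. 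In either case $\currWin \leq \CollCost \leq n^{\kappa}$, so $\ln(\currWin) = O(\ln n)$, yielding the claimed $O(\sqrt{\CollCost}\ln n)$ bound.
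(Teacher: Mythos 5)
Your proposal is correct and follows essentially the same route as the paper: Lemma \ref{lem:upper-prob-coll} with $p\leq 1/(10^5 n\sqrt{\CollCost})$ gives a per-slot collision probability of $O(1/\CollCost)$, linearity of expectation over the $d\sqrt{\CollCost}\ln(\currWin)$ sample slots gives $O(\ln(\currWin)/\sqrt{\CollCost})$ expected collisions, and multiplying by $\CollCost$ yields the bound. Your explicit argument that $\currWin\leq\CollCost\leq n^{\kappa}$ whenever the window is in \toohigh (since doubling never enters that range and the window only halves there) is a more careful justification of the step the paper dispatches with ``since $\CollCost=\poly(n)$,'' and it is a welcome tightening rather than a deviation.
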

\begin{proof}
For \toohigh, by using Lemma ~\ref{lem:upper-prob-coll}, we plug in the value $m = n$ (number of packets) and $p\leq \frac{1}{10^5n\sqrt{\CollCost}}$ (sending probability). The probability of a collision is at most:
 \begin{align*}
      \frac{2m^2p^2}{1-mp} & \leq \frac{2n^2\frac{1}{n^2 10^{10}\CollCost}}{1-\frac{n}{10^5 n\sqrt{\CollCost}}}  \mbox{~~~since this value is largest for $p=\frac{1}{10^5n\sqrt{\CollCost}}$}\\
      & =\frac{2}{10^{10}\CollCost\left(1-\frac{1}{10^{5}\sqrt{\CollCost}}\right)}\\
      & = \Theta\left( \frac{1}{\CollCost}\right)
    \end{align*}

We have a sample of size $d\,\sqrt{\CollCost}\ln(\currWin)$. \noindent Let $X=\sum_{i=1}^{d\,\sqrt{\CollCost}\ln(\currWin)} X_i$. The expected number of collisions in the sample of ${d\sqrt{\CollCost}\ln(\currWin)}$ slots is:\\
\begin{align*}
    E\left[X\right] & = E\left[\sum_{i=1}^{d\sqrt{\CollCost}\ln (\currWin)} X_i\right] \\
       & = \sum_{i=1}^{d\sqrt{\CollCost}\ln(\currWin)} \hspace{-6pt}E[X_i] \mbox{~~~by linearity of expectation} \\
    & = O\left(\frac{d\sqrt{\CollCost}\ln(\currWin)} {C}\right)\\
    & =O\left(\frac{d\ln(n)}{\sqrt{\CollCost}}\right) \mbox{~~~since $\CollCost=\poly(n)$.}
\end{align*}
Thus, multiplying by $\CollCost$, we obtain that the expected collision cost is at most $O(\sqrt{\CollCost} \ln(n))$. \qed
\end{proof}

\begin{lemma}\label{lem:correctness-cost-collectsample-diagnosis}
The executions of \Sample and \Diagnosis guarantee that: (i) \Rundown is executed within  $O(\sqrt{\CollCost}\log^2(n) )$ slots,  and (ii) the total expected collision cost until that point is $O\left(n\sqrt{\CollCost} \log^2(n)\right)$.
\end{lemma}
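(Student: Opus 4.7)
The plan is to track the evolution of $\currWin$ through the six ranges defined in Section \ref{sec:algorithm}, bounding both the number of invocations of \Sample/\Diagnosis and the expected cost per invocation, then combining these via linearity of expectation and a union bound.

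First, I would argue that $\currWin$ moves essentially monotonically toward \goodrange (and thus toward invoking \Rundown) with high probability. By Lemmas \ref{l:okay-TTL}, \ref{l:okay-TL}, and \ref{l:limbo-low}, whenever $\currWin \in$ \tootoolow $\cup$ \toolow $\cup$ \uncertainA, either \Rundown executes or the window doubles; symmetrically, by Lemmas \ref{l:okay-TH} and \ref{l:limbo-high}, whenever $\currWin \in$ \toohigh $\cup$ \uncertainB, either \Rundown executes or the window halves; and by Lemma \ref{l:okay-GR}, \goodrange triggers \Rundown directly. A short numerical check on the range boundaries confirms that doubling from anywhere in \uncertainA lands in \uncertainA $\cup$ \goodrange, while halving from anywhere in \uncertainB lands in \goodrange $\cup$ \uncertainB, so the direction of motion never reverses before \Rundown is invoked. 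Since $\currWin$ is initialized to $\CollCost$ and the target \goodrange is $\Theta(n\sqrt{\CollCost})$, with $\CollCost \leq n^{\kappa}$, at most $O(\log(n^{\kappa})) = O(\log n)$ doublings or halvings take place before \Rundown begins.

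Second, for claim (i), each invocation of \Sample consists of $d\sqrt{\CollCost}\ln(\currWin) = O(\sqrt{\CollCost}\log n)$ slots (using $\currWin = \poly(n)$). Multiplying by the $O(\log n)$ invocations yields the stated $O(\sqrt{\CollCost}\log^2 n)$ bound on the number of slots until \Rundown is executed. A union bound over these $O(\log n)$ invocations preserves the high-probability guarantee, since each relevant correctness lemma in Section \ref{sec:diagnosis-correctness} holds with probability $1 - O(1/n^{d'})$ for an arbitrarily large constant $d'$ depending on $d$.

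Third, for claim (ii), I would plug in the per-sample cost bounds from Lemmas \ref{lm-tl-exp-collisions}, \ref{lm-goodrange-exp-collisions}, and \ref{lm-th-exp-collisions}. The expected collision cost of a single invocation of \Sample is at most $O(n\sqrt{\CollCost}\log n)$ when $\currWin \in$ \tootoolow $\cup$ \toolow, and only $O(\sqrt{\CollCost}\log n)$ in any other range. Applying the worst-case bound over all $O(\log n)$ invocations and using linearity of expectation delivers the target $O(n\sqrt{\CollCost}\log^2 n)$.

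The main obstacle is the monotonicity argument underlying step one: one must verify that the update rules of \Diagnosis prevent $\currWin$ from oscillating, in particular by checking that doubling from the top of \uncertainA and halving from the bottom of \uncertainB do not overshoot \goodrange into a range whose rule would reverse the direction of motion. This follows directly from the factor-of-$2$ updates versus the at-least-factor-of-$5$ gaps separating \uncertainA from \uncertainB through \goodrange, so the case check is short but essential to avoid circular behavior that would invalidate the $O(\log n)$ iteration bound.
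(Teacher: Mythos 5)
Your proposal is correct and follows essentially the same route as the paper: the same correctness lemmas establish that each \Diagnosis invocation either doubles, halves, or launches \Rundown, giving $O(\log n)$ samples from the initial window $\CollCost$, and the slot and expected-cost bounds follow by multiplying the per-sample bounds of Lemmas \ref{lm-tl-exp-collisions}, \ref{lm-goodrange-exp-collisions}, and \ref{lm-th-exp-collisions} by the number of samples. Your explicit check that doubling out of \uncertainA and halving out of \uncertainB cannot overshoot \goodrange is a small extra verification the paper leaves implicit in its ``always making progress'' claim, but it does not change the argument.
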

\begin{proof}
We first bound the number of slots, and then we bound the expected collision cost.\medskip

\noindent{\it (i) Bound on Number of Slots.} By Lemmas \ref{l:okay-TTL}, \ref{l:okay-TL}, and \ref{l:limbo-low}, \Diagnosis correctly doubles the window or executes \Rundown. Similarly, by Lemmas \ref{l:okay-TH} and \ref{l:limbo-high},  \Diagnosis correctly halves the window or executes \Rundown. Therefore, \Diagnosis is always making progress towards reaching  \goodrange.   

How many window halvings or doublings are required to reach  \goodrange? We start at an initial window of size $\CollCost$, and either \goodrange is smaller---in which case, we must perform  $O(\log(\CollCost))=O(\log(n))$ samples---or larger---in which case, we must again perform $O(\log (n\sqrt{\CollCost})) = O(\log(n))$ samples. Thus, the number of slots until \goodrange is reached is at most the sample size multiplied by this number of samples, which is $O(\sqrt{\CollCost}\log^2(n))$. Then, by Lemma~\ref{l:okay-GR}, \Rundown is executed.\medskip

\noindent{\it (ii) Bound on Expected Collision Cost.} Let $S_i$ denote the collision cost for sample $i$ when executing \Diagnosis, where $i=1, ..., h$, where $h=O(\log(n))$ is the number of samples. Denote the  total cost from collisions by $S=\sum_{i=1}^h S_i$. By 
Lemmas \ref{lm-tl-exp-collisions}, \ref{lm-goodrange-exp-collisions}, and \ref{lm-th-exp-collisions},  we have:
\begin{align*}
    E[S] &= O\left(h\,n\sqrt{\CollCost} \log(n) \right)\\
    & = O\left(n\sqrt{\CollCost} \log^2(n)\right)
\end{align*}
as claimed. \qed
\end{proof}


\subsection{Correctness and Cost Analysis for \Rundown}\label{sec:rundownfull}

\noindent A single execution of lines \ref{alg:run-start} to \ref{alg:run-end} in \Rundown of \mainAlgorithm is called a \defn{run}. In any run, all packets operate over a window of size $\currWin$ and send themselves in each slot with probability $2/\currWin$. At the end of the window, those packets that succeeded will terminate, while the remaining packets will carry on into the next window, which has size $\currWin/2$.  

Our overall goal is to show that (roughly) a logarithmic in $n$ runs are sufficient for all packets to succeed. To this end, our next lemma (below) argues that in each run at least half of the remaining packets succeed. Again, throughout this section, our results hold with a tunable probability of error that is polynomially small in $n$, and we omit this in our lemma statements.

\begin{lemma}\label{l:single-rundown}
Suppose $m$ packets execute a window of \Rundown\hspace{-2pt}($\currWin$), where $\gamma\ln n\leq m\leq n$, $\gamma\geq 1$ is a positive constant,  and $\currWin\geq 8m\sqrt{\CollCost}$. Then,  at least $m/2$ packets succeed.
\end{lemma}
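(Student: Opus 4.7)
The plan is to reduce to a simpler ``non-terminating'' process in which all $m$ packets independently flip Bernoulli($p$) coins with $p=2/\currWin$ in every slot and send whenever their coin reads ``send'', regardless of whether they have already been the unique sender in an earlier slot. I would couple the actual (terminating) process with this non-terminating one by using the same $m\currWin$ coin flips in both. Under this coupling, if a packet $i$ is the unique sender in some slot $t$ of the non-terminating process, then $i$ is also successful by slot $t$ in the actual process: in slot $t$ every other packet $j$ is silent in the actual process (if $j$ is still active, its coin reads ``don't send'' in both processes; if $j$ has already succeeded, $j$ is inactive), so $i$, if still active, is the unique sender, and if $i$ is already inactive, it has already succeeded. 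Consequently the set of successful packets in the actual process contains every packet that is ever a unique sender in the non-terminating process, and it suffices to prove that at least $m/2$ distinct packets act as unique sender in the non-terminating process, with high probability.

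I would split the latter bound into two pieces. First, let $\tilde N$ count the slots in which exactly one packet sends. Because coin flips in different slots are independent, $\tilde N$ is a sum of $\currWin$ independent Bernoullis of probability $m p(1-p)^{m-1}$. Since $\currWin\geq 8m\sqrt{\CollCost}$ yields $2pm\leq 1/(2\sqrt{\CollCost})\leq 1/2$, Fact~\ref{fact:taylor}(c) gives $(1-p)^{m-1}\geq e^{-2pm}\geq e^{-1/2}$, so $E[\tilde N]=2m(1-p)^{m-1}\geq 2m/\sqrt{e}>1.2m$. A standard Chernoff bound then yields $\tilde N\geq m$ with probability at least $1-1/n^c$, using $m\geq \gamma\ln n$ for sufficiently large $\gamma$. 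Second, conditional on which $\tilde N$ slots are successful, by the complete symmetry of the $m$ packets in the non-terminating process the unique sender in each such slot is uniform over $\{1,\ldots,m\}$, and these identities are independent across slots (they depend on disjoint coin flips). Hence the distinct-sender count equals the occupancy count when $\tilde N$ balls are thrown uniformly and independently into $m$ bins, whose expected number of empty bins is at most $m(1-1/m)^{\tilde N}\leq m/e$.

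The main technical obstacle will be concentrating this occupancy count, since the empty-bin indicators are correlated. The cleanest route is to invoke the negative association of occupancy indicators in a uniform balls-into-bins experiment, which permits a direct Chernoff-style bound showing that fewer than $m/2$ bins remain empty with probability $1-1/n^c$, again thanks to $m\geq \gamma\ln n$; if one prefers to avoid negative association, the same conclusion follows from a Doob martingale over the balls with unit bounded differences and Azuma--Hoeffding. A union bound over the two high-probability events ($\tilde N \geq m$, and at most $m/2$ empty bins among $\tilde N$ balls) then finishes the argument.
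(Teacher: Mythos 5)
Your proposal is correct, but it proves the lemma by a genuinely different route than the paper. The paper's proof is direct: it fixes a packet, upper-bounds its probability of failing across all $\currWin$ slots of the window by a constant ($\le 0.3$, using Facts~\ref{fact:taylor}(a) and (c) and $\currWin \ge 8m\sqrt{\CollCost}$), sums these to get $E[\mbox{\#failing packets}]\le 0.3m$, and applies the Chernoff bound of Theorem~\ref{thm:chernoff} to conclude at most $m/2$ packets fail; termination of successful packets is handled implicitly (and pessimistically) by always charging a packet with $m-1$ competitors, and the dependence among the per-packet failure indicators is not discussed. You instead couple with a non-terminating process (which cleanly and explicitly justifies the termination issue), count unique-sender slots $\tilde N$ --- a genuine sum of independent indicators, so Chernoff applies without qualms --- and then convert slots into distinct successful packets via a balls-into-bins occupancy argument concentrated through negative association (or Azuma). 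What your approach buys is rigor about the two points the paper elides (termination monotonicity and correlation among packets' failure events), at the price of an extra layer (the occupancy step); the paper's approach buys brevity. One small caveat in your write-up: the Azuma alternative as stated is over all $\tilde N$ balls, and its exponent degrades if $\tilde N \gg m$; this is easily repaired either by restricting to the first $m$ unique-sender slots (the distinct-sender count is monotone in the number of balls) or by noting that w.h.p.\ $\tilde N = O(m)$ since $E[\tilde N]\le 2m$ --- and your primary negative-association route does not suffer from this issue, since its tail bound depends only on the mean number of empty bins and the deviation, not on $\tilde N$.
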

\begin{proof}
Fix a packet and calculate the probability of failing over the entire window of size $8m\sqrt{\CollCost}$. To do so, note that the probability that this packet succeeds in a fixed slot is: (i) the probability that the packet sends in this slot, multiplied by (ii) the probability that all other $m-1$ packets remain silent. That is:
$$\left( \frac{2}{\currWin}\right)\left(1-\frac{2}{\currWin}\right)^{m-1}.$$

\noindent Thus, the probability of failure in the slot is:

$$1-\left( \frac{2}{\currWin}\right)\left(1-\frac{2}{\currWin}\right)^{m-1}.$$

\noindent It follows that the probability of failing over all $\currWin$ slots is:
\begin{align}
\left(1-\frac{2}{\currWin}\left(1-\frac{2}{\currWin}\right)^{m-1}\right)^{\currWin} & \leq \left(1-\frac{2}{\currWin}\frac{1}{e^{\frac{4(m-1)}{\currWin}}}\right)^{\currWin} \mbox{~~~by Fact \ref{fact:taylor}(c)} \nonumber\\
& \leq \left(1-\frac{2}{\currWin}\frac{1}{e^{\frac{4m}{\currWin}}}\right)^{\currWin} \nonumber\\
& \leq e^{-\frac{2\currWin}{(\currWin) (e^{4m/\currWin})}}\mbox{~~~by Fact \ref{fact:taylor}(a)} \nonumber\\
& \leq e^{-\frac{2}{e^{4m/\currWin}}} \nonumber\\
& \leq \frac{1}{e^{\frac{2}{e^{4m/\currWin}}}} \nonumber\\
& \leq \frac{1}{e^{\frac{2}{e^{4m/(8m\sqrt{\CollCost})}}}} \mbox{~~~since this is maximized when $\currWin = 8m\sqrt{\CollCost}$} \nonumber\\
& \leq \frac{1}{e^{\left(\frac{2}{ e^{1/(2\sqrt{\CollCost})}  }\right)}} \nonumber\\
& \leq 0.3 \mbox{~for all~} \CollCost\geq 1. \label{eqn:prob-fail}
\end{align}

\noindent For $i=1, ..., m$, let the indicator random variable $X_i=1$ if packet $i$ fails over the entire window; otherwise, let $X_i=0$. We let $X=\sum_{i=1}^m X_i$, and note that:
\begin{align*}
   E[X] & \leq E\left[\sum_{i=1}^{m} X_i\right]\\
    & \sum_{i=1}^m E[X_i] \mbox{~~~by linearity of expectation}\\
    & =  \sum_{i=1}^m Pr(X_i=1) \\
    & \leq 0.3 m
\end{align*}

\noindent By Theorem \ref{thm:chernoff}, letting $\delta=2/3$, we have:
\begin{align*}
    Pr\left( X \geq (1+\delta) 0.3m \right) &=  Pr\left( X \geq  m/2 \right)  \\
    & \leq e^{-(2/3)^2 0.3m/(8/3)}\\
    & = e^{-m/20}\\
    & \leq e^{-(\gamma/20)\ln n} \mbox{~~~since $m\geq \gamma\ln n$}\\
    & = 1/n^{d'}.
\end{align*}
Therefore, with probability at least $1-1/n^{d'}$, at least half of the packets succeed. \qed
\end{proof}

\noindent{\bf Discussion of Lemma~\ref{l:single-rundown}.} We highlight that Lemma \ref{l:single-rundown} applies to all windows in a fixed run. That is, in the first window of the run we have $m=n$ and $\currWin = kn\sqrt{\CollCost}$ for some constant $k\geq8$. In the second window of the run, we have $m\leq n/2$ and $\currWin = k(n/2)\sqrt{\CollCost}$. Referring to Line \ref{alg:while} of \mainAlgorithm, we see that this process---whereby the number of packets is {\it at least} halved, and the window size is halved---stops once we reach a window in the run with size less than $\Theta(\lg(w_0)\sqrt{\CollCost})$, where $w_0 = \Theta(n\sqrt{\CollCost})$. 

How many windows are in the run? That is, how many times can we halve our initial window of size $w_0=kn\sqrt{\CollCost}$ before it drops below size $\lg(w_0)\sqrt{\CollCost} = \lg(kn\sqrt{\CollCost})\sqrt{\CollCost}$? We solve for the number of runs $j$ in:
$$\frac{kn\sqrt{\CollCost}}{2^j} = \lg\left(kn\sqrt{\CollCost}\right)\sqrt{\CollCost}$$
\noindent which yields $j = \lg(kn) - \lg\lg(kn\sqrt{\CollCost})$. Since $\CollCost=\poly(n)$, this means $j = \lg(n) - \lg\lg(n) + O(1)$. This implies that, over these $j$ windows in the run, $n$ packets will be reduced to no more than $n/2^j = O(\log n)$ packets by the final window. Therefore, we have the following corollary.

\begin{corollary}\label{cor:rundown-almost-all-packets-succeed}
The number of windows in the run in \Rundown is $\lg(n) - \lg\lg(n) + O(1)$, after which only $O(\log(n))$ packets remain.
\end{corollary}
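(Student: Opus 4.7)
The plan is to apply Lemma \ref{l:single-rundown} inductively to each window executed inside the \texttt{while} loop of \Rundown, and then to determine the loop's stopping index by direct calculation.

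First, I would set up the induction. When \Rundown is invoked, by Lemma \ref{l:okay-GR} we have $\currWin = w_0 \in \goodrange$, so $w_0 \geq 200n\sqrt{\CollCost}$ and there are $m_0 = n$ active packets. Let $m_j$ and $w_j = w_0/2^j$ denote, respectively, the number of active packets entering the $(j+1)$-st iteration of the \texttt{while} loop, and the corresponding window size. I would argue by induction on $j$ that $m_j \leq n/2^j$ and $w_j \geq 8 m_j\sqrt{\CollCost}$. The first inequality follows from Lemma \ref{l:single-rundown} applied to iteration $j$ (which halves the number of active packets). For the second, note that $w_j / m_j \geq (w_0/2^j)/(n/2^j) = w_0/n \geq 200\sqrt{\CollCost} \geq 8\sqrt{\CollCost}$, so the hypothesis $\currWin \geq 8m\sqrt{\CollCost}$ of Lemma \ref{l:single-rundown} is preserved. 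The hypothesis $m \geq \gamma\ln n$ will hold until the very last few iterations, which is precisely what we are about to quantify.

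Next I would count the iterations. The \texttt{while} loop on Line \ref{alg:while} terminates once $w_j = w_0/2^j < 8\sqrt{\CollCost}\lg(w_0)$. Solving for the smallest such $j$ gives $j = \lg(w_0) - \lg(8\sqrt{\CollCost}\lg(w_0)) + O(1)$. Since $w_0 = \Theta(n\sqrt{\CollCost})$ and $\CollCost = \poly(n)$, we have $\lg(w_0) = \lg n + O(\log n)$ in the sense that $\lg(w_0) = \Theta(\log n)$ and $\lg\lg(w_0) = \lg\lg n + O(1)$. Substituting and simplifying,
\begin{align*}
j &= \lg(w_0) - \lg(\sqrt{\CollCost}) - \lg\lg(w_0) + O(1)\\
  &= \lg(n\sqrt{\CollCost}) - \lg(\sqrt{\CollCost}) - \lg\lg n + O(1)\\
  &= \lg n - \lg\lg n + O(1),
\end{align*}
as claimed.

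Finally, I would conclude the bound on surviving packets. By the induction above, after all $j$ iterations of the \texttt{while} loop complete, at most $m_j \leq n/2^j = n / 2^{\lg n - \lg\lg n + O(1)} = O(\log n)$ active packets remain. The main subtlety in the plan is to make sure that Lemma \ref{l:single-rundown} remains applicable through all $j$ iterations: I would handle this by noting that as soon as $m_j$ drops below $\gamma\ln n$ we are already at $O(\log n)$ surviving packets, so the conclusion of the corollary holds regardless, and the remaining iterations---if any---are exactly what is handled by the subsequent $c\ln(w_0)$ ``encore'' windows analyzed in Lemma \ref{lem:correctness-latency-rundown}. This is the only delicate step; everything else is arithmetic on the halving recursion.
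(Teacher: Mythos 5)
Your proposal is correct and follows essentially the same route as the paper: apply Lemma \ref{l:single-rundown} window-by-window (with the ratio $w_j/m_j \geq w_0/n$ preserved under simultaneous halving), solve $w_0/2^j \approx 8\sqrt{\CollCost}\lg(w_0)$ to get $j = \lg n - \lg\lg n + O(1)$ using $\CollCost = \poly(n)$, and conclude $n/2^j = O(\log n)$. The only nuance worth noting is that \Rundown may also be entered from the uncertain ranges (Lemmas \ref{l:limbo-low} and \ref{l:limbo-high}), so one should only assume $w_0 \geq 10n\sqrt{\CollCost}$ rather than $200n\sqrt{\CollCost}$; this changes nothing in your argument, and your explicit handling of the $m \geq \gamma\ln n$ hypothesis is a point the paper leaves implicit.
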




For the remainder of this section, all of the lemmas that we invoke hold with high probability in $n$, and thus we omit this from most of our lemmas statements and arguments. We are now able to prove correctness and an upper bound on the number of slots incurred by \Rundown.    

\begin{lemma}\label{lem:correctness-latency-rundown}
When \Rundown is executed, all packets succeed within $O(n\sqrt{\CollCost}\ln(n))$ slots.
\end{lemma}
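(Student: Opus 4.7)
The plan is to split the analysis of \Rundown into its two structural phases: the while-loop (lines \ref{alg:while}--\ref{alg:run-end}) with geometrically decreasing window sizes, and the encore phase (lines \ref{alg:repeat-runs}--\ref{alg:repeat-rsingle-window-rundown}) with $c\ln(w_0)$ repetitions of the largest window $w_0$. For both phases, I will (a) bound the number of slots deterministically and (b) show that the remaining packets all succeed whp by the end of the encore.

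First, I would handle the while-loop. By Lemma~\ref{l:single-rundown} applied inductively on consecutive iterations, starting from $w_0=\Theta(n\sqrt{\CollCost})$ and $n$ packets, the number of remaining packets at least halves with each halving of $\currWin$, so the invariant $\currWin\geq 8m\sqrt{\CollCost}$ continues to hold (as both $\currWin$ and $m$ halve in lockstep). The loop terminates once $\currWin<8\sqrt{\CollCost}\lg(w_0)$, at which point Corollary~\ref{cor:rundown-almost-all-packets-succeed} gives that only $O(\log n)$ packets remain. The total slot count over the loop is $\sum_{i\geq 0} w_0/2^i = O(w_0)=O(n\sqrt{\CollCost})$, which is absorbed into the final bound.

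Next, I would address the encore. The slot count is immediate: $c\ln(w_0)\cdot w_0 = O(n\sqrt{\CollCost}\ln n)$, since $\CollCost=\poly(n)$ makes $\ln(w_0)=\Theta(\ln n)$. The substantive step is to show that each of the $m=O(\log n)$ packets still active at the start of the encore succeeds whp. Fix such a packet $P$ and a single encore window. The probability $P$ succeeds in a given slot is $(2/w_0)(1-2/w_0)^{m-1}$. Since $m=O(\log n)$ and $w_0=\Theta(n\sqrt{\CollCost})$, we have $(1-2/w_0)^{m-1}\geq 1/2$ for $n$ sufficiently large by Fact~\ref{fact:taylor}(c), so the per-slot success probability is at least $1/w_0$. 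Applying Fact~\ref{fact:taylor}(a) over the $w_0$ slots, the probability that $P$ fails in a single encore window is at most $(1-1/w_0)^{w_0}\leq 1/e$. Across the $c\ln(w_0)$ encore windows, the failure probability for $P$ is at most $e^{-c\ln(w_0)}\leq 1/n^{d'}$ for $c$ large enough.

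Finally, I would union-bound over the $O(\log n)$ packets still alive at the start of the encore to conclude that all of them succeed whp. Combining the two slot counts gives $O(n\sqrt{\CollCost}) + O(n\sqrt{\CollCost}\ln n) = O(n\sqrt{\CollCost}\ln n)$, as claimed. The main technical care is in verifying the invariant $\currWin\geq 8m\sqrt{\CollCost}$ persists throughout the while-loop so that Lemma~\ref{l:single-rundown} keeps applying, and in ensuring the per-slot success probability lower bound in the encore is not degraded by the $(1-2/w_0)^{m-1}$ factor---this is where the smallness of $m=O(\log n)$ relative to $w_0=\Theta(n\sqrt{\CollCost})$ does the work.
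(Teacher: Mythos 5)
Your proposal is correct and follows essentially the same route as the paper: bound the while-loop by Lemma~\ref{l:single-rundown} and Corollary~\ref{cor:rundown-almost-all-packets-succeed} plus a geometric sum giving $O(n\sqrt{\CollCost})$ slots, then amplify the constant per-window success probability of the $O(\log n)$ surviving packets over the $c\ln(w_0)$ encore windows. The only differences are cosmetic: you re-derive the per-window failure bound ($\leq 1/e$) directly rather than citing the computation at Equation~\ref{eqn:prob-fail}, and you make the union bound over the remaining packets explicit.
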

\begin{proof}
By Lemmas \ref{l:okay-GR}, \ref{l:limbo-low}, and \ref{l:limbo-high}, \Rundown will execute with a window size $w \in [10n\sqrt{\CollCost}, $ $10^5 n\sqrt{\CollCost}]$. By Corollary~\ref{cor:rundown-almost-all-packets-succeed}, \Rundown reduces the number of remaining packets to  $O(\log n)$ and uses $r=\lg(n) - \lg\lg(n) + O(1)$ windows in the run to do so. Thus, the number of slots \Rundown executes over is at most:
\begin{align*}
    \sum_{j=0}^{r} \frac{10^5 n \sqrt{\CollCost}}{2^j} & = O\left(n\sqrt{\CollCost} \right)
\end{align*}
\noindent by the sum of a geometric series.    

Any remaining packets, of which there are $O(\log n)$, execute over a window of size $w_0 \in [10n\sqrt{\CollCost}, 10^5 n\sqrt{\CollCost}]$. By the same analysis used to reach Equation \ref{eqn:prob-fail} in the proof of Lemma \ref{l:single-rundown}, each packet succeeds with probability at least $0.7$. 

Repeating this  $c\ln(w_0) \leq c \ln(n)$ times guarantees that all remaining packets succeed with an error bound of at most:
\begin{align*}
    \left(\frac{7}{10}\right)^{c\ln(n)} &\leq n^{c\ln(7/10)}\\
    & \leq \frac{1}{n^{d'}}
\end{align*}
\noindent  so long as $c$ is  sufficiently large. Finally, the number of slots incurred by this component of \Rundown is $O(\log^2(n))$. \qed
\end{proof}

\noindent The next lemma establishes an upper bound on the expected collision cost for \Rundown.

\begin{lemma}\label{lem:expected-cost-collision-rundown}
The expected collision cost for executing \Rundown is $O(n\sqrt{\CollCost}\ln(n))$.
\end{lemma}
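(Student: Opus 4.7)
The plan is to decompose \Rundown into its two structural components and bound the expected collision cost of each separately. The first component is the while loop on Line \ref{alg:while}, which consists of a sequence of at most $r = \lg(n) - \lg\lg(n) + O(1)$ windows (by Corollary~\ref{cor:rundown-almost-all-packets-succeed}), where window $j$ has size $\currWin_j = w_0/2^j$ with $w_0 = \Theta(n\sqrt{\CollCost})$. The second component is the repeat block on Line \ref{alg:repeat-runs}, consisting of $c\ln(w_0) = O(\log n)$ additional windows of size $w_0$ executed by the $O(\log n)$ packets that remain after the while loop (again by Corollary~\ref{cor:rundown-almost-all-packets-succeed}).

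For the while loop, I would fix an arbitrary window $j$ and note that, by Lemma~\ref{l:single-rundown} applied inductively, the number of active packets entering window $j$ is at most $m_j \leq n/2^j$, and each sends with probability $p_j = 2/\currWin_j = 2^{j+1}/w_0$. Since $m_j p_j \leq 2n/w_0 = O(1/\sqrt{\CollCost}) \ll 1/2$, Lemma~\ref{lem:upper-prob-coll} applies and yields that the per-slot collision probability is at most $2m_j^2 p_j^2/(1 - m_j p_j) = O(n^2/w_0^2) = O(1/\CollCost)$. Multiplying by the number of slots $\currWin_j = w_0/2^j$ and by the per-collision cost $\CollCost$ gives an expected collision cost of $O(w_0/2^j)$ for window $j$. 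Summing the geometric series over $j = 0, 1, \ldots, r$ yields a total expected collision cost of $O(w_0) = O(n\sqrt{\CollCost})$ for the entire while loop.

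For the repeat block, at most $m = O(\log n)$ packets remain, each sending with probability $2/w_0$ in each of $c\ln(w_0) \cdot w_0 = O(n\sqrt{\CollCost}\log n)$ total slots. Again applying Lemma~\ref{lem:upper-prob-coll} (the hypothesis $mp \ll 1$ is trivially met), the per-slot collision probability is $O(m^2/w_0^2) = O(\log^2 n/w_0^2)$, so the expected number of collisions over all slots of the repeat block is $O(\log^2 n/w_0^2) \cdot O(n\sqrt{\CollCost}\log n) = O(\log^3(n)/w_0) = O(\log^3(n)/(n\sqrt{\CollCost}))$. Multiplying by $\CollCost$ gives an expected collision cost of $O(\sqrt{\CollCost}\log^3(n)/n)$, which is strictly dominated by the $O(n\sqrt{\CollCost}\log n)$ bound and therefore contributes nothing asymptotically.

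Combining both components yields the claimed $O(n\sqrt{\CollCost}\ln(n))$ bound. The main technical care is in justifying that Lemma~\ref{l:single-rundown}'s conclusion---that the number of packets is (at least) halved each window---holds with high probability simultaneously for all $O(\log n)$ windows of the run, so that the bound $m_j \leq n/2^j$ may be chained with only a $\poly(\log n)$ union-bound loss in the error probability; this is exactly what the ``w.h.p.\ in $n$'' convention in Section~\ref{sec:prelims} accommodates. The geometric-series collapse is what makes the bound tight: the halving of the packet count balances the doubling of the sending probability, keeping the per-window expected cost geometrically decreasing.
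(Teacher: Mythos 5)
Your proposal is correct and, for the main while-loop component, matches the paper's argument essentially step for step: the same decomposition into the halving run plus the final repeat block, the same use of Lemma~\ref{l:single-rundown} (through Corollary~\ref{cor:rundown-almost-all-packets-succeed}) to justify $m_j \le n/2^j$, the same application of Lemma~\ref{lem:upper-prob-coll} to bound the per-slot collision probability by $O(1/\CollCost)$, and the same geometric-series collapse giving $O(n\sqrt{\CollCost})$ for the run. The only place you diverge is the repeat block on Lines~\ref{alg:repeat-runs}--\ref{alg:repeat-rsingle-window-rundown}: the paper bounds it crudely, observing that each of the $O(\ln w_0)$ windows of size $w_0$ costs in expectation no more than the entire while loop does, and this domination argument is precisely where the paper's $\ln(n)$ factor enters. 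You instead compute the block directly with $m = O(\log n)$ surviving packets and per-slot collision probability $O(\log^2 n/w_0^2)$, concluding that its contribution is only $O(\sqrt{\CollCost}\log^3(n)/n)$, which is negligible. Your route is therefore marginally stronger---it shows the expected collision cost of \Rundown is in fact $O(n\sqrt{\CollCost})$, so the $\ln(n)$ in the lemma statement is not forced by this subroutine---at the cost of one additional calculation. Both arguments rest on the same implicit conditioning on the w.h.p.\ halving events across the $O(\log n)$ windows, which you correctly flag and which the paper's conventions in Section~\ref{sec:model} and Section~\ref{sec:prelims} are set up to absorb.
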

\begin{proof}
 By Lemmas \ref{l:okay-GR}, \ref{l:limbo-low}, and \ref{l:limbo-high},  \Rundown first executes with $m=n$ and $w_0 \in [10n\sqrt{\CollCost},$ $10^5 n\sqrt{\CollCost}]$.  By Lemma~\ref{l:single-rundown}, in each window of a run, the number of packets is reduced by at least a factor of $2$, and then the window size halves. That is, at the start of the $i$-th run, the number of packets is $m_i\leq n/2^i$ and the window size is $w_i = w_0/2^i$, for $i\geq 0$.


We employ Lemma~\ref{lem:upper-prob-coll}, where $p_i = 2/w_i$. We have that the expected number of collisions over a window of size $w_i$ in the run is at most:
\begin{align*}
\left(\frac{2m_i^2 p_i^2}{1-m_i p_i}\right)  w_i    & =   \left(\frac{2m_i^2p_i^2}{1-m_ip_i}\right) \left(\frac{2}{p_i}\right)\\
& = \frac{4m_i^2 p_i}{1-m_ip_i}\\
& = O(m_i^2 p_i) 
\end{align*}

By Corollary~\ref{cor:rundown-almost-all-packets-succeed}, there are $r=\lg(n) - \lg\lg(n) + O(1)$ windows in the run. Summing up over these windows, by linearity of expectation, the total expected number of collisions is:
\begin{align*}
    O\left(\sum_{i=0}^{r} m_i^2 p_i  \right) &=  O\left(\sum_{i=0}^{r} \left(\frac{n}{2^i}\right)^2 \left(\frac{2}{w_i}\right)  \right)\\
& =O\left(\sum_{i=0}^{r} \left(\frac{n}{2^i}\right)^2 \left(\frac{2^{i+1}}{n\sqrt{\CollCost}}\right) \right)\\
& =O\left(\sum_{i=0}^{r} \left(\frac{n}{2^i \sqrt{\CollCost}}\right)  \right)\\
& = O\left(  \frac{n}{\sqrt{\CollCost}}  \sum_{i=0}^{r} \frac{1}{2^i}\right)\\
& = O\left(  \frac{n}{\sqrt{\CollCost}} \right)
\end{align*}
\noindent where the last line follows from the sum of a geometric series. Thus, multiplying this by $\CollCost$, we obtain that the expected collision cost is $O(n\sqrt{\CollCost})$.

Finally, in lines \ref{alg:repeat-runs} to \ref{alg:repeat-rsingle-window-rundown}, the remaining $O(\log n)$ packets execute over a window of size $w_0$, doing so $O(\ln (w_0) = O(\ln(n))$ times before succeeding, as established by Lemma~\ref{lem:correctness-latency-rundown}. Certainly, each of the $O(\ln (w_0))$ executions has expected collision cost no more than that incurred by lines \ref{alg:while} to \ref{alg:run-end} (where we have $n$ packets executing), and thus the expected collision cost is $O(n\sqrt{\CollCost}\ln(w_0))=O(n\sqrt{\CollCost}\ln(n))$. \qed
\end{proof}

\subsection{Analysis of \mainAlgorithm}\label{sec:main-alg}

\noindent We wrap up our analysis of \malg by calling on our main lemmas from the previous sections. \smallskip

\noindent{\bf Theorem~\ref{thm:main-theorem}}
{\it    W.h.p. \mainAlgorithm guarantees that all packets succeed, the makespan is $O(n\sqrt{\CollCost}\log(n) )$, and the expected collision cost is $O(n\sqrt{\CollCost} \log^2(n))$.}\smallskip
\begin{proof}
   By Lemma \ref{lem:correctness-cost-collectsample-diagnosis}, over the course of executing \Sample and \Diagnosis,  \whp \malg executes over  $O(\sqrt{\CollCost}\log^2(n) )$ slots and the expected collision cost is $O(n\sqrt{\CollCost} \log^2(n))$. By Lemmas \ref{lem:correctness-latency-rundown} and \ref{lem:expected-cost-collision-rundown},\whp when \Rundown is executed in \malg, all packets succeed within $O(n\sqrt{\CollCost}\ln(n))$  slots and the expected collision cost is $O(n\sqrt{\CollCost}\ln(n))$. Therefore, \malg guarantees that all packets succeed, the makespan is $O(n\sqrt{\CollCost}\log(n) )$, and the expected collision cost is $O(n\sqrt{\CollCost} \log^2(n))$. \qed
\end{proof}

\section{Lower Bound}\label{app:lowerbound}
     
We consider only the set of slots, {\boldmath{$\twoActiveSlots$}}, in the execution of any algorithm where at least two active packets remain, since we cannot have a collision with a single active packet. While we do not always make this explicit, but going forward, any slot $t$ is assumed to implicitly belong to $\twoActiveSlots$.

Let {\boldmath{$p_i(t)$}} denote the probability that packet $i$ sends in slot $t$.  Note that, if a packet has terminated, its sending probability can be viewed as $0$. For any fixed slot $t$, the \defn{contention} in slot $t$ is  {\bf \texttt{Con}}{\boldmath{$(t)$}} $= \sum_{i=1}^n p_i(t)$; that is, the sum of the sending probabilities in that slot.

We begin by arguing that any algorithm with contention exceeding $2$ in any slot must incur an expected collision cost of $\Omega(\CollCost)$. We do this by deriving an upper bound on the probability of (i) a success and (ii) an empty slot, as a function of contention. In turn, this provides a lower bound on the probability of a collision that is useful when contention exceeds $2$, allowing us to show $\Omega(\CollCost)$ expected cost in Lemma \ref{lem:omega-collision}.


\begin{lemma}\label{lem:prob-success}
Fix any slot $t$. The probability of a success is at most $\frac{e\cdot \con}{e^{\con}}$.
\end{lemma}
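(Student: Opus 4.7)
The plan is to use the standard product form for the probability of success and bound it using Fact~\ref{fact:taylor}(a). First, write
\[
\Pr(\text{success in slot }t) \;=\; \sum_{i=1}^{n} p_i(t)\prod_{j\neq i}\bigl(1-p_j(t)\bigr),
\]
which just says that success means exactly one packet sends while every other packet stays silent.

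Next, I would bound each factor $1-p_j(t)\leq e^{-p_j(t)}$ using Fact~\ref{fact:taylor}(a), which gives
\[
\prod_{j\neq i}\bigl(1-p_j(t)\bigr) \;\leq\; \prod_{j\neq i} e^{-p_j(t)} \;=\; e^{-(\con - p_i(t))} \;=\; e^{p_i(t)}\,e^{-\con}.
\]
Substituting back yields
\[
\Pr(\text{success in slot }t) \;\leq\; e^{-\con}\sum_{i=1}^{n} p_i(t)\,e^{p_i(t)}.
\]

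The final step is to observe that $p_i(t)\in[0,1]$ for every $i$, so $e^{p_i(t)}\leq e$. Therefore $\sum_i p_i(t)\,e^{p_i(t)}\leq e\sum_i p_i(t)=e\cdot\con$, and plugging in gives
\[
\Pr(\text{success in slot }t)\;\leq\; e^{-\con}\cdot e\cdot\con \;=\; \frac{e\cdot\con}{e^{\con}},
\]
as claimed. There is no serious obstacle here; the only thing to be careful about is recognizing the telescoping $\sum_{j\neq i}p_j(t)=\con-p_i(t)$ so that the $e^{p_i(t)}$ factor reappears and can be absorbed using $p_i(t)\leq 1$.
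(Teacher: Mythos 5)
Your proposal is correct and follows essentially the same argument as the paper: write the success probability as $\sum_i p_i(t)\prod_{j\neq i}(1-p_j(t))$, apply Fact~\ref{fact:taylor}(a) to the product, and use $p_i(t)\leq 1$ to absorb the residual factor, yielding $e\cdot\texttt{Con}(t)/e^{\texttt{Con}(t)}$. The only difference is cosmetic: the paper bounds $\sum_{j\neq i}p_j(t)\geq \texttt{Con}(t)-1$ inside the exponent, while you factor out $e^{p_i(t)}\leq e$ afterwards.
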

\begin{proof}
Fix a slot $t$.  For any sending probabilities $p_i(t)$, the probability of a success in slot $t$ is:
\begin{align*} 
  & ~~~~\sum_{\mbox{\tiny all $j$}} \left(p_j(t) \prod_{\mbox{\tiny all  $i\not=j$}}(1-p_i(t))\right)\\
   & \leq \sum_{\mbox{\tiny all  $j$}} \left(p_j(t) e^{-\sum_{\mbox{\tiny all $i\not=j$ }} p_i(t)}\right) \mbox{~~~by Fact \ref{fact:taylor}(a)}\\
    & \leq \sum_{\mbox{\tiny all  $j$}} \left(p_j(t) e^{-(\con-1)}\right) \mbox{~~~since $p_j(t)\leq 1$}\\
  & \leq  \frac{\con}{e^{\con-1}}\\
 & \leq  \frac{e\con}{e^{\con}}
  \end{align*}
  \noindent which completes the proof.\qed
\end{proof}

\begin{lemma}\label{lem:high-contention}
Fix any slot $t$. If $\con> 2$, then the probability of a collision is at least $1/10$.
\end{lemma}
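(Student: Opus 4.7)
The plan is to express the probability of a collision as $1 - \Pr(\text{empty}) - \Pr(\text{success})$ and then upper-bound each of these two complementary events in terms of $\con$. First I would bound the probability of an empty slot: using independence across packets and Fact \ref{fact:taylor}(a),
\[
\Pr(\text{empty}) \;=\; \prod_i (1 - p_i(t)) \;\leq\; \prod_i e^{-p_i(t)} \;=\; e^{-\con}.
\]
Then I would invoke Lemma \ref{lem:prob-success}, which has already been established, to conclude $\Pr(\text{success}) \leq e\,\con/e^{\con}$. Combining these two bounds gives
\[
\Pr(\text{collision}) \;\geq\; 1 - \frac{1 + e\,\con}{e^{\con}}.
\]

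Next I would analyze the real-valued function $f(x) = 1 - (1 + ex)/e^{x}$ on $x \geq 1$ and show that for $\con > 2$ the right-hand side is at least $1/10$. A direct derivative computation yields $f'(x) = (1 + ex - e)/e^{x}$, which is positive once $x > 1 - 1/e$, and in particular for all $x \geq 2$. Hence $f$ is strictly increasing on $[2, \infty)$, so it suffices to check $f(2) = 1 - (1 + 2e)/e^{2}$. A short numerical calculation gives $f(2) > 0.12 > 1/10$, which establishes the claim for $\con > 2$.

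I do not foresee a major obstacle; the argument is essentially a calculus exercise once the two probability bounds are combined. The only mild subtlety is that Lemma \ref{lem:prob-success} was proved using $p_j(t) \leq 1$, so I would note in passing that this is always valid and that the bound $(1+e\,\con)/e^{\con}$ is indeed less than $1$ for $\con > 2$, so the lower bound on $\Pr(\text{collision})$ is nontrivial. The final step is just to plug in $\con = 2$, verify the numerical inequality, and appeal to monotonicity to finish.
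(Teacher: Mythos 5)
Your proposal is correct and follows essentially the same route as the paper: both bound the empty-slot probability by $e^{-\con}$, invoke Lemma \ref{lem:prob-success} for the success probability, and reduce to evaluating the resulting expression at $\con = 2$. Your explicit derivative/monotonicity check of $f(x) = 1 - (1+ex)/e^{x}$ just makes rigorous the substitution of $\con = 2$ that the paper performs implicitly (each term there is separately decreasing for $x \geq 2$), so no gap on either side.
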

\begin{proof}
By Lemma \ref{lem:prob-success}, the probability of a success in slot $t$ is at most $\frac{e\con}{e^{\con}}$. The probability that slot $t$ is empty is at most $\prod_{\mbox{\tiny all j}} (1-p_j(t)) \leq e^{-\con}$ by Fact \ref{fact:taylor}(a). Therefore, the probability of a collision is at least:
\begin{align*}
   1 -  \frac{e\,\con}{e^{\con}} - e^{-\con} & \geq 1 -  \frac{2e}{e^{2}} - e^{-2} \mbox{~~~since $\con> 2$}\\
   & \geq 1/10.
\end{align*}
\noindent as claimed.\qed
\end{proof}

\begin{lemma}\label{lem:omega-collision}
    Any algorithm that has $\con> 2$ has an expected collision cost that is $\Omega(\CollCost)$.
\end{lemma}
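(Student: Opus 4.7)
The plan is to isolate the single slot with high contention and use Lemma \ref{lem:high-contention} directly. Suppose the algorithm has some slot $t^* \in \twoActiveSlots$ with $\con(t^*) > 2$. Since we are computing an expected collision cost and costs from different slots are non-negative, the total expected collision cost is at least the expected collision cost contributed by slot $t^*$ alone.

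The key step is to apply Lemma \ref{lem:high-contention}, which gives $\Pr(\text{collision in } t^*) \geq 1/10$. Letting $Y_{t^*}$ be the indicator that slot $t^*$ is a collision, the expected collision cost is at least
\begin{equation*}
   E[Y_{t^*}] \cdot \CollCost \;\geq\; \frac{\CollCost}{10} \;=\; \Omega(\CollCost).
\end{equation*}
This handles the high-contention case for our overall lower bound program.

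I do not foresee any real obstacle here: the lemma does all the heavy lifting. The only subtlety worth stating explicitly is that $t^* \in \twoActiveSlots$ is guaranteed by our convention (slots with fewer than two active packets are excluded since they cannot contribute a collision anyway, and slots with $\con(t)=0$ are given for free but here $\con(t^*)>2$), so the collision event and its associated cost $\CollCost$ are both well-defined and the bound genuinely contributes to the algorithm's measured collision cost.
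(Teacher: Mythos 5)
Your proof is correct and matches the paper's argument: fix the slot $t^*$ with $\con(t^*) > 2$, invoke Lemma~\ref{lem:high-contention} to get a collision probability of at least $1/10$ there, and conclude the expected collision cost is at least $\CollCost/10 = \Omega(\CollCost)$. No issues.
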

\begin{proof}
   Let $t$ be any slot where $\con> 2$ in the execution of any algorithm. By Lemma  \ref{lem:high-contention}, the probability of a collision is at least $1/10$. Therefore, the expected cost from collisions is at least $\CollCost/10$.\qed 
\end{proof}

Next, we need to consider algorithms with ``low'' contention in every slot; that is,  $\con \leq 2$ for all $t$. As a stepping stone, in Lemmas \ref{lem:lower-prob-coll} we derive a lower bound on the probability of a collision that is a function of contention, and which is easy to work with when contention is between $0$ and $2$.\medskip

\begin{lemma}\label{lem:lower-prob-coll} 
Fix any slot $t$ and let $\con\leq 2$. The probability of a collision in $t$ is at least $(\frac{1}{110})\left(\con^2 - \sum_i p_i(t)^2 \right)$.   
\end{lemma}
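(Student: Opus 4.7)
The plan is to start from the pair decomposition
\[
\Pr[\text{collision}] \;\geq\; \sum_{i<j} p_i(t)\, p_j(t)\prod_{k \neq i,j}\bigl(1-p_k(t)\bigr),
\]
which holds because the events ``exactly the packets $\{i,j\}$ send'' are pairwise disjoint and each is a collision. Since $\sum_{i<j} p_i(t) p_j(t) = \tfrac{1}{2}\bigl(\con^2 - \sum_i p_i(t)^2\bigr)$, the task reduces to bounding the ``survival'' product $\prod_{k\neq i,j}(1-p_k(t))$ below by an absolute constant. The obstacle is that this product can collapse to $0$ when some $p_k(t)$ is close to $1$, so I will split on the size of the largest sending probabilities.

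In the regime where every $p_k(t) \leq 1/2$, Fact~\ref{fact:taylor}(c) gives $1-p_k(t) \geq e^{-2 p_k(t)}$, hence $\prod_{k\neq i,j}(1-p_k(t)) \geq e^{-2\con} \geq e^{-4}$ using $\con \leq 2$. Plugging in yields a collision probability of at least $(e^{-4}/2)\bigl(\con^2 - \sum_i p_i(t)^2\bigr)$, and a direct numerical check shows $e^{-4}/2 > 1/110$. If instead at least two packets have $p_i(t) > 1/2$, then they both send simultaneously with probability exceeding $1/4$, which is a collision; since $\con^2 - \sum_i p_i(t)^2 \leq \con^2 \leq 4$, the target bound $\tfrac{1}{110}\bigl(\con^2 - \sum p_i(t)^2\bigr) \leq 4/110$ is well below $1/4$.

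The delicate case is exactly one ``heavy'' packet, say packet $1$, with $p_1(t) > 1/2$ and all other $p_k(t) \leq 1/2$. Here the pair decomposition loses too much through the factor $(1-p_1(t))$, so I switch to the single-packet bound $\Pr[\text{collision}] \geq p_1(t)\bigl(1 - \prod_{k\neq 1}(1-p_k(t))\bigr) \geq p_1(t)\,(1-e^{-\sigma})$, where $\sigma = \sum_{k\neq 1} p_k(t) \leq 3/2$, by Fact~\ref{fact:taylor}(a). Meanwhile, expanding $\con = p_1(t)+\sigma$ gives $\con^2 - \sum_i p_i(t)^2 = 2 p_1(t)\sigma + \sigma^2 - \sum_{k\neq 1} p_k(t)^2 \leq 2 p_1(t)\sigma + \sigma^2$. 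After dividing by $\sigma$, the target inequality reduces to $p_1(t)(1-e^{-\sigma})/\sigma \geq (2 p_1(t) + \sigma)/110$, which follows from the fact that $(1-e^{-\sigma})/\sigma$ is decreasing and exceeds $1/2$ on $(0,3/2]$, together with $p_1(t) > 1/2$. The main obstacle is this last case: threading a single numerical constant---here $1/110$---through both the $e^{-4}/2$ loss in the all-small-probability regime and the heavy-packet balancing simultaneously, which is precisely why $1/110$ (rather than the slightly larger $e^{-4}/2$) appears in the statement.
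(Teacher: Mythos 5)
Your proof is correct, and its core is the same as the paper's: both start from the disjoint-pair decomposition $\Pr[\text{collision}] \geq \sum_{i<j} p_i(t)p_j(t)\prod(1-p_k(t))$, identify $\sum_{i<j}p_i(t)p_j(t) = \frac{1}{2}\bigl(\con^2 - \sum_i p_i(t)^2\bigr)$, and use Fact~\ref{fact:taylor}(c) to bound the survival product by $e^{-2\con} \geq e^{-4}$, which is exactly where the constant $\frac{1}{110} < \frac{1}{2e^4}$ comes from. Where you diverge is the treatment of sending probabilities above $1/2$. The paper dispatches this with a one-sentence monotonicity remark (``allowing $p_i(t) > 1/2$ can only increase the probability of a collision''), which is loose as stated, since raising a $p_i(t)$ also raises the right-hand side $\con^2 - \sum_i p_i(t)^2$, so the claimed inequality does not follow from monotonicity of the left-hand side alone. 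Your explicit case analysis closes that gap cleanly: with two heavy packets the collision probability already exceeds $1/4$ while the target is at most $4/110$; with exactly one heavy packet you switch to the bound $p_1(t)\bigl(1-e^{-\sigma}\bigr)$ with $\sigma = \con - p_1(t) \leq 3/2$, and the reduction to $p_1(t)(1-e^{-\sigma})/\sigma \geq (2p_1(t)+\sigma)/110$ checks out numerically (left side at least $p_1(t)/2 > 1/4$, right side at most $3.5/110$). So your argument buys a fully rigorous handling of the heavy-probability regime at the cost of a short case split, whereas the paper's version is shorter but leans on an informal step; the constants and the main estimate are otherwise identical.
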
 
\begin{proof} 
For ease of presentation, in this proof we write $p_i$ instead of $p_i(t)$. We restrict our analysis to collisions involving only two packets; note that considering collisions involving three or more packets can only increase the probability of a collision. 

For the following analysis, we restrict all sending probabilities $p_i$ are at most $1/2$; however, we will remove this restriction momentarily. The probability of collision in slot $t$ is at least:
\begin{align*}
& \frac{1}{2}\bigg( p_1p_2 \prod_{i=1}^n (1-p_i) + p_1p_3\prod_{i=1}^n (1-p_i) + ... + p_1p_n \prod_{i=1}^n (1-p_i)+  \\
& \mbox{~~~\hspace{5pt}} p_2p_1 \prod_{i=1}^n (1-p_i) + p_2p_3\prod_{i=1}^n (1-p_i) + ... + p_2p_n \prod_{i=1}^n (1-p_i)+\\
& \vdots\\
& \mbox{~~~\hspace{5pt}}   p_np_1 \prod_{i=1}^n (1-p_i) + p_np_2\prod_{i=1}^n (1-p_i) + ... + p_n p_{n-1} \prod_{i=1}^n (1-p_i)  \bigg)
\end{align*}
\noindent{}where the leading factor of $1/2$ comes from double counting terms (e.g., $p_1p_2$ and $p_2p_1$), which allows us to simply below. Since each sending probability is at most $1/2$, we can use Fact \ref{fact:taylor}(c) to simplify in the following way.
\begin{align*}
& \geq \left(\frac{1}{2}\right) {e^{-2 \con}} \bigg(p_1 \sum_{j\neq1} p_j + p_2 \sum_{j\neq2} p_j + \cdots + p_n \sum_{j\neq n} p_j \bigg)\\
& = \left(\frac{1}{2}\right) {e^{-2 \con}} \hspace{-3pt}\left(p_1 \hspace{-3pt} \left(\hspace{-2pt}\left(\sum_{\mbox{\tiny all $j$}} p_j\right) - p_1\hspace{-3pt}\right) \hspace{-2pt}+\hspace{-1pt} p_2 \hspace{-2pt}\left(\hspace{-2pt}\left(\sum_{\mbox{\tiny all $j$}} p_j\right) - p_2\hspace{-2pt}\right) \hspace{-2pt}+ \cdots + p_n\hspace{-2pt} \left(\hspace{-2pt}\left(\sum_{\mbox{\tiny all $j$}} p_j\right) - p_n\hspace{-2pt}\right)\hspace{-2pt}\right)\\
& = \left(\frac{1}{2}\right) e^{-2 \con} \left( \left(p_1 \con - p_1^2\right) + \left(p_2 \con - p_2^2\right) + \cdots + \left(p_n \con - p_n^2\right)  \right)\\
& = \left(\frac{1}{2}\right) e^{-2 \con} \left( \con (p_1 + p_2 + \cdots + p_n) - \sum p_j^2\right)\\
& \geq \frac{\con^2 - \sum_i p_i^2}{2e^{2 \con}}\\
& \geq  \frac{\con^2 - \sum_i p_i^2}{110} \mbox{~~~since $\con \leq 2$.}
\end{align*}
\noindent Finally, note that if we allow the sending probabilities to be larger than $1/2$, this can only increase the probability of a collision; therefore, our lower bound holds even when the sending probabilities are not restricted.\qed
\end{proof}

Before making our next argument, consider a simple example with two packets, whose sending probabilities are $p_1(t)=1$ and $p_2(t)=1/100$ in slot $t$. Observe that $\con^2 - (p_1^2 + p_2^2) = 1/50 \ll \con^2$. However, here $\Delta(t)=1/100$, and it is true that $\con^2 - (p_1^2 + p_2^2) \geq \Delta(t) \con^2$. Lemma \ref{lem:con-minus-squares} below generalizes this inequality.\medskip

\begin{lemma} \label{lem:con-minus-squares}
For any fixed slot $t$,  $\con^2 - \sum_i p_i(t)^2 \geq  \Delta(t)\, \con^2/2$.
\end{lemma}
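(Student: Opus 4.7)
The plan is to expand $\con^2 - \sum_i p_i(t)^2$ in the standard way and do a case split depending on whether $\pmax(t)$ is larger or smaller than $\con/2$. Throughout, I write $p_i$ for $p_i(t)$, $\pmax$ for $\pmax(t)$, etc., and index so that $p_1=\pmax$, $p_2=\psec$.

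First I would record the two elementary identities / bounds that drive everything. On the one hand,
\begin{align*}
\con^2 - \sum_i p_i^2 \;=\; 2\!\sum_{i<j}\! p_i p_j \;\geq\; 2\,p_1\!\sum_{j\geq 2}\! p_j \;=\; 2\pmax(\con-\pmax),
\end{align*}
and on the other hand, since each $p_i \le \pmax$,
\begin{align*}
\sum_i p_i^2 \;\leq\; \pmax \sum_i p_i \;=\; \pmax\cdot\con,
\end{align*}
which yields $\con^2 - \sum_i p_i^2 \geq \con(\con-\pmax)$. I will also use the trivial but crucial fact $\con - \pmax = \sum_{j\ge 2} p_j \geq \psec$, and note $\psec/\pmax\leq 1$ by definition.

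Case 1: $\pmax \geq \con/2$. Starting from the first bound,
\begin{align*}
\con^2 - \sum_i p_i^2 \;\geq\; 2\pmax(\con-\pmax) \;\geq\; 2\pmax\cdot\psec,
\end{align*}
and the desired inequality $2\pmax\psec \geq (\psec/\pmax)\con^2/2$ reduces to $4\pmax^2\geq \con^2$, which is exactly the case hypothesis.

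Case 2: $\pmax < \con/2$. Now I use the second bound:
\begin{align*}
\con^2 - \sum_i p_i^2 \;\geq\; \con(\con-\pmax) \;>\; \con\cdot\tfrac{\con}{2} \;=\; \tfrac{\con^2}{2} \;\geq\; \tfrac{\psec}{\pmax}\cdot\tfrac{\con^2}{2},
\end{align*}
where the last step uses $\psec\leq\pmax$. Combining the two cases gives the claim.

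The only potential subtlety, and the step I'd be most careful with, is making sure $\Delta(t)$ is well-defined and the indexing $p_1=\pmax$, $p_2=\psec$ is legal; the excerpt already disposes of the $\con=0$ case (those slots are given for free), and the definition of $\psec$ explicitly covers ties with $\pmax$, so there is no real obstacle — the argument is purely an AM–GM-flavored case split.
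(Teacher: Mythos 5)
Your proof is correct, but it takes a genuinely different route from the paper's. You work only with $\pmax(t)$ and $\psec(t)$: from $\con^2-\sum_i p_i^2 = 2\sum_{i<j}p_ip_j \geq 2\pmax(t)\,(\con-\pmax(t))$ and $\sum_i p_i^2 \leq \pmax(t)\,\con$, you split on whether $\pmax(t)\geq \con/2$; in the first case $\con-\pmax(t)\geq \psec(t)$ and the case hypothesis $4\pmax(t)^2\geq \con^2$ close the argument, and in the second case $\con(\con-\pmax(t))>\con^2/2\geq \Delta(t)\,\con^2/2$ does. The paper instead sorts the probabilities in non-increasing order and pairs each subtracted $p_i^2$ ($i\geq 2$) with a leftover cross term $p_{i-1}p_i\geq p_i^2$, reserving an extra $p_1p_2\geq \Delta(t)\,p_1^2$ to cover $p_1^2$; this yields the intermediate inequality $\con^2-\sum_i p_i^2\geq \Delta(t)\sum_i p_i^2$, after which it splits on whether $\sum_i p_i^2\geq \con^2/2$. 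Both arguments are elementary and of comparable length; yours avoids the term-by-term pairing and bookkeeping entirely and needs only the two largest probabilities, while the paper's route produces the slightly stronger standalone bound in terms of $\Delta(t)\sum_i p_i^2$ (which, however, is not reused elsewhere). Your handling of the edge cases ($\con=0$ excluded so $\pmax(t)>0$; $\psec(t)=0$ making the claim trivial) matches the paper's conventions, so there is no gap.
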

\begin{proof}
For ease of presentation, in this proof we write $p_i$ instead of $p_i(t)$. Fix a slot $t$. Without loss of generality let the summands of $\con$ be labeled in non-monotonic decreasing order: $p_1 \geq p_2 \geq ... \geq p_n$. For part one of our argument, we note that:
\begin{align}
p_1 p_2  \geq p_1^2\Delta(t)  \label{eq:probs}
\end{align}
\noindent since $\Delta(t)  = p_2/p_1$.  

Part two of our argument proceeds as follows. For $i\geq 2$, for any $p_i^2$ term removed from $\con^2$, we focus on a `leftover'' term in $\con^2 - \sum_i p_i^2$ of the form $p_{i-1}p_i$. For example, for $p_2^2$ we have a leftover term $p_1 p_2$, and we note that $p_1 p_2 \geq p_2^2$, since $p_1 \geq p_2$. Generalizing, the leftover term  $p_{i-1}p_i$ satisfies $p_{i-1}p_i\geq p_i^2$, since $p_{i-1}\geq p_i$. 

We tie things together by first noting:
\begin{align*}
 \con^2 - \sum_i p_i^2 &\geq p_1 p_2 +  p_1p_2 + p_2p_3 + p_3p_4 +...+p_{n-1}p_n
 \end{align*}
\noindent where we highlight that we need both $p_1p_2$ terms; one to offset the subtraction of $p_1^2$ from $\con$, as discussed in part one of our argument, and the other to offset the subtraction of $p_2^2$ from $\con$, as discussed in part two of our argument. For the subtraction of $p_i^2$ for $i\geq 3$, we need only a single corresponding term $p_{i-1}p_i$ to offset this. Thus, continuing from above, we have:
 \begin{align}
 &\mbox{~~~~~~} p_1 p_2 +  p_1p_2 + p_2p_3 + p_3p_4 +...+p_{n-1}p_n \nonumber\\
 &\geq p_1^2\Delta(t)  +  p_1p_2 + p_2p_3 + p_3p_4 +...+p_{n-1}p_n \mbox{~~~by Equation \ref{eq:probs}}\nonumber \\
 & \geq p_1^2\Delta(t)  + p_2^2 + p_3^2 + ... + p_n^2  \mbox{~~~since $p_{i-1}p_i \geq p_i^2$ for $i\geq 2$} \nonumber\\
  & \geq \Delta(t) \sum_i p_i^2  \mbox{~~~since $\Delta(t)\leq 1$}\label{eq:delta-con}
\end{align}
If $\sum_i p_i^2 \geq \con^2/2$, then  by the above:
\begin{align*}
    \con^2 - \sum_i p_i^2 & \geq \Delta(t) \sum_i p_i^2 \mbox{~~~by Equation \ref{eq:delta-con}}\\ 
   & \geq  \Delta(t)\, \con^2/2.
\end{align*}
\noindent Else, $\sum_i p_i^2 < \con^2/2$ and  $\con^2 - \sum_i p_i^2 \geq \con^2/2 \geq \Delta(t) \con^2/2$, given $\Delta(t)\leq 1$. Either way, the claim follows.\qed
\end{proof}
 
For any algorithm $\mathcal{A}$ that executes over a set of slots $\mathcal{I}$, we define \defn{total contention} of $\mathcal{A}$ to be $\sum_{s\in \mathcal{I}} \con$; that is, the sum of contention over all slots of $\mathcal{A}$'s execution. Note, for this definition, we are counting those slots during which a single packet is active (for simplicity, we do not discuss this intermediate result in the overview of Section \ref{sec:lower-bound}, so we were able to restrict our attention to only the set $\twoActiveSlots$).\medskip

\begin{lemma}\label{lem:sum-contention}
Any algorithm that \whp has $n$ packets succeed has the property that the total contention over all slots in the execution is at least $n/16$.
\end{lemma}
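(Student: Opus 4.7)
For each packet $i$, let $T_i := \sum_{t\in\mathcal{I}} p_i(t)$, and observe that
\[
\sum_{t\in\mathcal{I}}\con \;=\; \sum_{i=1}^n T_i,
\]
so it suffices to show $\sum_i T_i \ge n/16$ w.h.p. The plan is to prove that each individual $T_i$ exceeds a positive constant w.h.p.\ and then take a union bound over $i$. The underlying intuition is direct: packet $i$ can only succeed if it transmits at least once, so the ``never-sends'' event must have polynomially small probability, and this squeezes $T_i$ away from $0$.

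Concretely, let $X_{i,t}\in\{0,1\}$ indicate that packet $i$ transmits in slot $t$, and let $N_i=\sum_t X_{i,t}$. Since any packet that never sends trivially fails, the w.h.p.\ success hypothesis gives $\Pr[N_i=0]\le 1/n^{c}$ for an arbitrarily large constant $c$. To convert this into a statement about $T_i$, I would argue as follows. Let $\mathcal{F}_{t-1}$ denote the sigma-algebra generated by all $X_{j,s}$ with $s<t$; since $p_i(t)$ is $\mathcal{F}_{t-1}$-measurable and the packets' coin flips in slot $t$ are independent given $\mathcal{F}_{t-1}$, iterating the tower property (peeling off one slot at a time) yields
\[
\Pr[N_i=0]\;=\;E\!\left[\prod_{t}(1-p_i(t))\right]\;\le\;E\!\left[e^{-T_i}\right],
\]
where the last step applies Fact~\ref{fact:taylor}(a) inside the expectation. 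A Markov bound then gives $\Pr[T_i\le \ln 2] = \Pr[e^{-T_i}\ge 1/2] \le 2\,E[e^{-T_i}] \le 2/n^c$.

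Taking a union bound over the $n$ packets, with probability at least $1 - 2/n^{c-1}$ every $T_i$ exceeds $\ln 2$, so
\[
\sum_{t\in\mathcal{I}}\con \;=\; \sum_{i=1}^n T_i \;>\; n\ln 2 \;>\; n/16,
\]
which proves the lemma (with substantial room to spare --- the same argument actually delivers $\Omega(n\log n)$ if one pushes the Markov threshold harder).

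The main technical subtlety lies in the middle step. The sending probabilities $p_i(t)$ are themselves random, depending adaptively on prior channel feedback and therefore on the actions of other packets, so one cannot simply ``condition on the deterministic trajectory'' of probabilities and apply an independent-Bernoulli bound. The resolution is to work with the natural filtration $\mathcal{F}_t$, to use $\mathcal{F}_{t-1}$-measurability of $p_i(t)$ so that $E[1-X_{i,t}\mid\mathcal{F}_{t-1}] = 1-p_i(t)$ exactly, and to push the expectation outside after each peel. This converts an awkward joint event about adapted Bernoullis into the clean moment bound $\Pr[N_i=0]\le E[e^{-T_i}]$, after which only standard Markov and union-bound reasoning is needed.
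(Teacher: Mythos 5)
Your plan hinges on the claim that \emph{each individual} realized lifetime contribution $T_i=\sum_t p_i(t)$ exceeds a constant \whp, and this claim is false, so the union-bound route cannot work. The \whp-success hypothesis only forbids the event ``packet $i$ never transmits''; it says nothing that keeps the \emph{realized} $T_i$ away from $0$, because a packet that gets lucky, transmits early, and succeeds terminates with a tiny accumulated probability. Concretely, for a fair algorithm whose initial per-slot sending probability is $\Theta(1/n)$, with constant probability some packet succeeds in the very first slot and ends its life with $T_i=\Theta(1/n)\ll\ln 2$; hence ``\whp every $T_i>\ln 2$'' is unprovable, and your parenthetical claim that the argument even gives total contention $\Omega(n\log n)$ is contradicted by \STB, which succeeds \whp with total contention $\Theta(n)$. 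The technical chain reflects this: (a) the identity $\Pr[N_i=0]=E[\prod_t(1-p_i(t))]$ does not hold for adaptive algorithms, because the realized trajectory of $p_i(t)$ depends on packet $i$'s own coin flips --- in particular $p_i(t)$ collapses to $0$ once $i$ succeeds, an event driven by $i$'s sends --- so for a packet that succeeds on its first transmission the realized product is $\Omega(1)$ even though $\Pr[N_i=0]$ is polynomially small; and (b) even granting that identity, your inequalities point the wrong way: $\Pr[N_i=0]\le E[e^{-T_i}]$ only \emph{lower}-bounds $E[e^{-T_i}]$ by a small quantity, which is vacuous, yet the Markov step silently uses the upper bound $E[e^{-T_i}]\le 1/n^c$ that was never established (what you would need is something like $E[e^{-2T_i}]\le\Pr[N_i=0]$ via Fact~\ref{fact:taylor}(c), and that is exactly what fails by (a)).

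The missing idea is that the bound on the total cannot be assembled packet-by-packet from realized contributions; it has to come from a contradiction argument about packets whose cumulative sending probability is small. The paper's proof of Lemma~\ref{lem:sum-contention} assumes the total contention is below $n/16$, concludes that at least $3n/4$ packets are ``light'' in the sense that their lifetime contribution $\LC(u)$ is below $1/4$, and then shows that any single light packet fails to succeed with probability at least $\prod_t(1-p(t))\ge e^{-2\LC(u)}\ge 1/2$, contradicting the \whp-success guarantee. In that argument it is harmless that some packets end up with tiny contributions (they are simply among the light ones); what is exploited is that an algorithm cannot afford to leave a constant fraction of packets with constant failure probability. If you want to rescue your approach, you would have to replace the per-packet ``$T_i$ is large \whp'' step by this kind of counting/contradiction step.
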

\begin{proof}
Assume the existence of an algorithm $\mathcal{A}$ that guarantees with probability at least $1-1/n$ that all $n$ packets succeed, but has the property that the sum of contention over all slots of the execution is less than $n/16$.

For any packet $u$, define the {\it lifetime contribution} of $u$ to be the sum of $u$'s sending probabilities over all slots of the execution of $\mathcal{A}$; denote this by {\boldmath{$\LC(u)$}}. Call packet $u$ ``heavy'' if $\LC(u) \geq 1/4$; otherwise, $u$ is ``light''. 

There must be less than $n/4$ heavy packets. Otherwise, the sum of the contention for all heavy packets is at least $(n/4)(1/4) \geq n/16$, which contradicts the existence of $\mathcal{A}$. Therefore, at least $(3/4)n$ packets are light.

Consider any light packet $u$. Let $p(t)$ denote the probability of that $u$ sends in slot $t$ specified arbitrarily by $\mathcal{A}$. The probability that $u$ succeeds in slot $t$ is at most $p(t)$; therefore, the probability that $u$ fails in slot $t$ is at least $1-p(t)$. Over the execution of $\mathcal{A}$, the probability that $u$ does not succeed is at least:
\begin{align*}
\prod_t (1-p(t)) &\geq e^{-2\sum_t p(t)}\mbox{~~~by Fact \ref{fact:taylor}(c)}\\
& \geq e^{-2 \cdot \LC(u)}.
\end{align*}
\noindent Since $\LC(u) < 1/4$, then the probability of failure is at least:
\begin{align*}
e^{-2 \LC(u) } &\geq 1 - 2\,\LC(u) \mbox{~~~by Fact \ref{fact:taylor}(a)} \\
& \geq 1 -  2(1/4)\\
& \geq 1/2.
\end{align*}
Thus, there is a packet that with probability at least $1/2$ does not succeed over the execution of $\mathcal{A}$, which is a contradiction. \qed
\end{proof}

\begin{lemma}\label{lem:last-packet-contention}
Consider any algorithm that guarantees \whp that all packets succeed. Consider the portion of the execution during which only one active packet remains. W.h.p., the sum of the contention over this portion of the execution is $O(\log n)$.
\end{lemma}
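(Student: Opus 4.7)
The plan is to show that during the single-packet phase, the sending probabilities of the one remaining packet form an essentially deterministic sequence, and then apply a short stopping-time argument analogous to the one used in Lemma~\ref{lem:sum-contention}, but in the reverse direction (upper bound on total contention instead of lower bound).

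The key observation is that once only one active packet remains, the contention in any slot $t$ in this phase is simply that packet's sending probability $p(t)$ (all other packets have terminated, so they contribute $0$). Moreover, conditioned on the packet not yet having sent, every slot in the phase is empty: the remaining active packet does not send (by assumption) and no other packet is active. Hence the channel feedback observed during the phase, conditional on non-success, is the all-empty sequence, which is deterministic. Because the algorithm is a deterministic function of its observed history, the sequence $(p(t))_{t \geq 1}$ is then also a deterministic function of $t$ under this conditioning.

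Next, fix a large constant $c \geq 1$ and define $K$ to be the smallest index with $\sum_{t=1}^{K} p(t) \geq c \ln n$. If the packet has failed to send in every one of the first $K$ slots, then by independence of its per-slot coin flips and Fact~\ref{fact:taylor}(a),
\[
\prod_{t=1}^{K}(1 - p(t)) \;\leq\; e^{-\sum_{t=1}^{K} p(t)} \;\leq\; e^{-c\ln n} \;=\; n^{-c}.
\]
So with probability at least $1 - n^{-c}$ the packet succeeds (and terminates) on or before slot $K$, ending the single-packet phase. On this event the total contention accumulated during the phase is at most $\sum_{t=1}^{K} p(t)$, which by minimality of $K$ together with $p(K) \leq 1$ is strictly less than $c\ln n + 1 = O(\log n)$. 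Since $c$ can be chosen as large as desired, this gives the claimed \whp bound.

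The only subtle step, and the main obstacle, is justifying that $p(t)$ is deterministic conditional on the packet still being active. This is what lets us treat $K$ as a deterministic index and apply the product formula $\prod(1-p(t))$ without having to handle adaptive dependence of $p(t)$ on the random history. Once that observation is in hand, the rest of the proof is a direct one-line calculation.
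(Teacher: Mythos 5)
Your proposal is correct and follows essentially the same route as the paper's proof: the key step in both is that the probability the lone packet remains unsuccessful is at most $e^{-\sum_t p(t)}$ by Fact~\ref{fact:taylor}(a), so w.h.p.\ the phase ends before the accumulated contention exceeds $c\ln n + 1$. The only difference is that you explicitly address adaptivity of $p(t)$ (which the paper glosses over); your ``deterministic sequence'' justification is slightly imprecise since the observed history also includes the random prefix before the single-packet phase and any internal randomness, but conditioning on that prefix (or a standard stopping-time/supermartingale argument) repairs this without changing the approach.
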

\begin{proof}
Let $u$ be the last packet to succeed. Let $\mathcal{I}$ be the set of slots during which $u$ is the only remaining active packet. Let $p(t)$ denote the probability of that $u$ sends in slot $t \in \mathcal{I}$ specified arbitrarily by the algorithm. Since it is the only active packet, the probability that $u$ succeeds in slot $t$ is $p(t)$; therefore, the probability that $u$ fails in slot $t$ is  $1-p(t)$. Over the execution of $\mathcal{A}$, the probability that $u$ does not succeed is at most:
\begin{align*}
\prod_{t\in\mathcal{I}} (1-p(t)) & \leq  e^{-\sum_{t\in\mathcal{I}} p(t)} \mbox{~~~by Fact \ref{fact:taylor}(a)}
\end{align*}
\noindent Note that for $\sum_{t\in \mathcal{I}} p(t) = O(\log n)$, this probability of failure is polynomially small in $n$. \qed
\end{proof}

\begin{lemma}\label{lem:sum-contention-restricted}
Any algorithm that guarantees \whp that $n$ packets succeed must \whp ~ have  $\sum_{t\in\twoActiveSlots} \con$ $= \Omega(n)$.
\end{lemma}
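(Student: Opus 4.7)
The plan is to decompose the total contention across the whole execution into three disjoint pieces based on how many packets are still active in each slot, and then argue that essentially all of the ``mass'' guaranteed by Lemma~\ref{lem:sum-contention} must be concentrated on the slots in $\twoActiveSlots$.

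Concretely, partition the slots of the execution into three sets: $\twoActiveSlots$ (at least two active packets), the set $B$ of slots with exactly one active packet, and the set $C$ of slots with zero active packets. For any slot $t \in C$, every packet has already terminated, so $p_i(t)=0$ for all $i$, and hence $\con=0$; these slots contribute nothing to the total contention. For slots in $B$, Lemma~\ref{lem:last-packet-contention} gives, \whp, the bound $\sum_{t \in B}\con = O(\log n)$; note that only the final packet (whoever it turns out to be) can be active in a slot of $B$, so the lemma's hypothesis fits exactly.

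Next, apply Lemma~\ref{lem:sum-contention} to get that any algorithm which \whp succeeds on all $n$ packets must satisfy
\[
\sum_{t \in \twoActiveSlots} \con \;+\; \sum_{t \in B} \con \;+\; \sum_{t \in C} \con \;\geq\; \frac{n}{16}.
\]
Substituting the zero contribution from $C$ and the $O(\log n)$ bound on $B$, we conclude that \whp
\[
\sum_{t \in \twoActiveSlots} \con \;\geq\; \frac{n}{16} - O(\log n) \;=\; \Omega(n),
\]
as claimed. A union bound over the two high-probability events (success of all packets, and the bound from Lemma~\ref{lem:last-packet-contention}) keeps the overall error polynomially small in $n$.

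The only real subtlety---and the step I would be most careful about---is making sure that Lemma~\ref{lem:sum-contention} can be invoked as a high-probability statement about the random realization of contention and not merely a statement in expectation, and that Lemma~\ref{lem:last-packet-contention} is applied to the correct set $B$ (the identity of the ``last'' packet is itself random, but the lemma's bound holds for whichever packet it happens to be, since its proof just uses the failure probability of a single active packet with total sending mass more than $\Theta(\log n)$). Once these two points are pinned down, the partition-and-subtract argument above closes the proof. \qed
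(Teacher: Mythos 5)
Your proposal is correct and follows essentially the same route as the paper: the paper also takes the total contention bound of Lemma~\ref{lem:sum-contention}, subtracts the $O(\log n)$ contention from slots with a single active packet given by Lemma~\ref{lem:last-packet-contention} (slots with no active packets contributing nothing), and concludes $n/16 - O(\log n) = \Omega(n)$. Your version simply makes the three-way partition and the union bound explicit, which the paper leaves implicit.
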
 
\begin{proof}
By definition, $\sum_{t\in\twoActiveSlots} \con$ is at least the total contention in Lemma \ref{lem:sum-contention} minus the sum of the contention in Lemma \ref{lem:last-packet-contention}, which is at least: $(n/16) - O(\log n) = \Omega(n)$. \qed
\end{proof}

\begin{lemma}\label{lem:lower-bound}
Consider any algorithm $\mathcal{A}$ whose contention in any slot is at most $2$ and guarantees \whp a makespan of $\tilde{O}(n\sqrt{\CollCost})$.  W.h.p. the expected collision cost for $\mathcal{A}$ is $\tilde{\Omega}(\deltaMin n\sqrt{\CollCost})$.
\end{lemma}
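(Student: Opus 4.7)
The plan is to bound the expected collision cost from below by combining three ingredients already in hand: (i) the per-slot collision probability expressed via contention (Lemma~\ref{lem:lower-prob-coll}), (ii) the algebraic comparison $\con^2 - \sum_i p_i(t)^2 \ge \Delta(t)\,\con^2/2$ (Lemma~\ref{lem:con-minus-squares}), and (iii) the global contention lower bound $\sum_{t\in\twoActiveSlots}\con = \Omega(n)$ from Lemma~\ref{lem:sum-contention-restricted}. The bridge that ties (iii) to the quadratic quantity appearing in (i)--(ii) is Jensen's inequality applied to the convex map $x\mapsto x^2$.

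Concretely, let $X$ be the random variable $|\twoActiveSlots|$ in the execution of $\mathcal{A}$. By the makespan hypothesis, $X=\tilde{O}(n\sqrt{\CollCost})$ w.h.p. For each slot $t\in\twoActiveSlots$, let $Y_t$ indicate a collision; since $\con\le 2$ by hypothesis, Lemma~\ref{lem:lower-prob-coll} applies, and chaining Lemma~\ref{lem:con-minus-squares} with the definition of $\deltaMin$ gives
\[
\Pr(Y_t=1)\ \ge\ \frac{\con^2-\sum_i p_i(t)^2}{110}\ \ge\ \frac{\Delta(t)\,\con^2}{220}\ \ge\ \frac{\deltaMin\,\con^2}{220}.
\]
Multiplying by $\CollCost$ and summing over $t\le X$ shows that the expected collision cost is at least $(\deltaMin\CollCost/220)\sum_{t=1}^X \con^2$.

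To convert this second-moment sum into something involving the first moment, I would apply Jensen's inequality to obtain $\sum_{t=1}^X\con^2\ \ge\ (\sum_{t=1}^X\con)^2/X$. Lemma~\ref{lem:sum-contention-restricted} gives $\sum_{t=1}^X\con = \Omega(n)$ w.h.p., and combined with $X=\tilde{O}(n\sqrt{\CollCost})$ w.h.p. this yields
\[
\frac{\deltaMin\CollCost}{220}\cdot\frac{\Omega(n^2)}{\tilde{O}(n\sqrt{\CollCost})}\ =\ \tilde{\Omega}\!\left(\deltaMin\, n\sqrt{\CollCost}\right),
\]
which is the claimed bound.

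The main subtlety I anticipate is ensuring the high-probability claims compose cleanly: the upper bound on $X$ and the lower bound on $\sum_{t=1}^X \con$ are two separate w.h.p. events, and the expected-cost bound needs both to hold simultaneously. A union bound handles this, but some care is needed to interpret the ``expected collision cost'' as an expectation conditional on (or restricted to) the high-probability event where both the contention and makespan estimates hold, consistent with the paper's convention for w.h.p.\ expectations described in Section~\ref{sec:model}. Aside from this bookkeeping, each step is a direct invocation of a previously established lemma, so no new probabilistic machinery should be required.
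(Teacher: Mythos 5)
Your proposal is correct and follows essentially the same route as the paper's proof: bounding $\Pr(Y_t=1)\ge \deltaMin\,\con^2/220$ via Lemmas~\ref{lem:lower-prob-coll} and~\ref{lem:con-minus-squares}, applying Jensen's inequality to relate $\sum_t \con^2$ to $(\sum_t\con)^2/X$, and then invoking Lemma~\ref{lem:sum-contention-restricted} together with the w.h.p.\ bound $X=\tilde{O}(n\sqrt{\CollCost})$. Your remark about composing the two high-probability events is a reasonable bookkeeping note consistent with the paper's stated convention for w.h.p.\ expectations.
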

\begin{proof}
Let $X$ be a random variable that is $|\twoActiveSlots|$ under the execution of $\mathcal{A}$. Note that, since \whp the makespan is $O(n\sqrt{\CollCost})$, it must be the case that \whp $X = O(n\sqrt{\CollCost})$. 

By Lemma \ref{lem:sum-contention-restricted}, we have:
\begin{align}
    \sum_{t=1}^{X}  \con \geq c n. \label{eq:contention-sum}
\end{align}
\noindent for some constant $c>0$. Let $Y_{t}=1$ if slot $t \in \twoActiveSlots$  has a collision; otherwise, $Y_{t}=0$. By Lemmas \ref{lem:lower-prob-coll} and \ref{lem:con-minus-squares}:
\begin{align*}
 Pr(Y_t=1) & \geq   \Delta(t) \cdot \con^2/220 \nonumber\\
 & \geq \deltaMin \cdot \con^2/220 
\end{align*}
where the second line  follows from the definition of $\deltaMin$. The expected collision cost is:
\begin{align}
     \sum_{t=1}^X P(Y_t=1)\cdot \CollCost & \geq \sum_{t=1}^X \frac{\deltaMin \con^2 \cdot \CollCost} {220} \nonumber\\
     & = \frac{\deltaMin \cdot \CollCost}{220} \sum_{t=1}^X \con^2. \label{eq:collision-sum}
\end{align}
By Jensen's inequality for convex functions, we have:
\begin{align}
    \frac{\sum_{t=1}^X \con^2}{X} \geq \left(\frac{\sum_{t=1}^X \con}{X}\right)^2 \label{eq:jensen}
\end{align}  
Finally, the expected cost is at least:
\begin{align*}
    \frac{\deltaMin \cdot \CollCost}{220} \sum_{t=1}^X \con^2 & \geq   \frac{\deltaMin \cdot \CollCost}{220} \frac{\left(\sum_{t=1}^X \con\right)^2}{X} \mbox{~~~by Equations \ref{eq:collision-sum} and \ref{eq:jensen}}\\
    & \geq \frac{\deltaMin \cdot \CollCost}{220} \left(\frac{c^2 n^2}{X}\right) \mbox{~~~by Equation \ref{eq:contention-sum}}\\
    & = \tilde{\Omega}\left( \deltaMin\, n\sqrt{\CollCost}  \right)
\end{align*}
\noindent where the second line follows by Equation \ref{eq:contention-sum},  which was defined with regard to $\twoActiveSlots$, and so can be compared to Equation  \ref{eq:collision-sum}. The last line follows since \whp $X= \tilde{O}(n\sqrt{\CollCost})$.\qed
\end{proof}

\noindent We now state our lower bound, which follows directly from Lemmas \ref{lem:omega-collision} and \ref{lem:lower-bound}.

\begin{theorem}\label{thm:general-lower-bound-main}
Consider any algorithm that \whp guarantees $\tilde{O}(n\sqrt{\CollCost})$ makespan. Then, \whp, the expected collision cost for $\mathcal{A}$ is $\tilde{\Omega}(\min\{\CollCost, \deltaMin n\sqrt{\CollCost}\})$.   
\end{theorem}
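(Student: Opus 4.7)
The plan is to prove the theorem by a straightforward case analysis based on whether the algorithm ever pushes contention above $2$. All the heavy lifting is already done in Lemmas~\ref{lem:omega-collision} and~\ref{lem:lower-bound}, so the proposal is essentially to combine them and observe that the weaker of the two guarantees is always a valid lower bound.

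First, I would fix any algorithm $\mathcal{A}$ satisfying the hypothesis: \whp its makespan is $\tilde{O}(n\sqrt{\CollCost})$. Consider the dichotomy on the contention profile of $\mathcal{A}$'s execution. In Case~1, suppose there exists at least one slot $t$ during the execution in which $\con > 2$. Then Lemma~\ref{lem:omega-collision} directly yields that the expected collision cost is $\Omega(\CollCost)$, which is certainly $\tilde{\Omega}(\min\{\CollCost,\, \deltaMin\, n\sqrt{\CollCost}\})$. In Case~2, suppose $\con \leq 2$ for every slot $t$ in the execution. Then $\mathcal{A}$ meets the hypothesis of Lemma~\ref{lem:lower-bound}, so \whp the expected collision cost is $\tilde{\Omega}(\deltaMin\, n\sqrt{\CollCost})$, which again dominates $\tilde{\Omega}(\min\{\CollCost,\, \deltaMin\, n\sqrt{\CollCost}\})$.

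The only mildly delicate point is that the dichotomy is a statement about the random execution, not a deterministic choice made in advance: for any fixed execution realization, either some slot has $\con > 2$ or no slot does, and the corresponding lemma applies to that realization. Since both lower bounds hold \whp (inheriting this from their respective lemmas, which themselves rely on the \whp bound $X = \tilde{O}(n\sqrt{\CollCost})$ from the makespan hypothesis), a union bound over the two cases preserves the \whp guarantee for the combined statement. The $\min\{\cdot,\cdot\}$ in the conclusion is exactly what makes this case analysis work: whichever lemma applies, its lower bound is at least the minimum of the two.

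I do not anticipate a serious obstacle here; the real technical work resides in Lemmas~\ref{lem:prob-success}--\ref{lem:sum-contention-restricted} and in Lemma~\ref{lem:lower-bound}. The only thing to double-check is that the \whp quantifiers compose cleanly, i.e., that we are not implicitly assuming the algorithm commits in advance to being ``high contention'' or ``low contention''. Because Lemma~\ref{lem:omega-collision} conditions on the existence of a single high-contention slot and derives an unconditional expected-cost bound from it, and because Lemma~\ref{lem:lower-bound} conditions on the complementary event, the two cases partition the sample space and the overall bound holds \whp, completing the proof.
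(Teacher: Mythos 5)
Your proposal is correct and matches the paper's own argument, which likewise obtains the theorem by combining Lemma~\ref{lem:omega-collision} (some slot with $\con>2$ gives $\Omega(\CollCost)$) with Lemma~\ref{lem:lower-bound} (contention at most $2$ everywhere gives $\tilde{\Omega}(\deltaMin n\sqrt{\CollCost})$), with the $\min$ covering both cases. Your explicit discussion of the dichotomy over executions is a reasonable elaboration of what the paper leaves implicit.
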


\noindent Addressing Theorem \ref{thm:main-lower-bound}, note that for fair algorithms, $\deltaMin=1$. Recall that our analysis ignores any slot $t$ where all packets have sending probability $0$ (i.e.,$\con=0$), giving such slots to the algorithm for free. Given that $\deltaMin=1$,  $\tilde{\Omega}(\min\{\CollCost, \deltaMin n\sqrt{\CollCost}\}) = \tilde{\Omega}(n\sqrt{\CollCost})$ when $\CollCost = c' n^2$ for a sufficiently large positive constant $c'$.

\end{document}